\documentclass[11pt]{article}
\usepackage[T1]{fontenc}
\usepackage[utf8]{inputenc}
\usepackage{lmodern}
\usepackage[margin=1in]{geometry}
\usepackage{microtype,amsmath,amssymb,amsthm,booktabs,array,graphicx,xcolor}
\usepackage[numbers,sort&compress]{natbib}
\usepackage{xurl}
\usepackage[colorlinks=true,allcolors=blue!50!black]{hyperref}
\hypersetup{pdftitle={Revision-Bound Evidence Cuts: Authenticated Queries and Contextual Witness Capacity},pdfauthor={Faruk Alpay and Taylan Alpay}}
\usepackage{listings,caption}
\newtheorem{theorem}{Theorem}[section]
\newtheorem{proposition}[theorem]{Proposition}
\newtheorem{lemma}[theorem]{Lemma}
\newtheorem{definition}[theorem]{Definition}
\newtheorem{corollary}[theorem]{Corollary}
\newcommand{\N}{\mathbb{N}}

\newcommand{\W}{\mathcal{W}}
\newcommand{\Enc}{\operatorname{Enc}}
\newcommand{\cl}{\operatorname{cl}}
\newcommand{\Erase}{\operatorname{Erase}}
\newcommand{\supp}{\operatorname{supp}}
\title{Revision-Bound Evidence Cuts:\\Authenticated Queries and\\Contextual Witness Capacity}
\author{\begin{minipage}[t]{0.46\textwidth}\centering
Faruk Alpay\thanks{Correspondence: \href{mailto:alpay@lightcap.ai}{alpay@lightcap.ai}.}\\[3pt]
\small Department of Computer Engineering\\
Bah\c{c}e\c{s}ehir University\\Istanbul, T\"urkiye\\
\footnotesize\href{mailto:faruk.alpay@bahcesehir.edu.tr}{\texttt{faruk.alpay@bahcesehir.edu.tr}}
\end{minipage}\hspace{0.03\textwidth}
\begin{minipage}[t]{0.46\textwidth}\centering
Taylan Alpay\\[3pt]
\small Department of Aerospace\\
University of Turkish Aeronautical Association\\Ankara, T\"urkiye\\
\footnotesize\href{mailto:s220112602@stu.thk.edu.tr}{\texttt{s220112602@stu.thk.edu.tr}}
\end{minipage}}
\date{27 October 2026}
\begin{document}
\maketitle
\begin{abstract}
Authentic sources can be recombined into claims they do not justify, and an old
inference can outlive its qualifications. We construct revision-bound evidence
cuts that retain supporting and opposing obligations under an authenticated
update frontier. Disjoint cuts transport across unrelated changes; alternative
cuts and an exact transversal-based reuse radius expose the cost of revision.
Their source footprints enter contextual witness capacity with conjunctive
obligations and consumable authorizations.
A relational query contract binds counting units, parameters, complete selected
bags and aggregates across SQLite scans, indexes and materialized views. The
same full-source verifier checks all 16,824 Chogha Mish records; three query
strategies pass and eight contract/view mutations fail. The source documents
repeated jaw/tooth records and lacks demonstrated coeval deposits, so its
postcranial contrast remains a qualified hypothesis. An independent English
comparison gives a negative partial transfer; exact floor/refuse replication
remains unidentified. A separate 340-row NISP comparison reverses direction
under two admitted weighting rules.
A frozen nonlearned English compiler maps 243 real documentary records to
qualified span frames. On 55 source-group-held-out records, uncertainty-cue
F1 is 0.75; sparse positive support and five conservative revision failures are
reported. External audits on 2,965 human temporal annotations expose substantial
abstention and limited recall. Fresh text-to-frame receipts remain mandatory
when unchanged semantic cuts are reused.
Source-exact reconstruction separates logical-source and transport commitments,
restores original OBJ/JPEG bytes, and enforces both ZIP and opened-file budgets.
The artifact includes code, core sources, provenance, fixed rules, errors and
adversarial experiments; external corpus acquisition is separately documented.
\end{abstract}
\section{Problem: authentic evidence in an unauthorized scene}
An authenticated archaeological measurement can support a geometric approximation.
It cannot by itself establish the truth of a caption, the intentions of a past
society, or the source institution's approval of a contemporary composition.
Likewise, displaying several recognizable authorities together does not establish
that their authority is interchangeable or jointly available without restriction.
The engineering problem is to preserve the missing relations explicitly.

We distinguish three predicates: $\mathrm{Attributed}(e,s)$ records the claimed
origin of an evidence item. The predicate $\mathrm{Supports}(e,q,C)$ records
a qualified support relation under a declared context.
Finally, $\mathrm{Endorses}(s,S,C,g)$ records a signed,
policy-limited authorization for a particular scene. Neither of the first two
predicates entails the third. The signature authenticates an assertion and its
issuer, conditional on trusted keys; it does not make the assertion historically true.

The motivating media hypothesis is \emph{contextual authority transfer}: viewers
may interpret the presence of an authentic source or familiar symbol as approval
of a newly assembled narrative. Bourdieu supplies a conceptual account of
recognition and symbolic power~\cite{bourdieu1989}; Hall's encoding/decoding model
allows negotiated and divergent reception~\cite{hall1973}. Framing~\cite{entman1993},
source credibility~\cite{hovland1951}, source monitoring~\cite{johnson1993}, and
repetition effects~\cite{fazio2019} motivate variables for future experiments.
These sources do not establish that symbolic power is literally conserved,
that media producers share a coordinated intention, or that repeated reuse must
end in collapse. We replace those unsupported conclusions with a precise,
falsifiable computational question: \emph{which authenticated obligations disappear
when a presentation is reduced to a list of source names?}

The contribution begins with the missing distinction between support,
contradiction, uncertainty and authorization. Revision-bound polarity cuts retain
both decision polarities and mandatory qualifications; a verified change frontier
determines whether an old cut can be transported or replaced by an avoiding cut.
A derived reuse threshold then connects the certificate family to a transversal
problem. Its source footprints enter contextual witness capacity, which counts
complete presentations under conjunctive obligations and consumable grants.
The SCENE artifact implements the component mechanisms over original source bytes,
recorded transformations, typed interpretations and explicit local issuers.
The method is investigator-designed and inspectable, with no learned component.

The relational contribution is a reconstruction-aware authenticated query
contract. It distinguishes complete selected bags from grouped answers, binds
counting units and original-column mappings, and verifies equivalent scan,
index and materialized-view executions against the same full-source scope
verifier. Thus a genuine aggregate cannot answer a different parameter, unit
or source-version contract merely because the displayed ratio agrees. The
actual SQLite workload retains all 59 source fields and 16,824 records; source
verification, query execution and materialization costs are measured separately.

A language-layer experiment compiles actual English source descriptions and
catalogue-date phrases into explicitly qualified span frames. Frozen rules,
source-group holdout, exact component scores and controlled text revisions
measure extraction separately from receipt validity. External human-annotated
historical corpora test transfer without changing those rules. The resulting
receipts bind text, extraction policy and semantic slots; successful verification
does not make a misparsed qualification linguistically correct.

\paragraph{Relation to earlier work.}
Our previous studies distinguish provenance from authorization in autobiographical
models~\cite{alpay2026provenance} and bind approval to published content and recovery
state~\cite{alpay2026approval}. The present construction adds authenticated
revision transport for polarity-complete inference certificates, a cut-frontier
reuse threshold, and a link from surviving certificate alternatives to integral
endorsement capacity. We derive these objects from established logic, provenance,
authentication and combinatorial optimization; their exact scope and nearest
technical precedents are given with the construction below.

\section{Mathematical lineage and the point of failure}
\label{sec:history}
The relevant history comprises several questions that should not be collapsed
into one mathematical tool. Hall's 1935 representative theorem~\cite{hall1935}
addresses choosing one permissible representative per demand. Gale, Kuhn and
Tucker's 1951 treatment of linear-programming duality~\cite{gale1951} supplies the
price certificates used below. Ford and Fulkerson's 1956 maximum-flow/minimum-cut
result~\cite{ford1956} turns suitable capacity constraints into integral allocations.
Karp's 1972 reductions~\cite{karp1972} locate the obstruction when joint choices
become combinatorial packing.

Ralph Merkle's authenticated-tree construction, represented here by his 1980
paper~\cite{merkle1980}, addresses efficient integrity authentication. It does not
solve witness admissibility or scarce-resource allocation. Girard's 1987 linear
logic~\cite{girard1987} distinguishes reusable truth from consumable premises;
Bowers et al. implemented consumable credentials in authorization systems in
2007~\cite{bowers2007}. Green, Karvounarakis and Tannen's 2007 provenance
semirings~\cite{green2007} preserve derivation structure and multiplicity. The
failure below is therefore a \emph{lossy projection of provenance}, not a limitation
of every provenance algebra.

A source set records which names occur but loses occurrence counts and the
distinction between prerequisites and alternatives. A provenance polynomial
retains derivations and multiplicity; an admissibility policy must still specify
which witnesses are permitted and which signed rights remain available.
Hall conditions and network cuts solve allocation among singleton substitutes.
Joint prerequisites instead require a conjunctive allocation model, whose
fractional price bounds do not guarantee integral feasibility.

Authentication and resource state address different remaining questions.
A committed root binds records against alteration; truth and admissibility
depend on explicitly admitted evidence and rules. Serialized debits account
for spent capacity, while freshness requires an accepted current state.
Neither layer by itself determines audience interpretation. These distinct
information requirements motivate the following occurrence-preserving construction.

\subsection{The idempotence obstruction}
Let $A$ be the support set of one displayed object. The support set of two copies
is still $A\cup A=A$. Suppose a positive additive exposure tariff factors through
source sets as $f(A\cup B)=f(A)+f(B)$, taking values in a cancellative commutative
monoid. Then
\begin{equation}
 f(A)=f(A\cup A)=f(A)+f(A)\quad\Longrightarrow\quad f(A)=0.
 \label{eq:idempotence}
\end{equation}
Thus a nonzero additive charge cannot be recovered from an idempotent union
summary. This elementary argument identifies the required boundary: deduplicate
a source inside \emph{one object's evidence closure}, but retain multiplicity
across distinct rendered occurrences. A second loss is connective structure:
the union $\{a,b\}$ cannot distinguish ``both $a$ and $b$'' from ``either $a$ or $b$.''
An untyped source list loses both pieces of information before any cryptography
is applied.

\section{Evidence closures and context-admissible witnesses}
\label{sec:model}
Let $\mathcal{S}$ and $\mathcal{O}$ be finite nonempty sets of source principals
and object types, and let $\mathcal{E}$ be a finite directed acyclic evidence
graph. Each object has a nonempty set of root evidence records. Each record names
a source, a statement, a claim class, qualifications and its complete list of
mandatory supporting records. For root records $R_o$, let $\cl(R_o)$ be the
reflexive transitive support closure. In a fixed reconstruction,
\begin{equation}
 w_o=\{s(e):e\in\cl(R_o)\},\qquad
 c_s(S)=\sum_{o\in\mathcal{O}}m_o\,\mathbf{1}[s\in w_o],
 \label{eq:closure-cost}
\end{equation}
where $m_o\in\mathbb{Z}_{\geq1}$ counts rendered occurrences of type $o$. A source contributes
once to an occurrence even if several support paths reach it; a second displayed
occurrence contributes another unit. A unit is a protocol-defined endorsement
authorization, not a viewer, impression, copyright permission or measure of prestige.

\begin{definition}[Admissible witness family]
A context $C$ specifies a purpose, audience and explicit evidence-admissibility
policy. For each type $o$, the finite nonempty family
$\W_C(o)\subseteq2^{\mathcal{S}}\setminus\{\varnothing\}$ contains acceptable witness
bundles. Every source in a selected bundle is required (AND); selecting a member
of the family is a choice among acceptable realizations (OR).
\end{definition}

For a fixed manifest, $\W_C(o)=\{w_o\}$. Alternatives require separately declared
and reviewed realizations. An optimizer does not infer that two historical
interpretations are equivalent. Budget $Q_s\in\N$ is the remaining capacity of a
selected grant for source $s$; additional grant issuance is outside that bound.
Every possible rendition must retain its chosen witness and obtain fresh
authorization for its exact resulting root. Here a rendition means one new
authorization event; rereading an already accepted artifact is not another
rendition or debit. Actual display frequency is outside this accounting model.

\paragraph{Compatibility assumption.}
Witness choices for different object occurrences are independently compatible
within a rendition. This excludes, unless explicitly encoded, constraints such
as ``all objects must follow the same reconstruction school.'' Such applications
need complete-scene configuration variables or additional coupling constraints.
This assumption is necessary for the counting equivalence in the next section.

\section{Deriving contextual witness capacity}
\label{sec:capacity}
\begin{definition}[Contextual witness capacity]
$K_C(Q,m)$ is the largest integer number of complete scene renditions whose
occurrences can be assigned admissible witnesses without exceeding $Q$.
\end{definition}
Put $a_{s,o,w}=\mathbf{1}[s\in w]$. Counting how many occurrences use each witness
gives the configuration integer program
\begin{align}
 K_C(Q,m)=\max\;&k \label{eq:ilp}\\[-3pt]
 \text{subject to }&\sum_{w\in\W_C(o)}x_{o,w}=k m_o&&\text{for all }o,\nonumber\\
 &\sum_{o,w}a_{s,o,w}x_{o,w}\leq Q_s&&\text{for all }s,\nonumber\\
 &x_{o,w},k\in\N.\nonumber
\end{align}
We include zero in $\N$. An actual collection of renditions yields these counts.
Conversely, integral counts provide $k m_o$ occurrences of each type; partitioning
each type into $k$ groups of size $m_o$ gives complete renditions under the
compatibility assumption. Hence the formulation is exact, rather than merely
a necessary aggregate budget condition.

\subsection{The provenance-polynomial form}
For formal variables $z_s$, define
\begin{equation}
 P_{o,C}(z)=\sum_{w\in\W_C(o)}\prod_{s\in w}z_s,
 \qquad P_C(z)=\prod_o P_{o,C}(z)^{m_o}.
 \label{eq:polynomial}
\end{equation}
A bundle is a squarefree monomial: mandatory sources multiply. Alternative
bundles add. Raising to the number of occurrences retains repeated exposure in
exponents. Polynomial provenance is established machinery~\cite{green2007}; its
application here yields the equivalent capacity expression
\begin{equation}
 K_C(Q,m)=\max\{k\in\N:\exists d\in\supp(P_C^k),\ d_s\leq Q_s\ \forall s\}.
 \label{eq:support-capacity}
\end{equation}
Here $\supp(P)=\{d\in\N^{\mathcal S}:[z^d]P>0\}$ is the set of exponent vectors
of nonzero monomials, and $P_C^0=1$, so $k=0$ is always feasible.
The coefficients count derivations producing an exponent vector; feasibility
depends on exponent support. For one fixed witness per type, the entire problem
reduces to
\begin{equation}
 K_{\rm fixed}(Q)=\min_{s:c_s(S)>0}\left\lfloor\frac{Q_s}{c_s(S)}\right\rfloor.
 \label{eq:fixed}
\end{equation}
This is a capacity of explicitly limited rights, not a predicted cultural life span.

\subsection{Where matching remains exact}
Assume every witness is a singleton. For nonempty $T\subseteq\mathcal{O}$ let
$N_C(T)=\bigcup_{o\in T}\{s:\{s\}\in\W_C(o)\}$ and $m(T)=\sum_{o\in T}m_o$.
The capacitated form of Hall's condition gives
\begin{equation}
 K_C(Q,m)=\min_{\varnothing\ne T\subseteq\mathcal{O}}
 \left\lfloor\frac{\sum_{s\in N_C(T)}Q_s}{m(T)}\right\rfloor.
 \label{eq:hall}
\end{equation}
Necessity follows because all $km(T)$ occurrences must use $N_C(T)$. For
sufficiency, a network sends capacity $km_o$ from a super-source to each type,
then to its admissible source nodes, then through capacities $Q_s$ to the sink.
The displayed conditions ensure every cut can cover total demand; integral
maximum flow supplies witness counts~\cite{hall1935,ford1956}. This is a
specialization of known matching theory, not a new Hall theorem.

\subsection{Deriving the source-price equation}
Relax $x$ and $k$ to nonnegative reals and denote the optimum by $F_C(Q,m)$.
For a nonnegative price vector $y$, define
\begin{equation}
 p_{o,C}(y)=\min_{w\in\W_C(o)}\sum_{s\in w}y_s,
 \qquad D_C(y)=\sum_o m_o p_{o,C}(y).
 \label{eq:price}
\end{equation}
Every feasible allocation obeys
\begin{equation}
 Q\cdot y\ \geq\ \sum_{o,w}x_{o,w}\sum_{s\in w}y_s
 \ \geq\ k\sum_o m_o p_{o,C}(y)=kD_C(y).
 \label{eq:bound}
\end{equation}
The inequality produces a checkable bottleneck certificate. It becomes an exact
fractional equation by duality:
\begin{theorem}[Contextual source-price representation]
For the finite nonempty witness families above,
\begin{equation}
 F_C(Q,m)=\inf_{\substack{y\geq0\\D_C(y)>0}}
 \frac{Q\cdot y}{D_C(y)},\qquad
 K_C(Q,m)\leq\lfloor F_C(Q,m)\rfloor.
 \label{eq:dual}
\end{equation}
\end{theorem}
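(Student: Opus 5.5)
Write the relaxation of \eqref{eq:ilp} as a linear program and dualize it.
\[
F_C(Q,m)=\max\Bigl\{k:\ \textstyle\sum_{w}x_{o,w}=km_o\ \forall o,\ \ \sum_{o,w}a_{s,o,w}x_{o,w}\le Q_s\ \forall s,\ x,k\ge0\Bigr\}.
\]
The point $x=0$, $k=0$ is feasible, so the primal is feasible. Two cases arise.

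\textbf{Bounded case.} Suppose some source occurs in every witness of some type. Then every type must use a source with finite budget, so $k$ is bounded and $F_C$ is finite. Attach free multipliers $\lambda_o$ to the equality rows and multipliers $y_s\ge0$ to the budget rows. The dual is
\[
\min\ Q\cdot y \quad\text{s.t.}\quad \sum_{s\in w}y_s\ge\lambda_o\ \ \forall o,w,\qquad \sum_o m_o\lambda_o\ge 1,\qquad y\ge0.
\]
The first constraint is the $x_{o,w}$ column and the second is the $k$ column. Strong duality (Gale--Kuhn--Tucker~\cite{gale1951}) gives equality of optimal values. For fixed $y$, the best admissible choice is $\lambda_o=p_{o,C}(y)$, since the $\lambda_o$ are free and only need to lie below every witness cost. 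The dual therefore becomes $\min\{Q\cdot y: y\ge0,\ D_C(y)\ge1\}$.

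Homogeneity turns this into the ratio form. Both $Q\cdot y$ and $D_C(y)$ are positively homogeneous of degree one. Any $y$ with $D_C(y)>0$ can be rescaled to $D_C(y)=1$ without changing the ratio, so the minimum equals $\inf_{y\ge0,\,D_C(y)>0} Q\cdot y/D_C(y)$. The inequality ``$\ge$'' is already \eqref{eq:bound} divided by $D_C(y)$. Duality supplies the matching ``$\le$''.

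\textbf{Main obstacle: degenerate cases.} Two situations need care.
\begin{itemize}
\item The infimum may be over an empty set, or the primal may be unbounded. Unboundedness happens exactly when some allocation uses only sources that may be repeated without limit, which is impossible with finite $Q$ unless supports are empty. Witnesses are nonempty, so every type has a positive-cost witness under, for example, $y=\mathbf 1$. Thus $D_C(\mathbf 1)\ge\sum_o m_o>0$ and the feasible set of the infimum is nonempty.
\item Boundedness of the primal follows from the same choice. Taking $y=\mathbf 1$ in \eqref{eq:bound} gives $k\le \sum_s Q_s/\sum_o m_o$, so the primal is bounded and attains its optimum.
\end{itemize}
Hence strong duality applies without further case distinction. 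If the authors intend to allow $\infty$ conventions, the same argument shows they never arise.

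\textbf{Integer bound.} Every integral feasible solution of \eqref{eq:ilp} is feasible for the relaxation, so $K_C(Q,m)\le F_C(Q,m)$. Since $K_C$ is an integer, $K_C\le\lfloor F_C\rfloor$.
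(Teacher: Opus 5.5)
Your proof is correct and follows the paper's route. You dualize the relaxed configuration LP, set the free object potentials to $p_{o,C}(y)$ so that dual feasibility becomes $D_C(y)\geq1$, invoke strong duality via boundedness of the zero-feasible primal, normalize by homogeneity, and obtain $K_C\leq\lfloor F_C\rfloor$ from integrality. The one blemish is the opening ``bounded case'' hypothesis that some source occurs in every witness of some type, which is unnecessary and should be dropped: your later bound $k\leq\sum_sQ_s/\sum_om_o$, from $y=\mathbf 1$ in \eqref{eq:bound}, is exactly the paper's ``every occurrence uses at least one source'' step and already covers every instance.
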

\begin{proof}
Write the relaxed primal as $\max k$ subject to $Ax\leq Q$, $Ex=mk$,
$x,k\geq0$, where $E$ identifies each variable's object type. Its dual is
\begin{equation}
 \min_{y\geq0,z} Q\cdot y:\quad
 z_o\leq\sum_{s\in w}y_s\ (o,w),\quad \sum_o m_o z_o\geq1.
 \label{eq:dual-lp}
\end{equation}
For fixed $y$, object potentials may be chosen as their largest admissible
values $z_o=p_{o,C}(y)\geq0$. Feasibility becomes $D_C(y)\geq1$.
Every occurrence uses at least one source, so the zero-feasible primal is bounded
by finite total capacity. Strong linear-programming duality~\cite{gale1951}
therefore applies. Homogeneity normalizes $D_C(y)$ to one and gives the ratio.
Equation~\eqref{eq:bound} and integrality give the last inequality.
\end{proof}
Zero source budgets are permitted, including $Q=0$, for which both capacities
are zero. The normalization never divides by a budget: a strictly positive
price on every source gives $D_C(y)>0$ because objects and witness bundles are
nonempty.
The source prices identify scarcity in the \emph{declared policy}. They do not
score historical truth. A rational $y$ provides an exact upper bound even if a
numerical optimizer found it; optimality additionally needs a matching primal
certificate or an independent argument.

\subsection{The AND bottleneck: complexity and integrality}
The singleton flow formula cannot simply be reused for joint prerequisites.
\begin{proposition}[Width-two feasibility obstruction]
Given explicitly listed witness families, deciding $K_C\geq1$ is NP-complete
even with unit source budgets, unit object multiplicities and witnesses of
size exactly two.
\end{proposition}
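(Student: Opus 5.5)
The plan is to read $K_C\geq1$ directly off the configuration program~\eqref{eq:ilp} and then reduce from three-dimensional matching (3DM), which is among Karp's NP-complete problems~\cite{karp1972}. With unit multiplicities $m_o=1$ and $k=1$, the program asks for one witness $w_o\in\W_C(o)$ per object type such that every source appears in at most $Q_s$ chosen bundles. The compatibility assumption makes this count exact, so one complete rendition exists if and only if this system has a solution. With unit budgets $Q_s=1$, the condition is that the chosen bundles are pairwise disjoint. So the decision problem is: given, for each object, an explicit list of $2$-element subsets of $\mathcal S$, can one pick one subset per object so that the picks are pairwise disjoint?

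Membership in NP is routine. The certificate is the selected witness per object; it has size polynomial in the explicit input because each family is listed. Verification checks $w_o\in\W_C(o)$ and computes the source loads $\sum_o\mathbf 1[s\in w_o]\leq Q_s$ in polynomial time. Larger $k$ is irrelevant here, since only $k\geq1$ is asked.

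For hardness, take a 3DM instance with pairwise disjoint sets $X,Y,Z$, each of size $q$, and triples $T\subseteq X\times Y\times Z$.
\begin{itemize}
\item Let $\mathcal O=X$ with $m_x=1$, and let $\mathcal S=Y\cup Z$ with $Q_s=1$.
\item Put $\W_C(x)=\{\{y,z\}:(x,y,z)\in T\}$. Each bundle has size exactly two because $Y\cap Z=\varnothing$.
\item If some $x$ lies in no triple, output a fixed infeasible instance. An example is two objects both having the single witness $\{a,b\}$, which keeps every family nonempty.
\end{itemize}
The construction is clearly polynomial.

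Correctness in both directions:
\begin{itemize}
\item A perfect matching $M\subseteq T$ contains exactly one triple for each $x$, and its $Y$- and $Z$-coordinates are distinct. Choosing $\{y,z\}$ for $x$ therefore gives disjoint bundles, and $K_C\geq1$.
\item Conversely, disjoint choices $\{y_x,z_x\}$ for all $q$ objects use $2q$ distinct sources. Since $|Y\cup Z|=2q$ and each bundle has one element from $Y$ and one from $Z$, every element of $Y$ and of $Z$ is covered exactly once. The triples $(x,y_x,z_x)\in T$ then form a perfect 3-dimensional matching.
\end{itemize}

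No step is deep. The points that need care are bookkeeping ones: that the rendition/ILP equivalence really holds at $k=1$ (this is where compatibility is used), and that the reduction satisfies every restriction at once, namely exact size-two witnesses, unit budgets, unit multiplicities and nonempty families. The counting argument in the reverse direction, $2q$ sources covered by $q$ disjoint pairs, is what forces a perfect rather than partial matching, so I would state it explicitly.
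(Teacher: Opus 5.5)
Your proposal is correct and uses the paper's reduction: one object per $x\in X$, a witness $\{y,z\}$ for each triple, unit budgets on $Y\sqcup Z$, the same fixed NO instance (two objects both requiring $\{a,b\}$) when some $x$ is uncovered, and the selected witness per object as the NP certificate. The one case the paper handles that you omit is the empty input $q=0$: your construction would then produce no objects or sources, which the model disallows, whereas the paper maps it to a fixed YES instance with one object requiring $\{a,b\}$; your explicit $2q$-counting argument for the reverse direction usefully spells out the paper's one-line claim that a nonconflicting choice ``is exactly'' a perfect matching.
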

\begin{proof}
Take a perfect three-dimensional matching instance with nonempty equal-sized,
pairwise disjoint sets $X,Y,Z$. If some $x$ belongs to no triple, map the instance
to the fixed valid NO instance consisting of two object types, both requiring
the sole witness $\{a,b\}$ with $Q_a=Q_b=1$. Otherwise, for each $x\in X$ create
one object type; a triple $(x,y,z)$ supplies its
admissible witness $\{y,z\}$. Give every source in the disjoint union $Y\sqcup Z$
capacity one. Choosing one nonconflicting witness for every $x$ is exactly a
perfect three-dimensional matching. This is a polynomial reduction from the
classical problem~\cite{karp1972}. A selected witness per object is a polynomial
certificate, establishing membership in NP. Empty three-dimensional matching
inputs can separately be mapped to one object requiring $\{a,b\}$ with unit
budgets, a fixed YES instance. Thus every output respects the model's nonempty
witness-family requirement.
\end{proof}
This classification is a consequence of known combinatorial hardness. It is
not an average-case cryptographic hardness assumption, and we do not use it to
derive encryption keys or a trapdoor.

For a concrete integrality failure, use seven unit-budget sources and one type
whose witness triples are
\begin{equation}
 012,\ 034,\ 056,\ 135,\ 146,\ 236,\ 245.
 \label{eq:fano}
\end{equation}
Each pair of triples intersects; each source occurs three times. Thus $K_C=1$.
Assigning $x_w=1/3$ to each triple is fractionally feasible. Summing source
constraints gives $3\sum_w x_w\leq7$, so $F_C=7/3$. Consequently
$K_C=1<\lfloor F_C\rfloor=2$. This explicitly constructed Fano-plane incidence
example separates integral authorization from its fractional certificate. A
verifier that accepts a fractional plan as a complete rendition would be unsound.

\subsection{A derived measure: obligation-erasure amplification}
\label{sec:erasure}
Consider the particular defective operation
\begin{equation}
 \Erase(\W_C(o))=\{\{s\}:s\in\bigcup_{w\in\W_C(o)}w\},\qquad
 \mathcal{A}_C(Q,m)=\frac{K_{\Erase(C)}(Q,m)}{K_C(Q,m)}\quad(K_C>0).
 \label{eq:amplification}
\end{equation}
It replaces every conjunction by singleton alternatives. The ratio measures
excess capacity caused by this specific loss of obligation structure. At zero
true capacity we report infeasibility explicitly rather than divide by zero.
The notation $\Erase(C)$ means the same instance with
$\W_{\Erase(C)}(o)=\Erase(\W_C(o))$ and all other inputs unchanged.
Any true feasible plan remains feasible after replacing each bundle by one of
its members, so $K_{\Erase(C)}\geq K_C$.

\begin{theorem}[Unbounded erasure amplification at width two]
For every integer $n\geq1$, there is a unit-budget instance with maximum witness
size two, $K_C=1$ and $\mathcal{A}_C=n+1$.
\end{theorem}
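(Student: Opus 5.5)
We will build an explicit star-shaped instance and compute both capacities directly from the integer program~\eqref{eq:ilp}. Take one object type $o$ with $m_o=1$, sources $h,\ell_1,\dots,\ell_n$, all with $Q_s=1$, and the single admissible bundle family $\W_C(o)=\{\{h,\ell_1\},\{h,\ell_2\},\dots,\{h,\ell_n\}\}$. Every witness has size exactly two, so the maximum witness size is two, and the family is finite and nonempty.

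First we compute $K_C=1$. Every bundle contains the hub $h$, so any $k$ renditions use $h$ at least $k$ times, and $Q_h=1$ forces $k\le1$. Choosing the single bundle $\{h,\ell_1\}$ uses each source at most once, so $k=1$ is feasible. As a cross-check, the price vector $y=\mathbf{1}_h$ in the source-price theorem gives $Q\cdot y/D_C(y)=1$, and hence $F_C\le1$.

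Next we compute $K_{\Erase(C)}=n+1$. By \eqref{eq:amplification}, $\Erase(\W_C(o))$ consists of the singletons $\{h\},\{\ell_1\},\dots,\{\ell_n\}$. Each rendition of the erased instance needs one source, and the total budget is $n+1$, so $k\le n+1$. This bound also follows from \eqref{eq:hall} with $T=\{o\}$, which gives $\lfloor(n+1)/1\rfloor$. Assigning the $n+1$ renditions to the $n+1$ distinct singletons meets every budget with equality. The matching formula \eqref{eq:hall} applies because all erased witnesses are singletons.

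Dividing the two capacities gives $\mathcal{A}_C=(n+1)/1=n+1$. No step is a genuine obstacle. The only point needing care is that the instance respects the model's requirements: nonempty bundles, a nonempty family, unit budgets, $m_o\ge1$, and width at most two, including the case $n=1$, where the family is the single bundle $\{h,\ell_1\}$.
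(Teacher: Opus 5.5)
Your proposal is correct and is essentially the paper's own proof: the same shared-prerequisite star $\W_C(o)=\{\{h,\ell_j\}:1\le j\le n\}$ with unit budgets and one unit-multiplicity type, in which the hub caps true capacity at one and erasure leaves $n+1$ usable singletons. Your price-vector and Hall-formula cross-checks are valid but not needed for the argument.
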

\begin{proof}
Use sources $s_0,\ldots,s_n$, each with budget one, and one object type of
multiplicity one. Let
\begin{equation}
 \W_C(o)=\{\{s_0,s_j\}:1\leq j\leq n\}.
 \label{eq:star}
\end{equation}
Every true rendition consumes the shared prerequisite $s_0$, so at most one is
possible and any pair realizes one. After erasure each of the $n+1$ singleton
sources is an alternative and all can be used once. Thus $K_{\Erase(C)}=n+1$.
\end{proof}
There is no bound on amplification depending only on witness width. The lost
shared prerequisite, rather than a broken hash, causes the apparent increase.
In~\eqref{eq:fano}, the three capacities are $1$ (integral), $7/3$ (fractional)
and $7$ (erased); they answer different questions.

\begin{corollary}[Source-list insufficiency]
A summary retaining only source unions, multiplicities and budgets cannot give
a sound constant-factor capacity approximation for all witness policies: no
fixed finite $b\geq1$ and estimate based only on that summary can satisfy
$K_C/b\leq\widehat K_C\leq K_C$ for every instance.
\end{corollary}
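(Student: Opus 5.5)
The plan is a two-instance indistinguishability argument built from the star construction in the proof of the unbounded erasure amplification theorem. Fix any candidate $b\geq1$ and any estimator $\widehat K$ that depends only on the summary: for each type $o$, the source union $U_o=\bigcup_{w\in\W_C(o)}w$, together with the multiplicities $m$ and the budgets $Q$. I will exhibit two instances that share the same summary but whose true capacities differ by more than a factor $b$. A single value $\widehat K$ would then have to lie in both windows $[K_C/b,K_C]$ and $[K_{C'}/b,K_{C'}]$, which cannot happen.

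The construction chooses an integer $n\geq b$. Both instances use sources $s_0,\ldots,s_n$ with unit budgets and one object type with $m_o=1$.
\begin{itemize}
\item Instance $C$ takes the star family \eqref{eq:star}, namely $\W_C(o)=\{\{s_0,s_j\}:1\leq j\leq n\}$. By the amplification theorem, $K_C=1$.
\item Instance $C'$ takes $\W_{C'}(o)=\Erase(\W_C(o))$, the $n+1$ singletons. From the same proof, or directly from \eqref{eq:hall} with $N(\{o\})$ of total budget $n+1$, we get $K_{C'}=n+1$.
\end{itemize}
Both families have union $\{s_0,\ldots,s_n\}$, and they agree on $m$ and $Q$. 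So the summaries coincide and $\widehat K_C=\widehat K_{C'}=:\kappa$.

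The contradiction is then immediate. Soundness on $C$ forces $\kappa\leq K_C=1$. The approximation guarantee on $C'$ forces $\kappa\geq (n+1)/b>1$, since $n\geq b$. These are incompatible, so no such $b$ exists.

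The only step needing care is what "summary" means. If it also records the occurrence-level union $w_o$-style data or witness widths, the two instances must still agree on everything retained. I would therefore state explicitly that the summary is the triple $(U_o,m,Q)$, and note that $\Erase$ preserves $U_o$ by definition. This invariance is the crux. The rest is a direct citation of the amplification theorem and the singleton (Hall) formula.
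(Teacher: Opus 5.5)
Your proposal is correct and follows the paper's proof. Both arguments pair the star policy with its erased policy: the two share the per-type source union, multiplicities and budgets, and have capacities $1$ and $n+1$. Choosing $n+1>b$ (your $n\geq b$) then makes soundness on the star and the factor-$b$ guarantee on the erased policy incompatible.
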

\begin{proof}
The star policy and its erased policy have identical such summaries and
capacities $1$ and $n+1$. A common estimate sound for the former is at most one;
an approximation within factor $b$ for the latter must be at least $(n+1)/b$.
Choose $n+1>b$.
\end{proof}

\paragraph{Context restrictions.}
If $\W_{C'}(o)\subseteq\W_C(o)$ with unchanged budgets and multiplicities,
then $K_{C'}\leq K_C$ and $F_{C'}\leq F_C$ by feasible-set inclusion. Merely
changing a context string does not change these quantities unless admissibility
changes. The cryptographic layer separately prevents transferring a permit to
that new string. No general monotonicity of the ratio $\mathcal A_C$ follows.

\begin{figure}[t]
\centering\includegraphics[width=\linewidth]{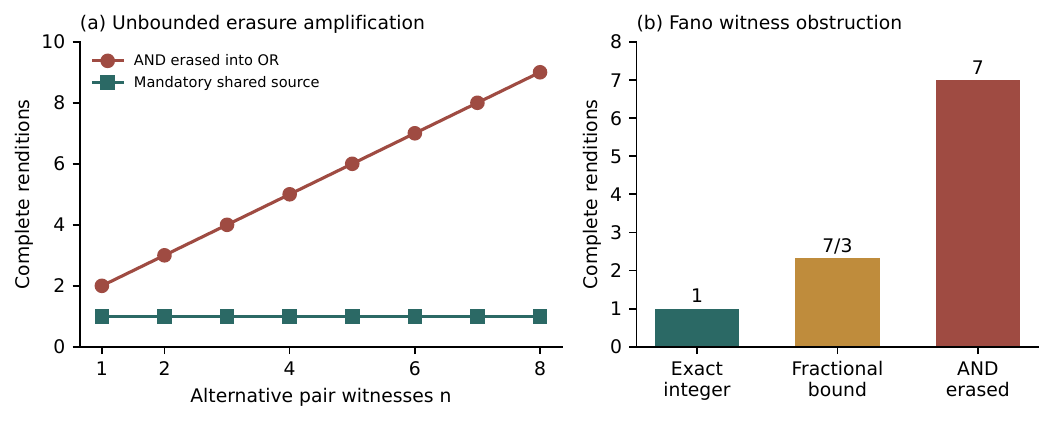}
\caption{Constructed mathematical experiments. (a) Keeping a shared prerequisite
fixes capacity at one; replacing AND by OR gives $n+1$. (b) The Fano fixture
distinguishes an integral plan, a fractional certificate and erased obligations.
These are not measurements of cultural influence.}
\label{fig:capacity}
\end{figure}

\section{Revision-bound polarity cuts}
\label{sec:revisions}
The witness families in Section~\ref{sec:model} specify what a rendition must
retain. We now derive such families from executable inference obligations and
authenticated evidence revisions. The problem is finer than preserving a source
hash: a qualification or registered counterexample can change a presentation's
eligibility even when every underlying source file remains unchanged.

\subsection{Typed evidence states and closed-catalogue guards}
Fix a finite nonempty leaf registry $L$, a scope $C$ and an acyclic policy DAG
$\Pi$. The scope identifies the intended question, purpose and audience. Each
leaf $\ell$ has a complete canonical record
\[
 r_t(\ell)=(\mathrm{id},\mathrm{source},\mathrm{record\_hash},
                    \mathrm{qualifier},p_t,n_t),\qquad p_t,n_t\in\{0,1\}.
\]
The source hash identifies the underlying scientific record; the qualifier and
two bits record its admission for this particular inference. The four states
are unknown $(0,0)$, supported-only $(1,0)$, opposed-only $(0,1)$ and contested
$(1,1)$. They describe admitted evidence, without assigning probabilities or
asserting objective truth. Four-valued evidence semantics are established
\cite{belnap1977}. In particular, an unknown premise does not become supported
by negation, and inconsistency does not entail an unrelated proposition.

Each gate authenticates its ID, operation, complete ordered argument list,
qualifier and type, either \emph{interpretation} or \emph{eligibility}. There is
no authorization gate or inference rule producing issuer consent. For
interpretation gates we use the usual truth-order operations
\begin{align}
 \bigwedge_i(p_i,n_i)&=(\bigwedge_i p_i,\ \bigvee_i n_i),&
 \bigvee_i(p_i,n_i)&=(\bigvee_i p_i,\ \bigwedge_i n_i),&
 \neg(p,n)&=(n,p). \label{eq:polarity-semantics}
\end{align}
Logical disjunction must be distinguished from a choice among individually
admissible witnesses. For example, $(1,1)\vee(0,0)=(1,0)$, although neither input
is supported-only. If an application requires each chosen alternative to be
individually uncontested, that requirement belongs in its reviewed policy; it
does not follow from the FDE disjunction in~\eqref{eq:polarity-semantics}.

An eligibility gate can additionally use an explicit guard
$G(q;d_1,\ldots,d_k)$, $k\geq1$, whose registered defeaters are part of $\Pi$:
\begin{equation}
 p_G=p_q\wedge\bigwedge_{j=1}^k\neg p_{d_j},\qquad
 n_G=n_q\vee\bigvee_{j=1}^k p_{d_j}.
 \label{eq:eligibility-guard}
\end{equation}
Here $p_{d_j}=0$ means no admitted support for this defeater in the committed
catalogue. It is not a proof that the defeater is false in the world. A contested
defeater still has support and blocks eligibility. The guard is a declared
nonmonotone admission rule; it is not identified with FDE negation. An
eligibility decision passes only at $(p_G,n_G)=(1,0)$. The completeness of the
catalogue, scientific suitability of its rules and reliability of admission
remain explicit assumptions.

\subsection{Obligation-preserving local cuts}
For either bit of a gate, expand its defining equation into an AND or OR of
signed child-bit literals; NOT changes the channel, while the guard introduces
the literals $\neg p_{d_j}$. A local certificate applies the following rules:
\begin{equation}
\begin{array}{c|cc}
 &\text{output }1&\text{output }0\\ \hline
 \mathrm{AND}&\text{all literals equal }1&\text{one literal equals }0\\
 \mathrm{OR}&\text{one literal equals }1&\text{all literals equal }0.
\end{array}
\label{eq:cut-rules}
\end{equation}
Both target bits must be proved. A \emph{decision cut} $c$ contains the resulting
reachable proof DAG, complete records for its opened leaves and gates, and
their authenticated openings. Its frontier $F(c)\subseteq L$ is the set of
opened full leaf records. The verifier rejects duplicate or extraneous steps,
cycles, omitted required children, altered argument lists, missing qualifiers
and unsupported status/type changes. A leaf opened for one bit nevertheless
discloses its whole record. This is selective disclosure of unused records,
not zero knowledge about the disclosed frontier.

\begin{lemma}[Local cut soundness]
Fix $\Pi$ and the records on $F(c)$. If $c$ is valid for target $q$ with status
$v$, then every assignment of records to the other leaves, using the same
registry and well-formed bit domain, evaluates $q$ to $v$ under $\Pi$.
\end{lemma}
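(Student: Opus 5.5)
The plan is to argue by structural induction over the proof DAG of the valid cut $c$, processing nodes in reverse topological order. This is well-founded because $\Pi$ is acyclic and the verifier rejects cycles. Fix an arbitrary completion $\rho$: records on $F(c)$ are as given, and the remaining leaves of $L$ carry arbitrary well-formed bits $(p,n)\in\{0,1\}^2$. For every proof step we record a claim of the form ``bit $b$ of node $u$ equals $\beta$''. The invariant to establish is that each such claimed bit is the value that node $u$ actually takes under $\rho$ and $\Pi$. Applying the invariant to the two target-bit claims for $q$ then gives $(p_q,n_q)=v$, since validity requires both target bits to be proved.

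\textbf{Base case: leaves.} Every leaf referenced by a step is an opened leaf, so it lies in $F(c)$. Its full record, including $p_t,n_t$, is fixed and authenticated by its opening. The claimed bit is therefore literally the record's bit, independent of $\rho$. Here we must use that the verifier rejects unopened leaves inside the reachable proof DAG, so no claim ever depends on an unconstrained leaf.

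\textbf{Inductive step: gates.} For a gate $g$, the authenticated opening fixes its operation, its complete ordered argument list, and its type. These coincide with the definitions in $\Pi$, because the verifier rejects altered argument lists and status/type changes. Expanding the relevant bit via \eqref{eq:polarity-semantics} or \eqref{eq:eligibility-guard} yields an AND or OR of signed child-bit literals. A NOT gate only swaps channels, and a guard contributes the literals $\neg p_{d_j}$. We then check the four cells of \eqref{eq:cut-rules}.
\begin{itemize}
\item AND with output $1$: the step proves every literal equal to $1$. By induction each child bit has its claimed value, so the conjunction is $1$.
\item AND with output $0$: one literal is shown to be $0$, which forces the conjunction to $0$ whatever the other literals are.
\item OR with output $1$: dually, one literal equal to $1$ suffices.
\item OR with output $0$: all literals are shown to be $0$.
\end{itemize}
A signed literal's value is determined by its child bit together with the sign. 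So in every case the gate's bit equals the claim under every completion $\rho$.

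The main obstacle is making the ``omitted required children'' check precise enough for the two universally quantified cells (AND$\to1$ and OR$\to0$). These must cover \emph{every} literal in the expansion. This includes the guard's defeater literals and the literals coming from the channel-swapped NOT, and it must hold with multiplicity matching the ordered argument list. I would first state explicitly a well-formedness condition: the verifier recomputes the literal list from the authenticated gate record and demands one step per literal. Soundness then reduces to the induction above.

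A second subtlety is that every leaf outside $F(c)$ is either absent from the DAG or appears only under existential cells. In those cells the proof never inspects it. This is exactly where the unopened leaves' values are irrelevant, and it is what gives the universal quantification over assignments in the lemma's statement.
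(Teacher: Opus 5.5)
Your proposal is correct and follows the paper's own proof essentially step for step: opened leaves fix the asserted bits, induction over the proof steps checks the four cells of \eqref{eq:cut-rules} with literal polarity and channel changes fixed by the authenticated gate, and applying this to both root channels yields $v$ independently of unopened records. Your explicit coverage condition for the all-literals cells only spells out the verifier's ``omitted required children'' check, which the paper leaves implicit. Demanding multiplicity matching the argument list is unnecessary, since AND/OR are idempotent and the verifier already rejects duplicate steps; covering every literal is all that soundness needs.
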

\begin{proof}
An opened leaf fixes each asserted bit. Induct through the reachable proof
steps in topological order. All-one premises determine a true conjunction; a
single zero determines a false one. Dually, one true premise determines a true
disjunction, while all-zero premises determine a false one. Literal polarity
and channel changes are fixed by the authenticated gate. Applying this
induction to both root channels proves $v$, independently of unopened records.
\end{proof}

This is soundness of a particular local calculus, not completeness for every
semantic certificate of a reconvergent circuit. Algebraic identities may admit
smaller certificates in another proof system. Our avoiding-cut procedure is
complete for the rules~\eqref{eq:cut-rules}: at a forbidden leaf it returns
unavailable; an all-required rule requires all child certificates; a decisive
rule requires one available child with the decisive value. Memoization over
node/channel pairs decides feasibility and selects the first feasible branch.
It does not globally minimize shared source footprint, bytes or cost.
These distinctions are consistent with the established tradeoffs in knowledge
compilation and certificate complexity~\cite{darwiche2002,buhrman2002,nisan1991}.

\subsection{Signed revisions and complete update cuts}
The reference implementation commits the ordered leaf vector using a cached
perfect binary tree padded to the next power of two. A record leaf hashes as
$H(\mathtt{0x00}\Vert\Enc(r))$, an internal node as
$H(\mathtt{0x01}\Vert h_L\Vert h_R)$, and padding uses a separate domain and
its eight-byte position. This fixed perfect-tree encoding differs from the
RFC~9162 tree in the later scene commitments. Gate records have a separately
committed tree. The policy descriptor commits the registry digest, leaf and
gate counts, gate-tree root and scope hash. A checkpoint signs
\begin{equation}
 B_t=(\mathrm{version},h_t,|L|,h_L,h_\Pi,h_C,t,e_t,\nu_t,R_{t-1}),\qquad
 R_t=H(D_R\Vert\mathtt{0x00}\Vert\Enc(B_t)),
 \label{eq:revision-checkpoint}
\end{equation}
with $R_t$ included in the signed body. Here $B_t$ denotes the canonical
named-field object, not an alternative array serialization,
$D_R=\texttt{SCENE/revision-root/1}$, $h_t$ is the leaf-tree root,
$h_L$ the registry digest, $h_\Pi$ the policy digest, $e_t$ an epoch and $\nu_t$
a nonce. The Ed25519 signature covers the canonical body after the distinct
\texttt{SCENE/revision-checkpoint-signature/1} domain and a NUL separator.
The consumer pins the public key,
current $R_t$, $h_\Pi$ and $h_C$ outside the packet, compares the packet target
with its requested $q$, and requires the intended decision type. A correctly
signed alternative checkpoint does not replace these pins. A trusted current
root or external history mechanism supplies freshness and fork resolution;
signatures and increasing sequence numbers alone supply neither.

For consecutive admitted snapshots, define the full-record change set
\begin{equation}
 \Delta_{t,t+1}=\{\ell\in L:r_t(\ell)\ne r_{t+1}(\ell)\}.
 \label{eq:changed-records}
\end{equation}
A \emph{complete update cut} partitions the entire padded tree into disjoint
shared subtrees and opened changed leaves. A shared subtree supplies one hash
used in both reconstructions. A changed leaf supplies its position and complete
old/new records, with equal IDs and unequal records. The verifier reconstructs
both complete roots, checks that every position is covered exactly once and
that no supplied node is unused, and verifies both checkpoints. It requires
the same registry, policy and scope; sequence $t+1$; parent $R_t$;
nondecreasing epoch; and a different nonce. Padding cannot change. A change to
leaf identity, policy or scope requires a new policy outside this interface.

\begin{lemma}[Complete-delta soundness]
Under authentic admitted checkpoints and collision-resistant hashing, an
accepted update cut returns exactly $\Delta_{t,t+1}$.
\end{lemma}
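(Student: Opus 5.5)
The argument has two halves. First we show that the accepted cut genuinely describes the two committed leaf vectors. Then we show that its opened leaves coincide with the positions where those vectors differ.

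Let $h_t$ and $h_{t+1}$ be the leaf-tree roots inside the two verified checkpoints. Since both checkpoints are authentic admitted checkpoints, these roots commit to the actual snapshots $r_t$ and $r_{t+1}$ under the same padded layout. The checkpoints carry the same $|L|$ and registry digest, and padding cannot change. The verifier rebuilds each root from the cut. The cut covers every position of the padded tree exactly once and contains no unused node. So each rebuild is a complete perfect-tree evaluation. Every shared subtree hash sits at a fixed node, and every opened changed leaf sits at its stated position.

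The central step is an extraction by induction down the tree. Suppose the verifier's recomputed root equals $h_{t+1}$ but the inputs differ from the honest ones. Then, at some node, two distinct preimages give the same hash. The domain separators $\mathtt{0x00}$, $\mathtt{0x01}$ and the padding domain with its position rule out mixing leaves, internal nodes and padding. So such a node yields a collision for $H$.

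Hence, except with negligible probability, three things hold:
\begin{itemize}
\item each shared subtree hash is the true hash of that subtree in both snapshots, so all leaves below it are equal, $r_t(\ell)=r_{t+1}(\ell)$;
\item each opened leaf's old record is $r_t(\ell)$ and its new record is $r_{t+1}(\ell)$;
\item the canonical encoding $\Enc$ is injective, so these equalities hold as records, not just as hashes.
\end{itemize}

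Next we check the two inclusions.

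\textbf{Opened leaves lie in the change set.} The verifier requires each opened leaf to have equal IDs and unequal old and new records. So every opened $\ell$ satisfies $r_t(\ell)\ne r_{t+1}(\ell)$ and belongs to $\Delta_{t,t+1}$.

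\textbf{The change set lies among the opened leaves.} Take any $\ell\in\Delta_{t,t+1}$. Exact coverage puts $\ell$ either inside a shared subtree or among the opened leaves. Inside a shared subtree it would satisfy $r_t(\ell)=r_{t+1}(\ell)$, a contradiction. So $\ell$ is opened.

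Padding positions are never in $L$ and never change, so they contribute nothing. The returned set is therefore exactly $\Delta_{t,t+1}$.

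The main obstacle is the extraction step. We must show that exact coverage plus no unused nodes forces the verifier's reconstruction to follow the canonical tree shape. A prover could otherwise use a malformed partition, such as overlapping or misplaced subtrees, to get a valid root without a collision. The plan is to prove that the accepted partition of positions induces a unique evaluation order matching the perfect tree. With that in place, a standard node-by-node collision argument suffices, in the same style as the induction in the local cut soundness lemma.
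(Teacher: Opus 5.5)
Your proposal is correct and follows the paper's argument. Rebuilding both pinned roots binds the two complete vectors, and every opened leaf is unequal by the verifier's check. An omitted changed record would lie in a shared subtree whose common old/new hash covers distinct contents, which yields a collision.

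The step you flag as the main obstacle is already settled by the construction. Shared subtrees are nodes of the fixed padded perfect tree, whose shape is determined by the signed, unchanged $|L|$. An exactly-once cover is therefore an antichain of such nodes with a unique bottom-up evaluation, so your node-by-node collision descent applies directly.
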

\begin{proof}
Coverage and old/new root reconstruction authenticate the two complete vectors.
Every explicit changed record is unequal. If an actual unequal record were
omitted, it would lie in a claimed shared subtree with equal old/new hash.
The two distinct canonical subtree contents would then yield a hash collision.
The same binding excludes inventing an unequal record outside the admitted
vectors. Coverage rejects overlap and unexplained gaps.
\end{proof}

The initial producer stores $O(|L|)$ leaf-tree hashes. A persistent revision
rehashes the union of affected paths, bounded by $O(k\log |L|)$ for $k$ changed
records, and stores sparse overlays. Cached gate aggregates and reverse edges
update affected gates without repeatedly scanning every argument of a large
fanin gate. Enumerating the update cut skips equal cached subtrees. Initial
ingestion, affected-gate work, proof output and lookup overhead across long
persistent histories remain real costs; the path bound is not a bound on all
producer work.

\subsection{Certificate transport and its security game}
\begin{theorem}[Cut-stable revision transport]
Let $c$ be a valid cut at $R_t$ and let a verified complete update cut connect
admitted checkpoints $R_t,R_{t+1}$ with fixed registry, policy and scope. If
\begin{equation}
 F(c)\cap\Delta_{t,t+1}=\varnothing, \label{eq:transport-disjoint}
\end{equation}
the same local inference steps certify the same target pair at $R_{t+1}$.
Its leaf authentication paths can be repaired using the update cut, yielding
an ordinary independently verifiable decision cut at the new root.
\end{theorem}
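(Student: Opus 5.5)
The argument separates into a semantic part and an authentication part. I would first invoke the Complete-delta soundness lemma: since the update cut is accepted, the set of changed positions it exposes is exactly $\Delta_{t,t+1}$. Combined with hypothesis~\eqref{eq:transport-disjoint}, this gives $r_{t+1}(\ell)=r_t(\ell)$ for every $\ell\in F(c)$. The records themselves are unchanged, not merely their hashes. Because registry, policy and scope are fixed, $\Pi$ is unchanged, so every gate record opened in $c$ is also unchanged. This includes its ID, operation, ordered arguments, qualifier and type, and the gate tree is committed by the same $h_\Pi$.

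Hence every local step of $c$ is checked against identical premises. The status/type checks, the child lists and the rules~\eqref{eq:cut-rules} are syntactic conditions on these unchanged records, so they pass verbatim at $t+1$. Applying the Local cut soundness lemma at $R_{t+1}$ then shows that the target $q$ still evaluates to the same pair $v$ under every completion of the unopened leaves. In particular it does so under the actual $r_{t+1}$ records on $L\setminus F(c)$, where all changes lie.

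The remaining step is to repair the authentication paths, which I expect to be the main obstacle. Gate openings need no repair: the gate-tree root is inside the unchanged policy descriptor, so the old gate paths remain valid. For a leaf $\ell\in F(c)$, its new path consists of the sibling hashes along the root path in the new leaf tree. I would argue that each such sibling is determined by the update cut. The update cut partitions the padded tree into shared subtrees and opened changed leaves. Any sibling node $u$ on $\ell$'s path is therefore of one of three kinds:
\begin{enumerate}
\item $u$ is contained in a shared subtree, and its hash is obtained by taking the old path node, which is equal in both trees, or the corresponding subtree hash.
\item $u$ contains shared subtrees and changed leaves, and its new hash is recomputed from the cut's new-side reconstruction.
\item $u$ is a padding node, which is unchanged because padding cannot change.
\end{enumerate}
Care is needed for the first kind when $u$ lies strictly inside a shared subtree. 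There the old path's own sibling hash is the correct value, since that whole subtree is identical in both trees. Finally, the repaired path must hash to $h_{t+1}$, which the verifier has already reconstructed and bound in the signed $B_{t+1}$.

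Assembling these pieces, the same proof DAG, the unchanged records and the repaired openings form an ordinary decision cut. It is checkable against the pinned $R_{t+1}$ alone, without reference to $R_t$ or the update cut. I would close by noting the standing assumptions this relies on: authentic admitted checkpoints, and collision resistance for the step from equal hashes to equal records.
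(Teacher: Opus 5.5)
Your proposal is correct and follows essentially the same route as the paper. Complete-delta soundness together with the disjointness hypothesis fixes every opened leaf record, the fixed policy fixes the gates, and local cut soundness preserves both target bits; each sibling hash is then either retained inside a shared subtree or taken from the update cut's new-side reconstruction, so the ordinary new-root verifier accepts the repaired cut. Your three-way sibling classification, and your remark that gate openings need no repair because their root sits in the unchanged policy descriptor, only make explicit what the paper's proof states more briefly.
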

\begin{proof}
Complete-delta soundness and~\eqref{eq:transport-disjoint} imply equality of
every opened full leaf record. Gate records are unchanged because policy is
fixed, so local cut soundness preserves both target bits. For each opening,
a sibling subtree exposed in the update reconstruction supplies its new hash;
a sibling inside a shared subtree retains its old hash. Replacing the former
entries reconstructs the new leaf-tree root and checkpoint wrapper. The
ordinary new-root cut verifier checks the repaired result.
\end{proof}

Disjointness is necessary for reusing the \emph{same full-record frontier} by
this rule, not for semantic output stability. A qualifier-only revision belongs
to $\Delta$ even if both bits and the source-file hash are unchanged. A touched
certificate may also have an unaffected alternative: generating an old cut
with $F(c)\cap\Delta=\varnothing$ and then transporting it recovers that case.
If no such cut exists, current reevaluation can still find a new certificate.
The batch API verifies one complete delta once, then validates and repairs
each old cut; its verified map is local to that invocation. This amortizes
delta checking without making individual proof checking free.

For precision, consider an adversary that receives admitted canonical states,
their signed checkpoints and legitimate openings, and outputs a decision cut,
complete delta or transported cut under externally pinned roots, policy, scope
and query. It wins if acceptance certifies a different evaluated pair, an
incorrect changed-record set, or an unchanged frontier containing an unequal
record. The admission authority is honest about the fixed registry/DAG schema;
source truth, policy adequacy and a compromised signing key are outside this
game. With strict schema checking and injective canonical descriptors, there
are polynomial-overhead reductions satisfying
\begin{equation}
 \Pr[\mathrm{win}]\ \leq\
 \operatorname{Adv}^{\rm muEUF\mbox{-}CMA}_{\rm Ed25519}(\mathcal B_{\rm sig})
 +\operatorname{Adv}^{\rm CR}_{\rm SHA256}(\mathcal B_H).
 \label{eq:revision-security}
\end{equation}
Here the multi-user term accommodates the admitted key set; for one checkpoint
key the ordinary EUF-CMA game suffices. An unissued accepted checkpoint gives a
signature forgery. Otherwise, a different committed record, gate, descriptor
or hidden change gives a collision by the tree-binding argument. With neither
event, the two preceding lemmas and transport induction rule out winning.
This is a conditional authenticity reduction for a construction from standard
primitives, not a new hardness assumption or a bound estimated by rejection
counts. Public benchmark seeds instantiate functional tests only.

\subsection{An exact certified-reuse threshold}
Fix an admitted old snapshot and a nonempty family $\mathcal C_t(q)$ of all
valid local cuts for its exact target pair. Its frontier hypergraph is
$\mathcal H_t(q)=\{F(c):c\in\mathcal C_t(q)\}$. Our syntax has no constants or
empty-argument gates, so all frontiers are nonempty. Define
\begin{equation}
 \tau_t(q)=\min\{|D|:D\subseteq L,\ D\cap F\ne\varnothing
                              \text{ for every }F\in\mathcal H_t(q)\}.
 \label{eq:revision-transversal}
\end{equation}
This is the classical transversal number applied to the concrete old-certificate
family. Boolean certificate complexity supplies the underlying mathematical
language~\cite{nisan1991,buhrman2002}; our functional measures authenticated
old-cut reuse rather than the number of queries needed to evaluate a function.

\begin{theorem}[Certified old-cut reuse radius]
The largest integer $r$ for which every changed-record set of size at most $r$
leaves some transportable old cut is exactly
\begin{equation}
 \rho_t(q)=\tau_t(q)-1. \label{eq:transport-radius}
\end{equation}
\end{theorem}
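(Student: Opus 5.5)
The argument has two directions, and both come from the transport criterion of the cut-stable revision transport theorem. Under that theorem, an old cut $c$ is transportable across a change set $\Delta$ exactly when $F(c)\cap\Delta=\varnothing$. We read ``transportable'' as the disjointness rule~\eqref{eq:transport-disjoint}, as the remark after that theorem stipulates, and not as semantic output stability. With this reading, ``some transportable old cut exists for $\Delta$'' says precisely that $\Delta$ is \emph{not} a transversal of $\mathcal H_t(q)$. The claim then reduces to a purely combinatorial statement about the hypergraph $\mathcal H_t(q)$. It is well defined because $\mathcal C_t(q)$ is nonempty and every frontier is nonempty, so $\tau_t(q)\geq1$ is finite. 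Indeed $D=L$ is a transversal, since each $F\subseteq L$ is nonempty.

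For the first direction, we show $\rho\geq\tau_t(q)-1$. Let $\Delta\subseteq L$ with $|\Delta|\leq\tau_t(q)-1$. By minimality in~\eqref{eq:revision-transversal}, $\Delta$ is not a transversal. So some $F(c)\in\mathcal H_t(q)$ satisfies $F(c)\cap\Delta=\varnothing$, and that cut $c$ transports by the cut-stable transport theorem. For the second direction, we show no larger radius works. Take a minimum transversal $D$ with $|D|=\tau_t(q)$. Viewed as a change set, $D$ meets every frontier, so no old cut avoids it. Hence the property fails at size $\tau_t(q)$, and the largest admissible integer is exactly $\tau_t(q)-1$. The value $r=0$ is always admissible, since the empty change set leaves every cut transportable. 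This matches $\tau_t(q)\geq1$.

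The main obstacle is the second direction. We must justify that an arbitrary subset $D\subseteq L$ can actually occur as a changed-record set $\Delta_{t,t+1}$ of a verified complete update cut, with registry, policy and scope fixed. The plan is to argue realizability directly. For each $\ell\in D$, alter only its qualifier field or its bits to produce an unequal record with the same ID. Then issue a new checkpoint with sequence $t+1$, parent $R_t$, the same epoch and a fresh nonce. By complete-delta soundness, the accepted update cut returns exactly $D$. The paper notes that a qualifier-only revision belongs to $\Delta$, so such revisions are admissible even when bits are unchanged. Hence every subset is realizable.

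We also stress that the quantity measures reuse of the fixed family $\mathcal C_t(q)$ of old cuts. It does not measure the existence of some valid certificate at the new root. Current reevaluation may still succeed beyond radius $\rho_t(q)$, but that lies outside the statement.
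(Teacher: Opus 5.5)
Your proposal is correct and follows the paper's proof essentially step for step. Any change set of size below $\tau_t(q)$ is not a transversal, so it misses some old frontier and the cut-stable transport theorem applies. Conversely, a minimum transversal realized as a well-formed qualifier-only update, which is the same realizability observation the paper makes, meets every old frontier and so ends the guarantee.
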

\begin{proof}
A set of fewer than $\tau_t(q)$ records is not a transversal, so it misses a
frontier and the transport theorem applies. A minimum transversal of size
$\tau_t(q)$ intersects every old frontier, ending that universal guarantee.
Every record set can be realized as a well-formed qualifier-only update, so
the argument does not require that those updates flip the target's bits.
\end{proof}
For positive record costs $w_\ell$, replace $|D|$ by
$\sum_{\ell\in D}w_\ell$ to obtain threshold $\tau_{t,w}(q)$. Every update
with cost strictly less than this threshold leaves an old cut, and a minimum
cost transversal attains the obstruction. Only for integer costs is the
corresponding integer budget radius $\tau_{t,w}-1$. An empty eligible-cut
family has no zero-update guarantee; a proof system admitting empty-frontier
constant certificates would instead need an infinite-threshold convention.
Neither case is silently assigned an ordinary finite radius.

\begin{proposition}[Guard-erasure inflation]
For each $n\geq1$, deleting one registered guard can inflate the certified
reuse threshold by a factor $n$, while all source hashes and cryptographic
primitives remain fixed.
\end{proposition}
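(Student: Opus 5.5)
The plan is an explicit construction parameterised by $n$: a single registry $L$ and a target that is evaluated once with the guard registered and once with it deleted. I would compute both frontier hypergraphs directly from the local cut rules~\eqref{eq:cut-rules} and then read off $\tau$ from~\eqref{eq:revision-transversal}. The construction aims for threshold $1$ with the guard and threshold $n$ without it. The leaf records, source hashes and checkpoint primitives are the same in both policies; only $\Pi$ differs.

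\emph{Construction.} Take leaves $a_1,\ldots,a_n,d$. Each $a_i$ has state $(1,0)$ (supported-only), and the defeater $d$ has state $(0,0)$ (unknown). Let $q=\bigvee_i a_i$ be an interpretation gate. By~\eqref{eq:polarity-semantics}, $p_q=\bigvee_i p_{a_i}=1$ and $n_q=\bigwedge_i n_{a_i}=0$. The guarded policy adds the eligibility gate $G(q;d)$. By~\eqref{eq:eligibility-guard}, $p_G=1\wedge\neg 0=1$ and $n_G=0\vee 0=0$, so it passes at $(1,0)$. Deleting the guard means the eligibility decision is taken at $q$ itself, which is also $(1,0)$. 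The registry is unchanged, and $d$ stays in $L$ as an unused leaf.

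\emph{Unguarded threshold.} Under~\eqref{eq:cut-rules}, the $p$-channel of an OR is certified by one child literal equal to $1$. The $n$-channel is an AND of the $n_{a_j}$, so it is certified by one literal equal to $0$. Every valid cut therefore opens some $a_i$ and some $a_j$, possibly equal, and nothing else. Hence $\mathcal H(q)=\{\{a_i,a_j\}\}$, which contains every singleton $\{a_k\}$ (take $i=j=k$).
\begin{itemize}
\item A transversal must hit each singleton $\{a_k\}$, so $\tau\geq n$.
\item The set $\{a_1,\ldots,a_n\}$ hits every frontier, so $\tau=n$.
\end{itemize}

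\emph{Guarded threshold.} For $G$, the output-$1$ rule for $p_G$ is an AND and requires the literal $\neg p_d=1$, so $d$ must be opened. The $n_G$ channel is an OR whose output $0$ needs all literals $0$, including $p_d=0$. Thus every frontier has the form $\{a_i,a_j,d\}$. The singleton $\{d\}$ is a transversal, and frontiers are nonempty, so $\tau=1$. The ratio is $n$, as claimed. Via the reuse-radius theorem this reads as $\rho$ moving from $0$ to $n-1$.

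\emph{Main obstacle.} The delicate step is ensuring that every certificate in the local calculus has been enumerated, so that $\mathcal H$ is exactly the family described. In particular, I need that no valid cut for $G$ avoids opening $d$, and that no unguarded cut opens anything other than $a$-leaves. I would handle this by a case check on the rule table: the decisive and all-required cases for each channel of each gate are forced by the known child values. I would also stress that validity rejects extraneous steps, so the unused leaf $d$ cannot appear in unguarded frontiers.
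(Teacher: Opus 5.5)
Your proposal is correct and follows the same construction as the paper: the threshold is $n$ for an OR of $n$ supported-only leaves, and $1$ once a single inactive registered defeater is guarded, because every guarded frontier must open that defeater. Your explicit enumeration of the full frontier families $\{a_i,a_j\}$ and $\{a_i,a_j,d\}$ is a slightly more careful rendering of the paper's argument via inclusion-minimal frontiers. Keeping $d$ in $L$ under both policies, so that only $\Pi$ changes, is likewise a minor refinement of that argument.
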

\begin{proof}
Let $a_1,\ldots,a_n$ be supported-only leaves and $q=\bigvee_i a_i$. Its
inclusion-minimal local frontiers are $\{a_i\}$: one branch proves support and
zero opposition of the OR. Every valid frontier contains such a singleton, so
$\tau_t(q)=n$. Add an inactive catalogue defeater $g$ and define
$q'=G(q;g)$. Proving $p_{q'}=1$ requires $p_g=0$ and a positive branch;
proving $n_{q'}=0$ also requires $p_g=0$. The minimal frontiers are therefore
$\{a_i,g\}$ and $\{g\}$ hits all of them, giving $\tau_t(q')=1$.
The thresholds have ratio $n$; the radii are $n-1$ and zero, for which a ratio
would be undefined.
\end{proof}
The mechanism parallels obligation-erasure amplification in
Section~\ref{sec:erasure}: dropping a shared prerequisite before authentication
changes the resource being certified. In a guard star
$G(a;d_1,\ldots,d_k)$ with all defeaters inactive, every positive eligibility
cut must expose $a$ and every $d_j$. Omitting any $d_j$ leaves two completions
with identical disclosed records but different eligibility after activating
that defeater. Thus $k+1$ records are necessary, and the local cut attains this
bound. Authentication of $a$ alone cannot replace these negative obligations.

\subsection{From surviving cuts to witness capacity}
For a supported-only target $q$, let $\mathrm{src}_t(F)$ collect the source
principals of its opened leaves, deduplicating within that occurrence. Define
\begin{equation}
 \W_{t,\Delta}(q)=
 \{\mathrm{src}_t(F(c)):c\in\mathcal C_t(q),\
             \mathrm{status}(c)=(1,0),\ F(c)\cap\Delta=\varnothing\}.
 \label{eq:cut-generated-witnesses}
\end{equation}
The family is generated by positive, opposing and qualifier obligations in the
actual policy, rather than by an unauthenticated source union. It can be
exponentially large. An empty family for any required target means zero
complete-rendition reuse capacity; this case is handled before invoking the
nonempty-family planner of Section~\ref{sec:capacity}.

\begin{theorem}[Transport-certified source capacity]
Under the independent-compatibility assumption of Section~\ref{sec:model}, the
maximum number of complete renditions certifiable by unaffected old cuts with
source-obligation capacities $Q$ equals the configuration optimum
\eqref{eq:ilp} with families~\eqref{eq:cut-generated-witnesses} and the specified
target multiplicities.
\end{theorem}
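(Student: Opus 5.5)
The plan is to reduce the statement to two results already in hand: the cut-stable revision transport theorem and the exactness argument given after the configuration program~\eqref{eq:ilp}. First I would make precise what ``certifiable by unaffected old cuts'' means. Each rendition contains $m_q$ occurrences of each required target $q$. To each occurrence the rendition attaches a valid old cut $c\in\mathcal C_t(q)$ with status $(1,0)$ and $F(c)\cap\Delta=\varnothing$. That occurrence debits one unit from every source in $\mathrm{src}_t(F(c))$, with duplicates removed within the occurrence exactly as in~\eqref{eq:closure-cost}. Total debits per source may not exceed $Q_s$.

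Before anything else I would dispose of the degenerate case. If some required target has $\W_{t,\Delta}(q)=\varnothing$, no occurrence of $q$ can be certified, so the only feasible number of renditions is zero. This agrees with the convention stated before the theorem, so from here on every family is nonempty. Each family is also finite, because the registry and policy are finite and the verifier rejects duplicate or extraneous steps. Since frontiers are nonempty, every member of $\W_{t,\Delta}(q)$ is a nonempty subset of $\mathcal S$, so the hypotheses of Section~\ref{sec:capacity} hold.

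\emph{Soundness direction} (any certifiable collection gives a feasible point of~\eqref{eq:ilp}). Take $k$ certifiable renditions. For each $q$ and $w\in\W_{t,\Delta}(q)$, let $x_{q,w}$ count the occurrences whose attached cut has source footprint $w$. Then $\sum_w x_{q,w}=km_q$. The source constraint is exactly the debit bound, since an occurrence with footprint $w$ debits $s$ precisely when $a_{s,q,w}=1$. The transport theorem ensures these occurrences are genuinely certified at $R_{t+1}$: each disjoint old cut is repaired into an independently verifiable new-root cut for the same target pair, so status $(1,0)$ is preserved.

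\emph{Construction direction} (any feasible point gives certifiable renditions). Given integral $(x,k)$ feasible for~\eqref{eq:ilp}, I would choose for each $w$ one cut $c_w$ realizing it; such a cut exists by the definition~\eqref{eq:cut-generated-witnesses}. Then I would assign $c_w$ to $x_{q,w}$ occurrences, partition the $km_q$ occurrences of each type into $k$ groups of size $m_q$, and invoke the independent-compatibility assumption to form $k$ complete renditions, exactly as in the exactness argument after~\eqref{eq:ilp}. The source constraints give the budget bound.

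The main obstacle is the step that lets one footprint stand for the whole family of cuts sharing it. Different cuts $c,c'$ can have the same $\mathrm{src}_t(F)$ but different frontiers, so I must check that the debit depends only on the source footprint and not on which records were opened. This follows from the accounting rule, which charges per source principal and deduplicates within an occurrence. With that established, it is immaterial which representative $c_w$ is chosen, and the two directions are mutually inverse up to relabeling of cuts. That gives equality of the two optima.
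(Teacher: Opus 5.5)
Your proposal is correct and follows the paper's proof. One direction groups the certified occurrences by their cut's source footprint, which yields integer counts meeting the configuration equalities and capacities. The other picks a realizing old cut for each selected footprint, transports it, and regroups occurrences into complete renditions under the compatibility assumption; per-occurrence deduplication gives the coefficients $a_{s,o,w}$.

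The only small difference is where you invoke transport. The paper uses it in the construction direction, to turn each chosen old cut $c_w$ into a current cut at $R_{t+1}$; that is where it is logically needed, not in the counting direction.
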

\begin{proof}
Group each actual target occurrence by its transported cut's source footprint.
The resulting integer counts satisfy the configuration equalities and source
capacities. Conversely, each selected footprint in an integral feasible plan
has a corresponding old cut by definition. Transport supplies a current cut;
the compatibility assumption permits grouping occurrences into complete
renditions. Deduplication within an occurrence and separate charging of
repeated occurrences give exactly the coefficients in~\eqref{eq:ilp}.
\end{proof}
For fixed old state, $\Delta_1\subseteq\Delta_2$ only removes witnesses, hence
this certified-reuse capacity is nonincreasing. Recomputing a changed state's
inferences can instead increase or decrease its current witness family.
This theorem concerns a plan under available source capacities: it does not
transfer a previously issued permit. A transported interpretation or eligibility
cut is never issuer consent. The later protocol still requires every necessary
issuer's exact-scene permit and atomic budget admission. Checkpoint signatures
use a different message domain from grants and permits; promotion is rejected
even when a laboratory fixture uses the same key for two roles.

\paragraph{Relation to established methods.}
Neither negative provenance nor tracking updated data is new: $K$-relations
with difference, logical provenance and ProvSQL already cover substantial
parts of that problem~\cite{geerts2010,gradel2020,provsql2026}. Authenticated
answer completeness~\cite{pang2008}, persistent authenticated
dictionaries~\cite{anagnostopoulos2001}, dynamic hash accumulators and their
proof updates~\cite{dryja2019}, and incrementally verifiable computation such
as Nova~\cite{nova2022} establish that state authentication, incremental
verification and proof reuse have broad prior art. Revocable and use-once
proof-carrying authorization also predates this work~\cite{morgenstern2011}.
Our proposed contribution is the specific composition of typed polarity/guard
obligations, complete update cuts, repairable local-cut transport and
cut-generated source allocation, together with its certified-reuse functional
and guard-erasure analysis. It supplies inspectable conditional claims about
admitted scientific records. It cannot turn the authenticity of a record into
truth of an interpretation, completeness of all possible counterevidence, or
authorization by the record's source.

\section{SCENE: authenticating the counted object}
\label{sec:crypto}
The mathematical planner is not an authorization oracle. SCENE binds a chosen
realization to an exact structured manifest, then atomically consumes its
declared source obligations. It combines established components.
Macaroons already support contextual caveats~\cite{macaroons2014}; sanitizable
signatures~\cite{ateniese2005} and policy-based redaction~\cite{kissel2022} address
authorized editing and extraction. C2PA authenticates content-associated
assertions and ingredients and leaves trust interpretation to consumers~\cite{c2pa}.
SCENE neither implements those standards nor demonstrates a vulnerability in
them. It specializes provenance and consumable credentials to the capacity
semantics above.

\subsection{Authenticated manifest and disclosure closure}
A manifest contains a header (schema, renderer, cost rule and context), source
snapshot records, evidence records and explicitly identified object occurrences.
The reference scene supports square-pyramid geometry with integer millimetres,
positions, labels and colors. Source records hash packaged numeric source cards;
their URLs cite the publications, and separate metadata records the inspected
publication copies. These local hashes must not be mistaken for publisher signatures.

Canonical encoding accepts NFC strings, ASCII dictionary keys, safe integers,
booleans, null, lists and dictionaries. Floats, duplicate JSON keys, unpaired
surrogates and integers outside the exact JavaScript range are rejected; booleans
are rejected in integer fields. $\Enc$ emits compact UTF-8 JSON with ASCII keys
in lexicographic order, no insignificant whitespace, and integers in
$[-(2^{53}-1),2^{53}-1]$. Strings must already be NFC; the encoder rejects rather
than silently normalizes them. Object, source and evidence records are ordered
by stable IDs. The implementation's restricted encoding is documented as such,
not advertised as general JCS compliance~\cite{rfc8785}. Rendering resolves ties
by depth, object ID and face index, so record-list permutations cannot change
the committed geometry's deterministic rendering order.

Let $L(S)$ be that ordered record list. With RFC 9162 leaf/internal domains and
tree shape~\cite{rfc9162}, set
\begin{equation}
 \begin{aligned}
 t(S)&=\operatorname{hex}(\mathrm{MTH}(L(S))),\\
 B(S)&=\Enc\bigl(\{\texttt{tree\_root}:t(S),\
                  \texttt{size}:|L(S)|\}\bigr),\\
 R(S)&=\operatorname{hex}\!\left(H\!\left(
 \texttt{SCENE/scene-root/1}\Vert\texttt{0x00}\Vert B(S)\right)\right).
 \end{aligned}
 \label{eq:scene-root}
\end{equation}
Here the domain string is ASCII, $\texttt{0x00}$ denotes one NUL byte, and
$\operatorname{hex}$ is lowercase hexadecimal without a prefix. Thus both the
inner tree root and the returned scene root are serialized exactly as in the code.
Signing the count is necessary here: an RFC-shaped proof of leaf zero in a
three-leaf tree can verify under a claimed four-leaf size if an opaque sibling
digest is reinterpreted as a larger subtree. No collision is needed for that
proof-shape ambiguity. Our size-bound wrapper rejects it.

A disclosure for selected object IDs opens the header, each selected object,
its full transitive evidence closure, and every source record reached. The
verifier checks all inclusion proofs against the pinned wrapped root and count,
the exact external selection, record uniqueness and closure. Membership does
not imply that hidden real-world evidence is absent. Completeness is relative
to the authenticated adjacency lists and policy, which a trusted reviewer must
have admitted. A disclosure is not a new endorsed composition: producing a
strictly smaller projected manifest changes its root, except with negligible
collision probability, and requires its own authorization. Opening the entire
manifest or merely permuting its record lists does not create a distinct
committed scene.

\subsection{Grant, permit and atomic state transition}
The lab uses maintained Ed25519~\cite{rfc8032} and SHA-256. Pinned source keys live
outside untrusted packets. A source grant signs its ID, source ID, ledger ID,
evidence-subject digest, purpose, audience, renderer, cost rule, budget and
validity interval. The subject digest includes the complete source record and
that source's evidence snapshot, including support edges and qualifications.
For an instantiated scene the same source signs a permit containing grant and
ledger IDs, $R(S)$, context hash, recomputed cost and a request nonce.

The single authoritative SQLite ledger checks all required source permits in a
\texttt{BEGIN IMMEDIATE} transaction. It validates pinned keys, grant signatures,
scope, evidence snapshots, time, costs, nonces and balances. It debits all
sources and records all spent nonces, or rolls back every change. For each
grant $g$ the enforced invariant is
\begin{equation}
 Q_g^{(t+1)}=Q_g^{(t)}-c_g(S_t)\geq0,\qquad
 \sum_{t\in\mathrm{accepted}(g)}c_g(S_t)\leq Q_g^{(0)}.
 \label{eq:conservation}
\end{equation}
This is stipulated authorization accounting, with established consumable-credential
precedents~\cite{bowers2007}. Copying evidence or citing public heritage data is
not forbidden; the policy governs claims of the issuer's endorsement.

\begin{theorem}[Conditional scene and resource integrity]
Assume collision resistance of $H$, unforgeability of the signature scheme,
the specified injective encoding of record lists, honest admitted source/permit
issuance, externally pinned issuer keys, a trusted clock, correct
schema/closure/cost enforcement, and one intact serialized ledger. For selective
disclosure, also assume that the externally supplied expected root was produced
from a valid admitted manifest. In a committed transaction, the ledger recomputes
the supplied scene's root and context digest and compares them with each signed
permit; it does not trust a root claimed by the scene submitter. Then a permit
issued for one committed scene or context cannot authorize a distinct one except
by signature forgery or hash collision. Accepted transactions satisfy
\eqref{eq:conservation}, including object cloning costs.
\end{theorem}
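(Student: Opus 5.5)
The plan is a standard game-hopping reduction that splits the adversary's success into a signature event, a hash event and a residual deterministic event. I fix an adversary that receives admitted manifests, grants, permits and legitimate disclosures, and then submits either a transaction or a disclosure. It \emph{wins} in one of two ways. Either some committed transaction accepts a permit $\sigma$ for a scene $S'$ or context $C'$ distinct from every $(S,C)$ for which that permit was issued. Or an accepted sequence of transactions violates \eqref{eq:conservation}. I will show that any first-kind win yields either an Ed25519 forgery or a SHA-256 collision, and that a second-kind win is impossible by the transactional semantics alone.

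\textbf{Step 1: signature event.} If $\sigma$, or a grant it references, verifies under a pinned issuer key but was never produced by honest issuance, the reduction $\mathcal B_{\rm sig}$ outputs it as a forgery. Domain separation between checkpoint, grant and permit messages ensures that a signature issued for one role is not a valid message in another. So a checkpoint signature cannot be promoted to a permit, even when a fixture reuses one key for both roles. I will state this as a short lemma on the signature-domain prefixes.

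\textbf{Step 2: collision event.} Otherwise $\sigma$ was honestly issued for some admitted $(S,C)$ and signs $R(S)$ and $h(C)$. The ledger recomputes $R(S')$ and $h(C')$ from the submitted bytes and finds them equal to these signed values. I then unwind \eqref{eq:scene-root} layer by layer:
\begin{itemize}
\item equality of the outer hashes gives either a collision or $B(S)=B(S')$;
\item injectivity of $\Enc$ then gives equal \texttt{tree\_root} and \texttt{size};
\item equal size fixes the RFC~9162 tree shape, so equal MTH either yields a collision at some internal node or leaf, found by descending the tree, or gives $L(S)=L(S')$;
\item the ordered-by-ID canonical list encoding being injective yields $S=S'$.
\end{itemize}
Contexts are handled identically. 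Hence $(S',C')\ne(S,C)$ forces a collision, which $\mathcal B_H$ outputs.

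For selective disclosure I use the extra hypothesis that the pinned expected root comes from a valid admitted manifest. Each inclusion proof is checked against both the wrapped root and the signed count. So an accepted record differing from the admitted one at that position gives a collision along its path. The size binding rules out the proof-shape reinterpretation discussed before the theorem.

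\textbf{Step 3: conservation.} This part is not cryptographic. Under one intact serialized ledger and \texttt{BEGIN IMMEDIATE}, accepted transactions are totally ordered, and each is all-or-nothing. Each debit uses $c_g(S_t)$ recomputed by correct cost enforcement from \eqref{eq:closure-cost}, which charges every rendered occurrence separately, so clones pay $m_o$. A transaction commits only if every balance remains nonnegative. Induction on the serial order gives the first equation of \eqref{eq:conservation}, and telescoping gives the sum bound. Spent-nonce recording prevents a permit from being replayed to obtain a second debit-free or double acceptance. The trusted clock and validity interval address expiry only.

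The main obstacle I anticipate is Step~2. The argument must make precise the claim that equality of recomputed roots implies equality of the committed objects. That requires being careful that the encoding is injective on the full manifest and not merely on individual records; otherwise reordering or a duplicate key could alias two scenes. It also requires the size field to be signed inside $B(S)$. I would rely on the canonical-encoding rejection rules (NFC only, no duplicate keys, ID ordering) to establish injectivity. If they fall short, I would add an explicit assumption that $\Enc$ on record lists is injective, which the theorem already grants.
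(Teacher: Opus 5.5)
Your proposal matches the paper's proof essentially step for step. An accepted but unissued permit or grant is a forgery; otherwise the signed root and context digest force a wrapper/tree or digest collision for any distinct scene or context; and conservation follows by induction over atomic, nonnegativity-checked committed transactions whose recomputed costs charge each cloned occurrence separately. The paper also notes two points you leave implicit and could state outright: grant-subject digests exclude altered source evidence snapshots (your scene-root argument already covers this), and initial balances equal the signed budgets (your telescoping step relies on this).
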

\begin{proof}
If an accepting permit body was not signed under a required pinned key, it is a
forgery. Otherwise the issued permit fixes the root, context digest, grant and
cost. Acceptance of a different committed record list under that root yields a
collision in the wrapper or domain-separated tree; acceptance of a different
encoded context under its digest yields a collision. Different source evidence
snapshots are similarly excluded by their grant-subject digests. The resource
claim follows by induction over committed transactions: initial balances equal
signed budgets; each atomic step checks nonnegative residual capacity, records
previously unused nonces, and subtracts the recomputed vector. Object duplication
adds occurrence terms in~\eqref{eq:closure-cost}. A failing source causes no state
transition, so interleavings cannot create a partial multi-source debit.
\end{proof}
This is a standard conditional reduction and invariant proof; it does not
establish the safety of arbitrary renderer implementations, input policies or
maliciously issued endorsements.

\paragraph{State and display limits.}
Ledger events form a domain-separated hash chain. Receipts are \emph{not}
ledger-signed checkpoints: public verification of committed spending requires
an independently trusted pinned head or trusted ledger access. Auditing replays
signed grants and permits and checks balances, nonces and the chain; recomputing
scene costs also needs archived manifests. Database integrity and a single
authoritative state are explicit assumptions; hashes alone provide no consensus,
freshness or split-view prevention. Two disconnected offline verifiers starting
with the same unspent state can each accept a first spend, so universal offline
prevention is absent. The browser workbench marks edits as drafts and never
claims to verify endorsement. Camera controls are exploratory projections of
committed geometry; exact pixels, user attention and display frequency are not
authenticated.

\section{From symbolic envelopes to actual digital evidence}
\label{sec:realdata}
The empirical inputs include original museum photographs, an externally
acquired photogrammetric surface, earlier Anatolian source models and tables,
and archaeological-genetic supplementary records. Simple monument envelopes remain only a small,
fully transparent regression fixture. No image or surface in the corpus is
AI-generated, and no learned depth estimator is used.

\subsection{Corpus, archaeological reuse and source qualifications}
The image corpus contains 48 distinct museum objects: 33 from Cleveland and 15
from the Metropolitan Museum, with 48 different original-image hashes. Museum
catalogue attribution groups 32 as Egyptian and 16 as Sudan/Nubia. The selection
covers 11 investigator-defined material/form domains, including inscribed stelae,
reliefs, stone sculpture and tools, wooden figures and coffins, pottery, glass,
faience, textile, beadwork and metalwork. These are heterogeneous source objects,
not 48 independent excavations or a representative cultural sample. Every
bundled web JPEG has explicit CC0/public-domain support in its museum record
under the museums' open-access programs~\cite{cmaopen,metopen}; the 48 original
files total 11,151,616 bytes. Separate metadata snapshots retain accession,
medium, date, provenance, dimensions, bibliography, rights and exact file hashes.

Twelve Sudan/Nubia objects have Faras as their catalogued site. Four others
retain qualified or broad attribution: the Napatan pendant's findspot is
unspecified, the Taharqo shabti is probably from Nuri, a ram's-head amulet is
possibly Nubian, and the kneeling king has a regional Egypt/Sudan attribution.
No cultural label is promoted into a certain excavation location.

Two actual objects provide the relevant historical depth. Cleveland's pendant
1987.1 is described as a lion-head component, probably Egyptian and most likely
an earlier gaming piece, adapted as a Napatan amulet~\cite{cmapendant}. Its page
and metadata retain the museum's differing chronological descriptions rather
than resolve them by guesswork. The Met's kneeling Kushite king, 2002.8, is
catalogued with later reduction of the double uraei and erasure of neck amulets
and the royal name~\cite{metkushite}. These attributed interpretations document
material reuse and selective alteration. They motivate source, transformation
and qualification records; they do not establish the motives of every ancient
viewer or a universal causal law connecting ancient and modern media.

\subsection{Actual photogrammetry and computed surface depth}
The SEAMS dataset for World Museum Liverpool 56.22.225, a wrapped votive cat mummy
from Thebes, provides the original OBJ, material, texture and README under
CC BY 4.0~\cite{pubblico2026}. Our parser finds 133,414 vertices and 266,844
triangles. The source reports 181 aligned cameras. The actual four-file deposit
does not include the raw camera images or project despite its generic README
listing them, so we reproduce \emph{downstream analysis}, not original SfM.
The source units are reported as metres, but focal-length descriptions disagree
(50 versus 55 mm), the README reports GCP RMSE 0.185771 m with no check points,
and we do not independently validate physical scale.

The original coordinates and connectivity are retained. An edge-incidence
analysis finds no boundary or nonmanifold edges and five connected components;
these are computational properties of this mesh, not evidence that the physical
object has five independent parts. PCA supplies a recorded exploratory frame.
For each of its three axes and both viewing directions we project actual
vertices into a $512\times512$ orthographic grid and retain the nearest sampled
depth, with a $3\times3$ square splat and an explicit occupancy mask. Let $q_i$
be the rounded projected pixel coordinate of vertex $v_i$, and let $u$ be the
signed unit viewing axis. In symbols,
\begin{equation}
 \begin{aligned}
 \mathcal I_{ab}&=\{i:\|q_i-(a,b)\|_\infty\leq1\},\\
 d_{ab}&=\max_j u^T(v_j-\mu)
         -\max_{i\in\mathcal I_{ab}}u^T(v_i-\mu),\\
 M_{ab}&=\mathbf{1}[\mathcal I_{ab}\ne\varnothing].
 \end{aligned}
 \label{eq:pointdepth}
\end{equation}
The depth expression applies only when $M_{ab}=1$; otherwise the stored depth
is NaN. The global-minus-local maximum gives the nearest sampled distance from
the view-relative origin in either viewing direction. Empty pixels
include both background and sampling gaps; they are not automatically scan holes.
These are computed views of a reconstructed surface, not sensor depth, dense
triangle rasterization, CT, internal anatomy or calibrated metric measurements.

A deterministic 2,000-vertex sample supports covariance analysis over 32-point
neighborhoods, including the query vertex.
For covariance eigenvalues $\lambda_{i,1}\leq\lambda_{i,2}\leq\lambda_{i,3}$,
we report the dimensionless local roughness proxy
\begin{equation}
 r_i=\frac{\lambda_{i,1}}{\lambda_{i,1}+\lambda_{i,2}+\lambda_{i,3}},
 \qquad L=I-D^{-1/2}WD^{-1/2}.
 \label{eq:surface-descriptors}
\end{equation}
Here $L$ is the normalized Laplacian of a symmetrized eight-nearest-neighbor sample graph
with Gaussian weights; its first 32 computed eigenvalues are archived. The
roughness, unsigned neighboring-normal variation and spectrum depend on vertex
density, sampling, bandwidth and reconstruction. They are not calibrated
curvature, dating evidence or cryptographically unique shape identifiers.

\subsection{Image analysis without invented depth}
Each photograph is analyzed at long-side limits 128, 256 and 512 pixels using
grayscale entropy, Sobel-gradient summaries, smoothed structure-tensor coherence
and high-frequency Fourier energy. All 144 rows retain parent object IDs and
analysis scales. They describe image texture and edge structure; an RGB image
alone supplies no metric depth in this method. As a concrete information-loss
control, reversing the decoded pixel order preserves every channel histogram
for all 48 objects while changing all 48 exact RGB hashes. This demonstrates
why visual descriptors cannot replace the authenticated spatial representation.
The observation is algebraic and corpus-checked, not a trained authenticity classifier.

\begin{figure}[t]
\centering\includegraphics[width=\linewidth]{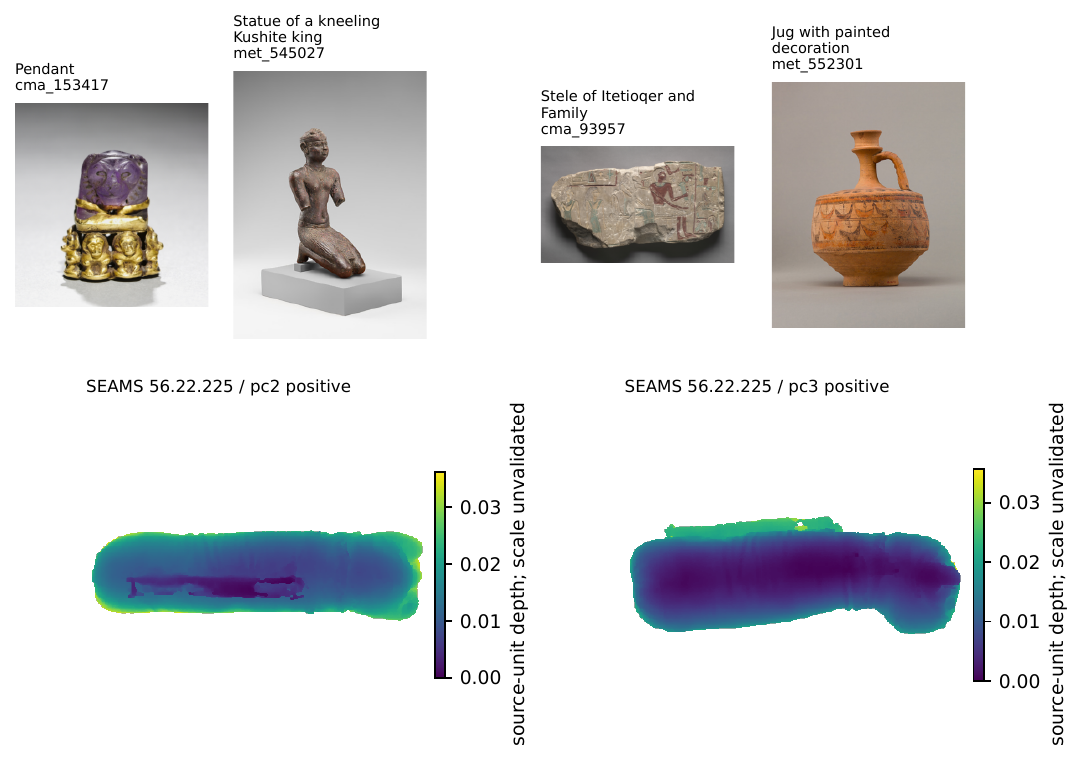}
\caption{Actual source objects and computed surface views. Upper row: Napatan
pendant, kneeling Kushite king, Egyptian stela and Faras jug (Cleveland and Met,
CC0; catalogue IDs identify full source records). Lower row: two orthographic
point-depth projections of SEAMS 56.22.225 (Pubblico / National Museums Liverpool,
CC BY 4.0), derived with new cameras and masked gaps. These are surface-model
projections, not measured depth photographs or validated physical scale.}
\label{fig:material-evidence}
\end{figure}

\section{Cryptographic region extraction over real pixels and geometry}
\label{sec:asset-security}
The fixed-scene protocol is extended by two typed evidence layers. Whole-file
hashes identify original inputs; structured commitments verify the \emph{actual
disclosed content}, its selection coordinates and mandatory qualifications.
No institutional signature is fabricated: every experiment uses an explicitly
public local test key. Its private seed is also published for reproducibility:
these fixtures test implementation, not unforgeability against a reader who
knows that seed. Real issuance requires fresh protected private keys.

\subsection{Image tiles, scale and off-crop context}
The image manifest signs original-file and decoded-RGB hashes, dimensions,
decoder and resampling versions, exact rational scale maps, source identity,
scene/context bindings, calibration status, evidence graph and a count-bound
tile commitment. The reference pipeline uses EXIF orientation, RGB conversion
without an ICC color transform, and sequential BOX half-sampling with signed
level dimensions. Tiles are $64\times64$ except at boundaries. A leaf binds
its level, row, column, bounding box, source and RGB-buffer digest.

An independently fixed policy specifies a half-open crop, level, admitted
manifest, required evidence and mandatory tile IDs. Supplemental tiles may lie
outside the visible crop: for example, context needed to interpret a local
detail can remain mandatory. A proof supplies actual RGB tile buffers and paths.
The verifier checks the signature, expected policy, count-bound tree, coordinates,
pixel hashes and complete evidence closure, reconstructs the crop from verified
tiles and requires equality to the supplied crop pixels. An optional original-file
check also verifies the encoded file, decoded base image and actual derived
levels. Rebuilding expected policy from the attacker's packet is forbidden.

The scene first commits original asset hashes, then its final root is bound by
the signed asset manifest. Committing that whole asset manifest back into the
same scene would create a dependency cycle and is not the construction. Byte
integrity, decoded-pixel equivalence and perceptual similarity are separate
policies; this prototype enforces exact equality within the declared pipeline.

\subsection{Triangle blocks and vertex dependency closure}
The mesh layer commits little-endian float64 vertex buffers and uint32 triangle
buffers in blocks of 1,024 vertices and 2,048 triangles. A signed, count-bound
tree authenticates 131 vertex blocks, 131 triangle blocks and one metadata header.
That header binds the original OBJ hash, units, calibration caveats, analysis
frame and context. A disclosure opens exactly the independently selected triangle
blocks plus \emph{all vertex blocks referenced by those triangles}. Missing or
extraneous blocks, changed coordinates, rewired indices and altered metadata fail.
This is reference closure, not a claim that all geometrically adjacent faces
or all historical evidence have been disclosed.

\begin{proposition}[Conditional regional extraction integrity]
Fix the trusted issuer key, expected admitted root/manifest, selection and
qualification policy before receiving a packet. Under signature unforgeability,
hash collision resistance, injective descriptors and correct deterministic
decoding, a verifier cannot accept changed disclosed pixels or mesh buffers,
different selection coordinates, or omitted mandatory dependencies as that
authorized extraction except by forgery or hash collision.
\end{proposition}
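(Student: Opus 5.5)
The argument is a conditional reduction in the same style as the scene-integrity theorem and~\eqref{eq:revision-security}. Fix a verifier run that accepts a packet $\pi$ against the pre-pinned key, admitted manifest root, selection and qualification policy. Suppose the accepted content differs from the authorized extraction, meaning changed pixels or mesh buffers, different coordinates, or a missing mandatory dependency. We then extract either a signature forgery or a SHA-256 collision. The first split is on the manifest signature. If the signed manifest body in $\pi$ was never issued under the pinned key, the packet itself is a forgery; this gives an EUF-CMA adversary that simulates issuance with a signing oracle. Otherwise the manifest body is the honestly issued one. It therefore fixes the original-file and decoded-RGB hashes, the dimensions, the decoder/resampling versions, the scale maps, the evidence graph and the count-bound tree root together with its size.

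Given the honest manifest, we walk the verifier's checks in order. First, each opened tile or block leaf must hash along an accepted inclusion path to the signed root under the signed count. If the accepted leaf descriptor differs from the honest leaf at that position, somewhere along the path two distinct preimages share a digest. Injective descriptors and domain separation make that a collision in the leaf or internal hash. The size binding, as in~\eqref{eq:scene-root}, excludes the proof-shape reinterpretation that would otherwise move a leaf without a collision. Second, each leaf binds its level, row, column, bounding box and buffer digest. An accepted RGB buffer or float64/uint32 buffer different from the honest one therefore gives a collision on the buffer digest, and altered coordinates contradict the authenticated leaf fields. Third, the crop is rebuilt deterministically from verified tiles and compared for equality. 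Correct deterministic decoding therefore implies that the displayed crop equals the honest crop. With the optional original-file check, the same reasoning extends to the encoded file and the derived levels.

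The main obstacle is \emph{omitted mandatory dependencies}. Here the risk is not a changed value but a missing one, which no single hash comparison catches. For images, I would argue that the policy is fixed externally before the packet arrives and lists its mandatory tile IDs explicitly; the verifier refuses to rebuild the policy from the packet. Acceptance requires an authenticated opening for every listed ID, so an omission is rejected outright, and substituting another leaf under a mandatory ID reduces to the leaf-binding case. For meshes, the required vertex set is a function of the authenticated triangle buffers of the selected blocks. By the previous step those buffers are honest except with collision probability, so the computed set of referenced vertex blocks is the honest closure. The verifier's exact-set check (nothing missing, nothing extraneous) then forces every dependency to be opened and authenticated. The evidence closure is handled the same way, using the authenticated adjacency lists as in the disclosure argument of Section~\ref{sec:crypto}.

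A union bound over the forgery event and the collision event gives the stated conclusion. The proof should also note what is left out: geometric adjacency beyond reference closure, perceptual similarity, and the truth of the qualifications are outside the claim.
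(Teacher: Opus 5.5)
Most of your decomposition matches the paper's proof:
\begin{itemize}
\item an unissued body gives a forgery;
\item a changed opened descriptor gives a tree/wrapper collision;
\item changed bytes give a buffer-hash collision;
\item required tile and vertex-block IDs are recomputed from the fixed selection and authenticated data, with exact set equality.
\end{itemize}

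The gap is in your first case split. From ``issued under the pinned key'' you conclude that the body is \emph{the} honestly issued manifest. The issuer legitimately signs many manifests under that key: other assets, other levels, alternate contexts and different scenes. These are exactly the transplantation packets the proposition must exclude. Unforgeability says nothing against them, and your later checks consult only the tile/block tree root. An honestly signed alternate-context manifest over the same tile tree would therefore pass every check you list, without being ``that authorized extraction.''

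The missing step, which the paper states explicitly, compares the received manifest's canonical digest and root with the values in the independently fixed policy. An honestly issued alternative then fails unless two distinct canonical manifests collide; this is where injective descriptors are needed. An identical canonical binding is simply the same admitted commitment. Only after this step may you treat the manifest's hashes, scale maps, context binding and tree root as the expected ones.

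A smaller overstatement concerns the crop check. It shows only that the \emph{supplied} crop pixels equal a deterministic reassembly of verified tile buffers. Decoding correctness enters only through the optional original-file check. Nothing is established about a displayed crop. For meshes, the guarantee stops at authenticated buffers unless a renderer consumes them.
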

\begin{proof}
An accepted signed body outside the honest issuer's issuance record gives a
forgery. An honestly issued alternative with a different expected manifest
digest or root fails the fixed policy unless a hash collision occurs; an
identical canonical binding is the same admitted commitment. Under that
commitment, changing an opened
descriptor without changing its root yields a tree/wrapper collision. Changing
buffer bytes while retaining the authenticated digest yields a buffer-hash
collision. Exact selection comparison and deterministic grid/block indexing
exclude positional or scale relabeling. Finally, required tile IDs or vertex-block
IDs are recomputed from the fixed selection and authenticated dependency data;
the verifier demands set equality, so omission is rejected rather than inferred
from absent unauthenticated records. The image verifier additionally reconstructs
the crop from checked buffers and compares the supplied pixel output. The mesh
checker authenticates disclosed buffers; extending that guarantee to a visible
surface requires a renderer to consume those checked buffers.
\end{proof}
These proofs concern exact evidence extraction. They provide no confidentiality,
capture authenticity, robustness to benign edits, historical truth, institutional
consent or stateful spending guarantee on their own. Spending remains the
separate serialized authorization layer; an archived membership proof is reusable.

\subsection{Real-corpus authorization and attack workload}
Across the 48 image objects, three resolution levels produce 9,103 authenticated
tiles. Four region selections per level give 576 actual-image controls, all
accepted with original-file checking. Nine mutation families generate 5,184
unauthorized packets, all rejected: crop and tile pixel edits, coordinate and
source relabeling, crop-policy transplantation, evidence/qualifier removal,
mandatory tile omission, context editing and wrong-key checks. Their case-level
records identify the parent object, crop, level, proof size and outcome; they are
constructed tests of exact rules, not independent samples from real attackers.
Additional valid-proof transplantation experiments isolate selection binding
from mere rejection of malformed signatures. Across three real region selections
per image, 144 original controls and 144 honestly signed alternate-context controls
are accepted under their own policies. Under the original fixed policy and key,
864 substitutions are rejected: valid other-asset or other-level proofs, the
wrong actual source file, foreign-key signatures, honestly signed different-scene
claims, and histogram-preserving swaps of distinct $8\times8$ image patches.
All 144 patch pairs differ spatially while preserving their joint RGB histogram.
These cases specifically show that a valid signature or preserved photometric
summary does not authorize a different extraction.

On the real mesh, 16 actual triangle-block selections are accepted; 160 buffer,
topology, dependency, context and calibration mutations plus three external
selection/metadata transplants are rejected. Block closure reveals a practical
cost: original indexing scatters selected triangles' vertices across many blocks.
This motivates the explicit layout experiment below.

The capacity equations also apply to the 49-object image/SEAMS corpus without
pretending that distinct artifacts are historical substitutes. Each occurrence
requires both a snapshot-admission role and an interpretation-review role.
One declared review round uses counts $(33,15,1,49)$ for Cleveland-snapshot,
Met-snapshot, SEAMS-snapshot and interpretation review. All are \emph{local study
resource counters}, not museum-issued endorsement rights. Twelve budget cases
agree with the fixed-witness capacity equation and integer optimization. At
one round per role, keeping the conjunction gives capacity one; erasing it
gives two. The corpus demonstrates the implemented accounting semantics; the
unbounded-family theorem remains a mathematical construction.

\subsection{Engineering the closure cost without changing geometry}
For an original spatial target face set $T$, a layout induces its covering
triangle blocks $\mathcal B_F(T)$ and the vertex-block dependency closure
$\mathcal B_V(T)$. The base64 buffer contribution to proof size, before JSON
framing and other fields, is exactly
\begin{equation}
 B_{\rm buffers}(T)=\frac43\left(
 12\sum_{F\in\mathcal B_F(T)}|F|+
 24\sum_{V\in\mathcal B_V(T)}|V|\right),
 \label{eq:proof-cost}
\end{equation}
with descriptors, metadata, authentication paths and JSON framing counted
separately. Twelve bytes encode one uint32-indexed triangle and 24 encode one
float64 vertex; both raw lengths are multiples of three, so these base64 buffers
need no padding. The security rule
fixes complete reference closure; ordering controls how much unrelated material
the blocks additionally reveal. Reducing that cost does not justify deleting
required dependencies.

We compare original indexing with deterministic Morton ordering~\cite{morton1966} of vertices and
triangle centroids, using 21 interleaved bits per coordinate for ordering only.
Original coordinate bits, connectivity and winding are unchanged. Signed layout
metadata commits archived bijective original-ID maps. Sixteen nonempty cells of
a $4\times4\times4$ original-coordinate centroid grid supply identical target
face-ID sets for both layouts. The complete-map experiment verifies the target
predicate and exact geometry equivalence; the lightweight disclosure verifier
checks pinned block covers and vertex closure, not a succinct proof of the
geometric centroid predicate.

Median packet size falls from 7,417,431 to 842,247 bytes; the median of paired
reduction factors is $9.13$. Median referenced vertex-block count falls from
131 to 20.5, and median triangle overfetch from 36.37 to 1.55 times the requested
face count. Original-ID maps and ROI target IDs are shared one-time data excluded
from these per-packet sizes and retained in the archive. This is a measured
layout improvement on one real mesh, not a universal compression bound, runtime
speedup or new cryptographic primitive.

\begin{figure}[t]
\centering\includegraphics[width=\linewidth]{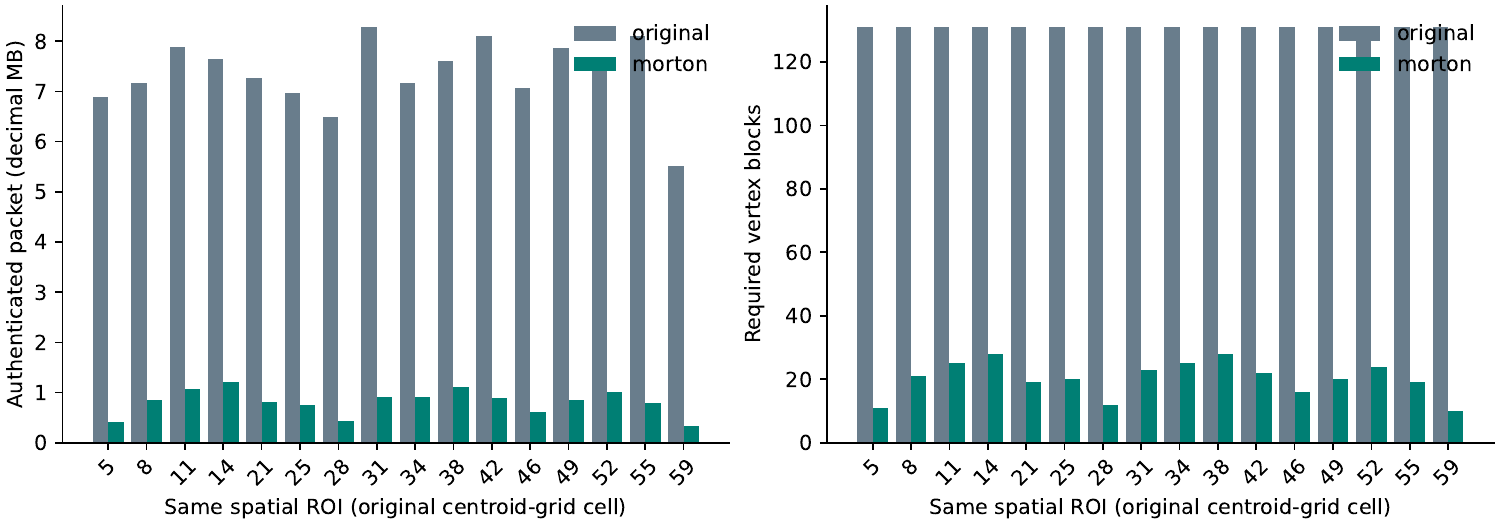}
\caption{Same-region extraction on the actual photogrammetric mesh before and
after spatial reindexing. All coordinate bits and triangle incidence remain
identical under archived ID maps. Reduction concerns complete serialized proof
packets; one-time maps are an additional shared cost.}
\label{fig:mesh-locality}
\end{figure}

\section{Earlier archaeology, structured models and qualified social claims}
\label{sec:early}
The expanded inputs distinguish geometry, excavation-related tables, a formal
ontology and genomic/osteological classifications. Each supports a different
kind of reconstruction. Their record counts cannot be summed into a population
sample, and physical/model resolution does not establish interpretive certainty.

\subsection{G\"obekli Tepe beyond monument outlines}
Dietrich et al.'s early-Neolithic study concerns cereal processing at G\"obekli
Tepe in the tenth/ninth millennia BCE~\cite{dietrich2019}. We retain all twelve
original supplementary workbooks unchanged, with source URLs, file hashes and
CC BY 4.0 attribution. Read-only XML extraction preserves worksheet/cell
coordinates, original numeric lexemes, formula caches and explicit missing
categories. The workbooks contain 3,374 nonempty source cells. The S1 aggregate
is 7,268 tools: 3,351 handstones, 1,945 grinding bowls/slabs and 1,972 pestles.
This aggregate differs from the 821 distinct find numbers in detailed inventory
subsets; neither is a count of raw tool scans. Overlapping subsets are not added.

The phytolith table supplies 48 morphotypes across nine samples, yielding 432
actual count cells. Independent column sums reproduce all nine published totals,
from 290 to 1,210. Figure~\ref{fig:source-conditioned} exposes the full matrix,
including recorded zeroes, instead of substituting a schematic monument.
These observations support a source-specific analysis of plant processing;
the tables do not record the operators' sex, gender or authority. Any gendered
labor claim would need an additional, admitted inference and evidence.

The sources also expose an unresolved join: S10 worksheet 1, K3 records
\texttt{M12-133}, whereas S12 A1212 records \texttt{M13-133}. A second comparison
uses \texttt{M11-133} in S8 A8 and \texttt{M13-133} in S6 A7. The strings remain
unchanged. A valid digest on both records cannot establish that either pair
refers to the same specimen. An alias relation must be a separate qualified
interpretation record; strict equality cannot be replaced by an unaudited
normalization rule. This is a real example of the boundary between source
integrity and admissible inference.

Lourentzaki, Papadopoulou and Roche's G\"obekli ontoterminology supplies a
machine-readable model of pillar finds and their named relations~\cite{lourentzaki2024}.
The original 274,754-byte RDF is retained without alteration. Parsing its RDF/XML
syntax yields 3,732 unique triples, 575 subjects and 162 explicitly declared
named individuals. These syntax counts are different from the repository's
Prot\'eg\'e axiom/individual metrics; anonymous nodes and reasoning conventions
must not be conflated. The ontology is an authored conceptual model, not raw
excavation geometry or complete world knowledge. Its DOI registry specifies
CC BY-NC-ND 4.0; the ancillary license explicitly isolates this unchanged source
from the other data licenses. We distribute syntax statistics and inspection
code rather than a modified complete ontology.

\subsection{Original \c{C}atalh\"oy\"uk artifact models}
The Perry--Lara--Palma deposit provides thirteen artifact groups and fourteen
original mesh files, including two variants of one skull figurine~\cite{perry2020}.
Its twelve OBJ members and two binary FBX geometries retain exact publisher
bytes, with reversible storage recipes for OBJ and gzip wrappers for FBX:
decompression recovers exact original archive members. The complete collection
has 495,152 stored vertices and 628,600 original polygons, which yield 990,248
explicit fan-triangles. Excluding the additional skull variant gives thirteen
groups, 490,985 vertices and 981,918 triangles. These are source-published
\emph{printing-ready processed models}; acquisition uncertainty, physical units
and site georeferencing are unreported. Their dense meshes are not evidence of
raw-scan fidelity or metric accuracy.

The analysis preserves the original polygons, coordinate bytes and source
hashes, records the fan triangulation, and checks boundary/nonmanifold edges,
components and zero-area faces. Flint and horse each contain one exactly
degenerate fan-triangle and respectively one and two nonmanifold edges; the
stamp has three boundary edges and two nonmanifold edges. These defects are
reported without repair. An analytical triangulation does not certify convexity,
planarity or absence of self-intersections in every source polygon. FBX local
geometry is kept distinct from scene-node transforms.

Vertex-weighted PCA supplies documented exploratory cameras and independent
per-object display scales. Three 256-pixel point-splat projections per model
produce 42 computed views with explicit occupancy masks. We also retain
scale-free edge-length, area and radial quantiles, PCA shape fractions and
component-level summaries. Model-level aggregation gives each of thirteen
artifact groups equal weight; it does not treat vertices as independent samples.
The median first-PC variance fraction is 0.6791 (range 0.3486--0.9936).
These describe the supplied representations rather than historical function.
Figure~\ref{fig:early-cohort} shows the complete collection. The source filename
\texttt{mother-goddess} remains a quoted cataloguing interpretation, not a finding
about the depicted person's identity or social system.

\begin{figure}[p]
\centering\includegraphics[width=\linewidth]{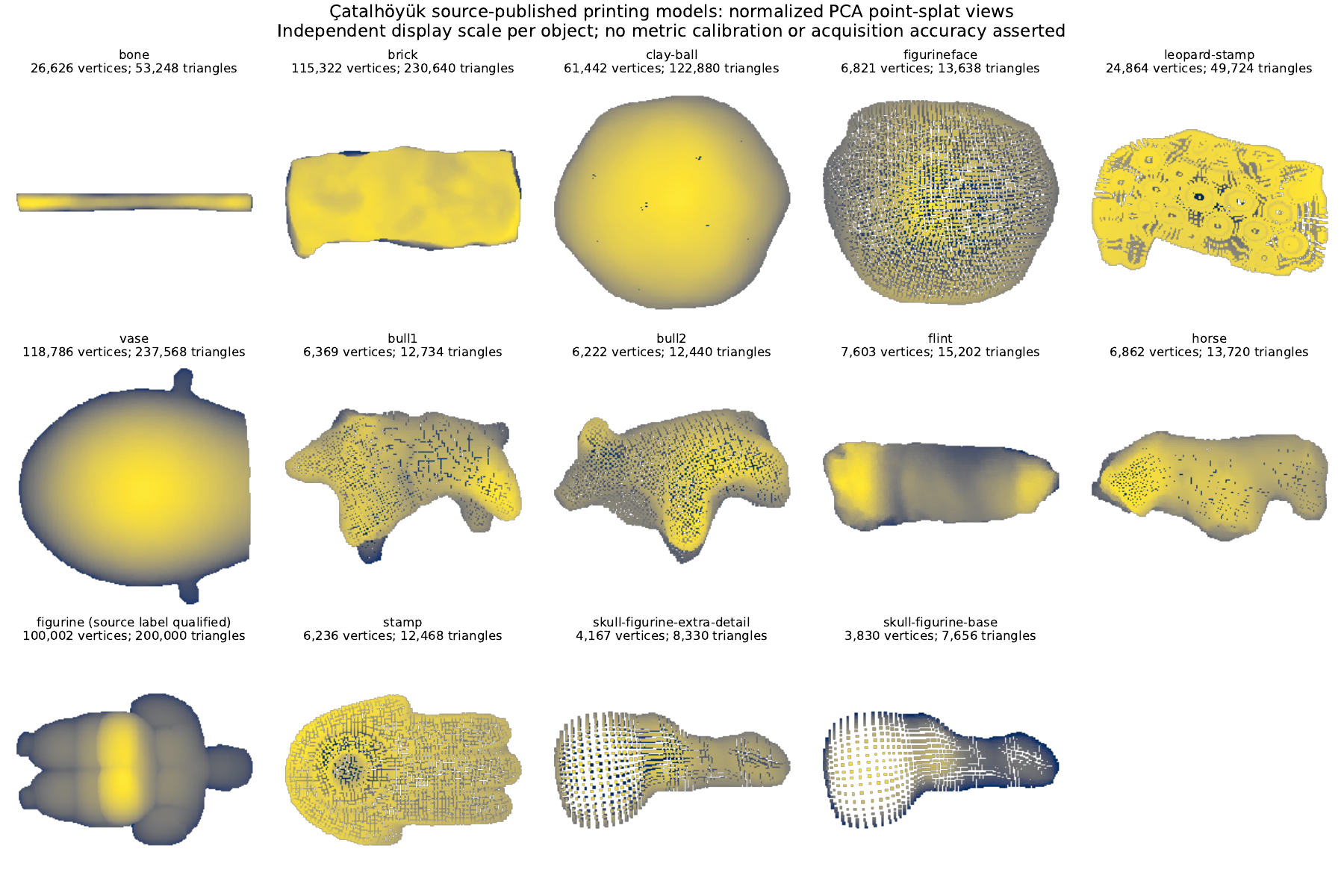}
\caption{All fourteen original \c{C}atalh\"oy\"uk model files, including two skull
variants (Perry, Lara and Palma, CC BY 4.0). Computed PCA point-splat views use
independent display scales and explicit sampling gaps. Labels report source
model complexity; neither color nor mesh density encodes acquisition confidence.
No physical calibration or gender-role inference follows from this display.}
\label{fig:early-cohort}
\end{figure}

\subsection{Social interpretation as an evidence obligation}
Archaeology can test simple accounts of past social organization, but the tested
variables matter. Y\"unc\"u et al. analyze 131 East-Mound \c{C}atalh\"oy\"uk
paleogenomes and report changing kinship patterns, maternal connections and
sex-associated differences in subadult gifts~\cite{yuncu2025}. Their analyses
provide no indication of sex-biased mobility. Co-burial, maternal relatedness,
mobility, gender and political authority remain separate propositions. The
2026 v4 data record warns that earlier relatedness-simulation files were
incorrect~\cite{cataldata2026}, making version-pinned interpretation especially
relevant. The large chromosome-list archives are not burial/gender tables.
We did not acquire the paper's original S1--S28 individual tables and therefore
report this as literature evidence, without claiming a reproduced 131-person
analysis or bundling unlicensed sequence data.

For an actual larger social-data fixture, the artifact retains Cassidy et al.'s
original CC BY 4.0 supplementary workbook~\cite{cassidy2025}. It covers Iron Age
Britain and European comparison samples; it is not an Anatolian substitute.
Six source-coordinate-preserving extracts contain 2,405 comparative sample rows
with 2,404 distinct source IDs, 1,485 complete WBK pair classifications
($\binom{55}{2}$), 386 selected within/between-site relationships, 156 site/group
aggregates and two 57-individual tables. The duplicated source ID
\texttt{StPet2\_MIX} is retained at both original rows. The selected 386
relationships are not a complete population-wide matrix; an unreported pair is
not a measured unrelated pair.

The 57-individual summary records 24 \texttt{XX} and 33 \texttt{XY} molecular
classifications. Osteological source categories are six \texttt{Female}, five
\texttt{Male}, one \texttt{Male?} and 45 \texttt{-}. All 45 unreported labels
remain visible. One \texttt{XX} record carries an osteological \texttt{Male}
label and another \texttt{Male?}; these are source classification differences
requiring contextual review, not discoveries of gender identity. Neither
matrilocality nor a genetic classification directly establishes legal office,
political control or universal matriarchy. The method allows such an interpretation
to be represented and challenged by explicit premises, while preventing its
presentation as an already measured source field.

\begin{figure}[p]
\centering\includegraphics[width=\linewidth]{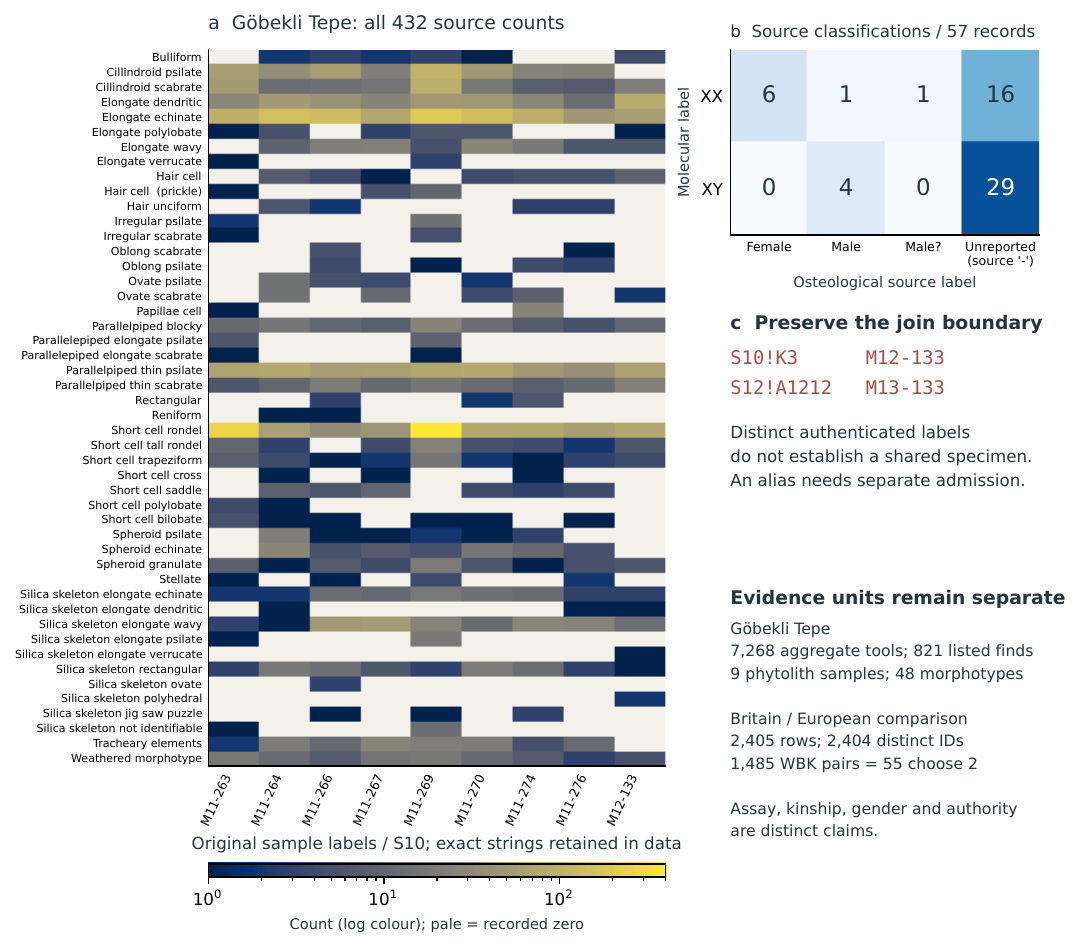}
\caption{Source-conditioned quantitative evidence. (a) All 432 G\"obekli phytolith
counts; logarithmic color separates positive counts from recorded zero. All nine
column totals match the source. (b) Original molecular/osteological categories
for 57 records in the separate British dataset, with missingness retained.
(c) An unresolved cross-table identifier difference. Authentication establishes
the recorded values, while a join or social interpretation needs its own rule.}
\label{fig:source-conditioned}
\end{figure}

\subsection{A 2026-informed, offline research interface}
The accompanying atlas links exact source cells, the phytolith matrix,
model projections, relationship classifications and a provenance/detail pane.
Selecting an item reveals its original coordinate or file hash and the applicable
qualifications. Separate acquisition, transformation and interpretation records
keep a computed projection distinct from a measurement. Detail is progressively
revealed without hiding the immediate numeric view or requiring motion.

This design draws on actual 2026 work: Wiebel, Brender and Eckardt's linked
annotation/graph/attribute views for Roman millefiori artifacts~\cite{wiebel2026},
and LLUVIA's staged uncertainty and level-of-detail workflow~\cite{luque2026}.
Linked views and uncertainty layers are established visual-analysis techniques;
the atlas adds source-coordinate bindings to this paper's verification workflow.
It uses local HTML/DOM/SVG and original data, without a learned embedding or
external runtime. These references guide interface construction, not a claim
that their user-study effects transfer to archaeology or that this atlas has
been validated with participants. All published figures are reproducible static
scientific plots; interactive views are separate ancillary files.

\section{Regional archaeology as a source of comparison constraints}
\label{sec:regional}
The next research object is a qualified archaeological contrast. A comparison
can change even when every disclosed number is genuine: the selected deposits,
counting unit, taxon mapping, denominator or weighting may have changed. We
therefore extend the source investigation to Iran, Transjordan, Oman/UAE,
Lebanon and China, using original specimen exports and supplementary files.
Persian, Arabic and Chinese queries and their actual outcomes are recorded in
the artifact. Local-language institutional records supply useful excavation and
terminology checks; a search result is not represented as an acquired database.

The historical question is deliberately narrower than a universal explanation
of pig avoidance. Absence from routine human dairying does not imply absence
of economic value or identify a prohibition's cause. Price and Corcoran-Tadd
contrast the reproductive, feeding and mobility properties that make livestock
useful to households and institutions~\cite{pricecorcoran2026}; Price's
Mesopotamian synthesis also distinguishes dietary importance from administrative
visibility~\cite{priceswine2026}. At Xipo, isotope evidence supports intensive
pig husbandry within a cereal economy~\cite{zhangxipo2021}. These are substantial
antecedents, not new discoveries of this paper. Ecological costs, household
production, institutional procurement and culturally specified avoidance can
coexist. Their relative causal importance requires dated, context-compatible
evidence; a modern country's predominant religion is not a treatment variable
for a prehistoric assemblage.

\subsection{A source-specific archaeological hypothesis at Chogha Mish}
\label{sec:chogha}
The original Open Context export for Chogha Mish, southwestern Iran, contains
16,824 distinct specimen URIs and 59 flat fields~\cite{choghamish2010}.
The project's description of more than 30,000 historical specimens does not
become the size of this acquired table. Three source predicates share the label
\texttt{Number of fragments}; the flat columns have 16,681, 114 and 29 populated
rows respectively. Each row has exactly one such value. Source-specific
coalescing gives a recorded-field sum of 16,825; using only the first column omits 143 records.
The separately published 4,705-row Protoliterate table is wholly included in
the full export, which has 4,819 Protoliterate rows. They are overlapping releases,
not two independent samples.

For an exploratory comparison, fix the source phase \texttt{Protoliterate},
source contexts \texttt{Room floor} and \texttt{Occupational Debris}, and the
five exact taxon labels \texttt{Sus sp.}, \texttt{Bos taurus}, \texttt{Ovis/Capra},
\texttt{Ovis sp.} and \texttt{Capra sp.}. The numerator is the recorded-field total of
\texttt{Sus sp.}; the denominator counts these five selected labels. It does not count all excavated material, unique physical fragments, independent
animals or meat weight. The primary chapter states that jaws were recorded
repeatedly, one line per tooth, before merging in its published analysis
(p.451, footnote4); no validated merge/refit crosswalk is present here. The
12 floor and three debris Sus URIs are fifteen source records, not fifteen
independent bones or animals. For example, the maxilla record CGM12288 and
upper-right tooth records CGM12289--12293 cannot establish five additional jaws.
The fragment-field sum above must therefore not be promoted to a unique-piece
count~\cite{levtov2017}.
\texttt{Sus sp.} does not distinguish wild boar from domestic pig. The resulting
source-coded contrast is
\begin{equation}
 \widehat\Delta_{\rm recorded}
 =\frac{12}{95}-\frac{3}{271}=\frac{2967}{25745}\simeq0.11525.
 \label{eq:chogha-recorded}
\end{equation}
The preferred analysis excludes the source body-part group \texttt{Head}, leaving $5/61$ versus $2/164$,
with difference $349/5002\simeq0.06977$. Thus the direction is not confined to
repeated jaw/tooth records. The seven postcranial Sus URIs have no validated
individual/refit links and do not establish seven animals. Postcranial restriction
alone does not remove recovery,
transport or depositional effects. Figure~\ref{fig:regional-discovery} displays
both the source contrast and the exact anatomical strata.

\begin{figure}[t]
\centering\includegraphics[width=\linewidth]{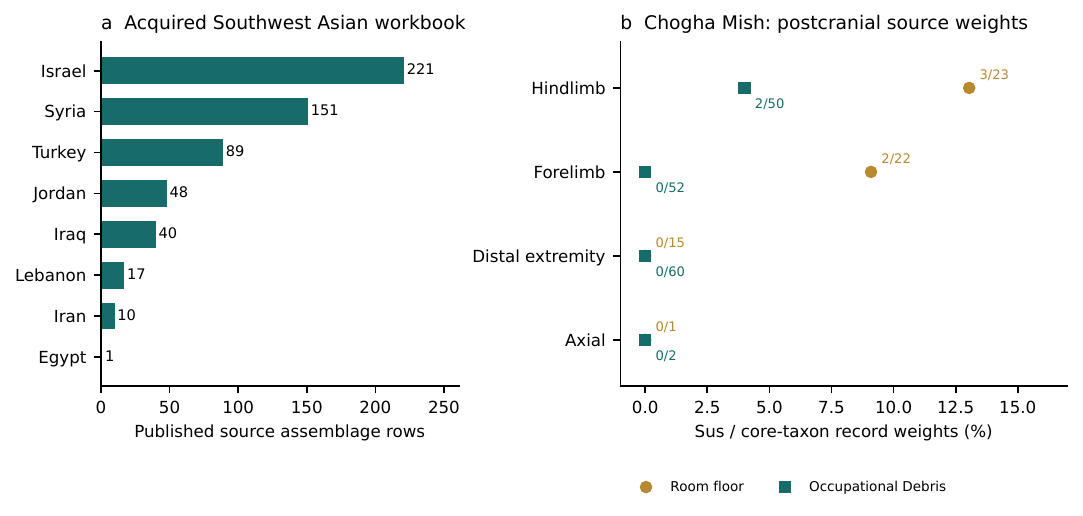}
\caption{The left panel inventories the original regional workbook's source rows
by country label. The right panel shows the exploratory source-coded Chogha Mish
comparison. Exact taxon, phase, context and fragment-field rules define the
denominator. Anatomical strata
expose the support of the floor/debris contrast; the values count recorded-field weights; head records can repeat a jaw once per tooth. The
postcranial comparison is primary, and neither comparison counts independent animals. The proposed archaeological test requires
matched dating, depositional review and controlled recovery.}
\label{fig:regional-discovery}
\end{figure}

The prospective hypothesis is that \emph{Sus enrichment of floor-context
assemblages persists within independently dated coeval contexts after matched
recovery/sieving and restriction to comparable postcranial material}. It would
be weakened or refuted if matched recovery or revised context assignment removes
or reverses the contrast. Persistence would challenge a purely context-mixture
explanation but would still leave animal transport, disposal and household
management as alternatives. Useful new observations include screened sediment
volumes, recovery mesh sizes, refitting fragments, element density, cut marks,
direct phase associations and osteometric wild/domestic evidence.

This is hypothesis generation after inspection of the existing source. The independent source-transfer audit below tests its scope of transport;
no independent confirmation of the exact floor-use hypothesis is established. The
source phase has an editor-assigned range of 3600--3200 BCE; pooled legacy
context types are not demonstrated individual households or exactly simultaneous
deposits. All fifteen selected Sus records share a phase-level context URI;
it is not a unique room/deposit identifier. Their editor-assigned dates establish
phase compatibility, not coeval deposition. The primary chapter reports no
sieving on p.446; mesh size, screened sediment volume and individual deposit
associations cannot be reconstructed from this export. Fragments can derive from the same bone or animal, so an independence-based
significance test would overstate the evidence. The original 2017 study already
reports broad changes in pig representation and explicitly sets aside more
specific spatial analysis~\cite{levtov2017}. The focused context/body-part contrast
was not located in the checked chapter, but that check is not an exhaustive
priority proof. Its warning that Sus may include wild boar remains applicable.

\subsection{The shared weighting constraint exposed by anatomy}
\label{sec:shared-weights}
Let $a_h$ and $b_h$ be the two recorded Sus shares in a common anatomical stratum
$h$, and write $\delta_h=a_h-b_h$. A comparison standardized to one declared
anatomical distribution uses \emph{the same} weights on both contexts:
\begin{equation}
 \Delta(w)=\sum_{h=1}^{H}w_h(a_h-b_h),\qquad
 w\in\mathcal S_H=\{w\geq0:\textstyle\sum_h w_h=1\}.
 \label{eq:shared-comparison}
\end{equation}
These weights define an estimand. Equal stratum weights do not estimate an
unknown ancient carcass distribution: the preferred four-stratum analysis
assigns 25\% to an axial stratum supported by only one versus two recorded fields.

\begin{proposition}[Common-weight comparison bounds]
For fixed nonempty matched strata,
\begin{equation}
 \inf_{w\in\mathcal S_H}\Delta(w)=\min_h\delta_h,
 \qquad \sup_{w\in\mathcal S_H}\Delta(w)=\max_h\delta_h.
 \label{eq:common-bounds}
\end{equation}
If the shared-weight restriction is erased and independent simplex weights
$u,v$ are permitted, the lower bound becomes
\begin{equation}
 \inf_{u,v\in\mathcal S_H}
 \left(\sum_hu_ha_h-\sum_hv_hb_h\right)
 =\min_h a_h-\max_h b_h.
 \label{eq:uncoupled-bounds}
\end{equation}
\end{proposition}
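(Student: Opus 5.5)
The argument uses only one fact: a linear function on the probability simplex $\mathcal S_H$ attains its extreme values at vertices. The simplex is compact and its vertices are the unit vectors $e_1,\ldots,e_H$. Since $H$ is finite and the strata are nonempty, every $a_h,b_h$ is a well-defined recorded share, so both the infima and the suprema are attained minima and maxima.

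For \eqref{eq:common-bounds}, write $\Delta(w)=\sum_h w_h\delta_h$. For any $w\in\mathcal S_H$, nonnegativity and $\sum_h w_h=1$ give
\[
\min_h\delta_h\le\Delta(w)\le\max_h\delta_h .
\]
Now choose $h^\ast\in\arg\min_h\delta_h$ and put $w=e_{h^\ast}$. Then $\Delta(e_{h^\ast})=\delta_{h^\ast}$, so the lower bound is attained. Taking a maximizing index in the same way attains the upper bound.

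For \eqref{eq:uncoupled-bounds}, the objective is $\sum_h u_ha_h-\sum_h v_hb_h$. Because $u$ and $v$ now range independently over $\mathcal S_H$, the infimum splits into the sum of two separate optimizations:
\[
\inf_{u}\sum_h u_ha_h+\inf_{v}\Bigl(-\sum_h v_hb_h\Bigr)=\min_h a_h-\max_h b_h .
\]
Each piece follows from the same convex-combination bound as above, and the value is attained by taking vertices $u=e_{h_1}$ and $v=e_{h_2}$, where $h_1$ minimizes $a_h$ and $h_2$ maximizes $b_h$.

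The only step that needs care is justifying that the infimum separates. The feasible set is the product $\mathcal S_H\times\mathcal S_H$ and the objective is a sum of a function of $u$ alone and a function of $v$ alone. Hence, for each fixed $u$, minimizing over $v$ contributes exactly $-\max_h b_h$, and no coupling constraint links the two choices.

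Finally, the two lower bounds can be compared. Taking $u=v=w$ in the uncoupled problem recovers every shared-weight objective $\Delta(w)$, so
\[
\min_h a_h-\max_h b_h\le\min_h\delta_h .
\]
The inequality is typically strict when $h_1\ne h_2$. Erasing the shared-weight restriction therefore enlarges the admissible range, which parallels the obligation-erasure results of Section~\ref{sec:erasure}.
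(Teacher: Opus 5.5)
Your proof is correct and follows the same route as the paper: you bound the convex combination by the extreme $\delta_h$, attain each endpoint at a vertex $e_h$, and in the uncoupled case optimize the two separable terms independently over the product $\mathcal S_H\times\mathcal S_H$. One small correction to your closing side remark: equality $\min_h a_h-\max_h b_h=\min_h\delta_h$ holds exactly when a single stratum attains both $\min_h a_h$ and $\max_h b_h$, so strictness means no such common index exists, not merely that your chosen $h_1$ and $h_2$ differ.
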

\begin{proof}
A convex combination of the $\delta_h$ lies between their minimum and maximum;
placing all mass on an extremizing stratum attains either endpoint. For independent
weights, minimize the first combination and maximize the second separately.
Both choices are simultaneously feasible only after their coupling is removed.
\end{proof}

For the preferred postcranial analysis, in source order Axial, Distal extremity,
Forelimb and Hindlimb, the observed recorded-field shares are
\begin{equation}
 a=\left(0,0,\frac{2}{22},\frac{3}{23}\right),\qquad
 b=\left(0,0,0,\frac{2}{50}\right).
 \label{eq:chogha-strata}
\end{equation}
Every shared nonnegative weighting preserves a weakly nonnegative contrast;
strict positivity requires positive weight on Forelimb or Hindlimb. The exact
interval is $[0,1/11]$, and common quarters give $1147/25300\simeq0.045336$.
The raw head-including audit is separately available, with interval
$[0,715/3638]$; repeated tooth records prevent interpreting it as a unique-piece
comparison. In the preferred analysis, coupling erasure lowers the endpoint
to $-1/25$, obtained by contrasting a zero-Sus floor stratum with debris
Hindlimb. The lower-bound loss remains exactly four percentage points.
That endpoint answers a different anatomical question although no source value
has changed.

More generally, let $\mathcal F$ be a nonempty set of admitted joint assignments
of mappings, weights or shared measurements, let $\mathcal E(\mathcal F)$ be a
relaxation formed by deleting specified consistency constraints, and suppose
both infima below are finite. Define the \emph{constraint-erasure lower-bound gap}
\begin{equation}
 G_\theta(\mathcal F,\mathcal E)
 =\inf_{z\in\mathcal F}\theta(z)
  -\inf_{z\in\mathcal E(\mathcal F)}\theta(z)\ \geq\ 0,
 \qquad \mathcal F\subseteq\mathcal E(\mathcal F).
 \label{eq:constraint-erasure-gap}
\end{equation}
The inequality follows directly from feasible-set inclusion. This application-level
functional records a particular lost coupling; it is not a new theorem of
relaxation or interval arithmetic. Its value is that an archaeological qualification
becomes an explicit mathematical constraint whose removal has a calculable effect.

\subsection{A larger regional compilation: the estimand changes the sign}
Gaastra, Lawrence and Tumolo's comparative study supplies a second original
workbook~\cite{gaastra2024}. Its acquired worksheet has 577 rows and 216 distinct
literal site strings. The article reports 245 sites and 786,353 remains; these
are not silently identified with the worksheet's row/string inventory. Repeated
site phases or source assemblages are not independent settlements. The workbook
spans eight country labels and contains 507 \texttt{NISP} and 70 \texttt{TNF}
quantification labels. Number of identified specimens and total fragment counts
are distinct units.

The strict computational cohort admits \texttt{NISP}, exact urban/rural labels
after trimming whitespace, nonnegative integer cattle/Sus/caprine counts, and
exact equality of their sum with the source domesticate denominator. First-failure
exclusions are 70 non-NISP rows, 166 rows with other or unresolved settlement
labels, and one invalid count/denominator row. The remaining 340 rows comprise
172 urban and 168 rural source assemblage rows. These exclusions are scientific
admission choices retained with the original source coordinates; they are not
claims that excluded observations are historically false.

For cohort $g$, let $p_i$ count admitted Sus specimens, $n_i$ the selected-taxon
NISP denominator, and $r_i=p_i/n_i$. Two legitimate summaries are
\begin{equation}
 M_g=\frac{1}{m_g}\sum_{i=1}^{m_g}r_i,
 \qquad P_g=\frac{\sum_i p_i}{\sum_i n_i}.
 \label{eq:mean-pooled}
\end{equation}
$M_g$ weights source assemblage rows equally; $P_g$ weights their identified
fragments. Neither directly estimates an average animal's behavior, and neither
is a causal effect of urbanism. The urban counts are $15{,}867/246{,}979$ and the
rural counts $11{,}070/153{,}688$. The exact-source calculations give
\begin{equation}
 M_A-M_B\simeq+0.006188926487,
 \qquad P_A-P_B=\frac{-147745017}{18978854276}
                 \simeq-0.007784717394.
 \label{eq:regional-reversal}
\end{equation}
The direction reverses without editing any row. This does not refute the original
study's region-, land-use- and time-sensitive interpretation; it identifies the
question encoded by a particular aggregation rule.

Using population covariance over the $m_g$ admitted rows, with
$\bar n_g=m_g^{-1}\sum_i n_i$, straightforward expansion gives
\begin{equation}
 P_g=M_g+\frac{\operatorname{Cov}_g(n,r)}{\bar n_g},\qquad
 P_A-P_B=M_A-M_B+
 \frac{\operatorname{Cov}_A(n,r)}{\bar n_A}
 -\frac{\operatorname{Cov}_B(n,r)}{\bar n_B}.
 \label{eq:covariance-weighting}
\end{equation}
Indeed, $m_g^{-1}\sum_i n_ir_i=\bar n_gM_g+\operatorname{Cov}_g(n,r)$.
Division by $\bar n_g>0$ proves the identity. No model fitting or sampling
assumption is needed. Under the explicitly defined interpolation
$\Delta_\lambda=(1-\lambda)(M_A-M_B)+\lambda(P_A-P_B)$, the zero is
\begin{equation}
 \lambda_*=
 \frac{M_A-M_B}{(M_A-M_B)-(P_A-P_B)}\simeq0.4428999723.
 \label{eq:weight-crossover}
\end{equation}
This sensitivity path mixes two estimands; it is not a historical probability.
Figure~\ref{fig:scope-estimands} makes the dependence visible instead of assigning
a single geographic explanation to the sign.

\begin{figure}[t]
\centering\includegraphics[width=\linewidth]{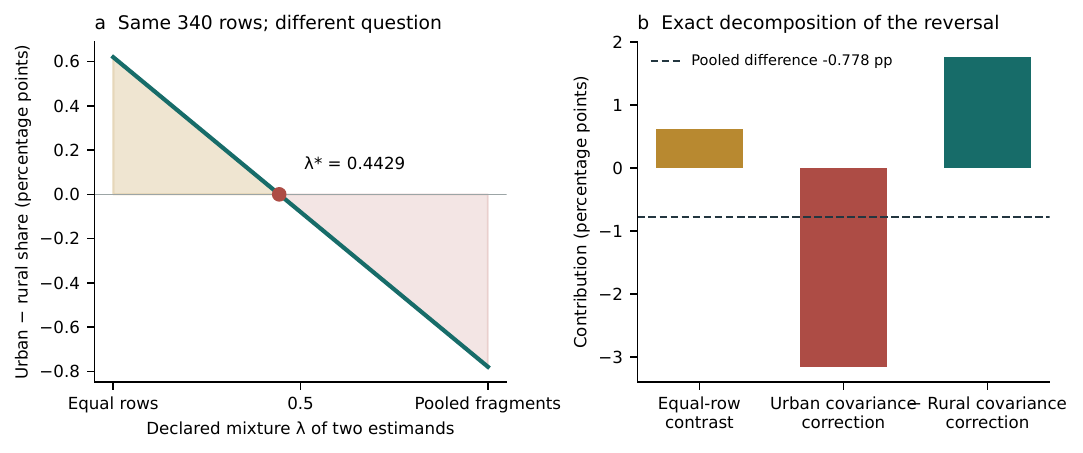}
\caption{Same admitted source rows, different comparison rules. The regional
urban/rural sign differs between equal-source-row and pooled-NISP summaries.
The displayed interpolation is a declared sensitivity path. The right panel
decomposes that reversal into the equal-row contrast and the two cohort
covariance corrections; no confidence interval or causal effect is encoded.}
\label{fig:scope-estimands}
\end{figure}

\subsection{Exact deletion certificates for a declared source-row contrast}
Sensitivity to omitted data has substantial prior art, including the approximate
maximum influence perturbation work of Giordano, Meager and Broderick~\cite{giordano2026}.
For the restricted equal-source-row contrast above, exact deletion has a simple
finite solution. This specialization does not solve general case-removal
sensitivity for fitted models or arbitrary aggregate ratios.

Assume initially $M_A-M_B>0$. Sort the $A$ rates decreasingly as
$x_1\geq\cdots\geq x_m$ and the $B$ rates increasingly as
$y_1\leq\cdots\leq y_n$. Let $X_i=\sum_{k=1}^i x_k$ and
$Y_j=\sum_{k=1}^j y_k$, with $X_0=Y_0=0$. Retain at least one row in each cohort.

\begin{theorem}[Exact sorted-prefix deletion threshold]
For unit-cost deletion of source rows, the minimum number that loses the strict
positive direction is
\begin{equation}
 r_* =\min_{\substack{0\leq i<m,\ 0\leq j<n}}
 \left\{i+j:
 \frac{X_m-X_i}{m-i}-\frac{Y_n-Y_j}{n-j}\leq0\right\},
 \label{eq:prefix-deletion}
\end{equation}
with $r_*=\infty$ if the displayed feasible set is empty. Removing the first
$i$ sorted $A$ rows and first $j$ sorted $B$ rows supplies an attaining witness.
\end{theorem}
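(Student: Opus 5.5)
The argument is an exchange argument that reduces an arbitrary deletion to a sorted-prefix deletion with the same counts. The contrast after deleting rows is the difference of two means: the mean of the retained $A$ rates minus the mean of the retained $B$ rates. Both cohorts keep at least one row.

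\emph{Step 1: fix the counts.} Fix the numbers $i<m$ and $j<n$ of deleted $A$ and $B$ rows. Among all deletions with exactly these counts, the retained $A$ mean is smallest when the retained rows are the $m-i$ smallest rates, so the deleted rows are the $i$ largest. Swapping any deleted row for a retained row with a strictly larger rate lowers, or at worst preserves, the retained sum. The minimum retained $A$ mean is therefore $(X_m-X_i)/(m-i)$. Symmetrically, deleting the $j$ smallest $B$ rates maximizes the retained $B$ mean, giving $(Y_n-Y_j)/(n-j)$. The $A$ and $B$ choices are independent: there is no coupling between cohorts, unlike the shared-weight situation of \eqref{eq:common-bounds}. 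So for fixed $(i,j)$ the minimum achievable contrast is
\[
\frac{X_m-X_i}{m-i}-\frac{Y_n-Y_j}{n-j}.
\]

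\emph{Step 2: characterize losing the direction.} Losing the strict positive direction means the contrast is $\leq 0$. Some deletion of counts $(i,j)$ achieves this if and only if the minimum contrast from Step 1 is $\leq0$, and in that case the sorted-prefix deletion itself attains it. The feasible deletion patterns are exactly the pairs with $0\leq i<m$ and $0\leq j<n$.

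\emph{Step 3: minimize total cost.} Under unit costs, the cost of a deletion is $i+j$. Minimizing over the feasible $(i,j)$ gives \eqref{eq:prefix-deletion}, and the prefix deletion at a minimizing pair is an attaining witness. If no pair satisfies the inequality, no admissible deletion loses the direction, and the convention $r_*=\infty$ applies. The hypothesis $M_A-M_B>0$ only ensures that the pair $(0,0)$ is infeasible, so $r_*\geq1$. Ties among equal rates do not matter, since the prefix sums are the same for any tie-breaking order.

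\emph{Main obstacle.} The delicate point is the exchange argument in Step 1, because both the retained sum and the divisor $m-i$ change with the deletion. The resolution is to hold the count $i$ fixed before optimizing, so that the divisor is constant and only the retained sum varies. Minimizing the mean then reduces to minimizing the sum of $m-i$ chosen values, which is attained by the smallest ones. It is also worth noting explicitly that the minimum is not claimed to be monotone in $i$ or $j$. This is why the formula takes a minimum over the full grid of pairs rather than using a greedy stopping rule.
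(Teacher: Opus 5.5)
Your proof is correct and is essentially the paper's argument: fix the deletion counts $(i,j)$, note that deleting the $i$ largest $A$ rates and the $j$ smallest $B$ rates extremizes the two retained means independently, and then minimize $i+j$ over the finite admissible grid. One small aside: the prefix contrast is actually nonincreasing in both $i$ and $j$, because the mean of the $k$ smallest values is nondecreasing in $k$, so your closing remark should not attribute the full-grid minimum to non-monotonicity, though this does not affect the proof.
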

\begin{proof}
For fixed deletion counts $i,j$, minimizing the retained $A$ mean requires
maximizing the removed $A$ sum; its $i$ largest rates attain that maximum.
Maximizing the retained $B$ mean requires minimizing the removed $B$ sum;
its $j$ smallest rates attain that minimum. Exchange of any nonextremal selected
rate for an available more extremal rate cannot increase the retained difference.
Thus if any pair of subsets of these sizes loses the direction, the indicated
prefixes do so. Conversely, each displayed prefix pair is an actual allowed
deletion. Taking the minimum over the finite grid proves exactness.
\end{proof}

Prefix sums allow evaluation in $O(mn)$ arithmetic operations after sorting,
using exact rational rates and deterministic ID tie-breaking. Arithmetic bit
cost is additional. The result applies to equal-cost whole-source-row deletion,
not individual fragments, unequal deletion costs or pooled-NISP means. A
negative baseline swaps cohort roles; a zero baseline has no strict direction
to preserve. A finite threshold certifies safety against every permitted
selection deletion of at most $r_*-1$ rows. It does not make such a deletion
acceptable under an unchanged externally fixed cohort.

On the admitted 340-row compilation, the archived exact computation gives
$r_*=3$: every permitted deletion of at most two source rows retains the strict
equal-row direction. One attaining witness removes
\texttt{gaastra-row-0003}, \texttt{gaastra-row-0200} and
\texttt{gaastra-row-0214}. These are source-row coordinates, not three animals
or a measure of historical sampling uncertainty. The certificate concerns a
hypothetically changed selection; those omissions still fail the original
fixed-mask verification request.

\subsection{Binding the scientific question, not only its answer}
Authenticated publication and aggregate verification are established
techniques~\cite{martel2004,liaggregate2010}. The source-specific extension here
makes the scientific query part of the externally checked statement. A scope
query fixes the source snapshot root, exact ordered eligible-ID mask, unit,
question, taxon/phase/recovery policies, estimand, weighting and a canonical
rational interpolation weight. Admitted rows bind original source hashes and
coordinates, cohort, numerator/denominator, qualifications and auxiliary
metadata. Version2 admits both the NISP registry and a distinct Chogha
\texttt{recorded-fragment-field-total} profile, with inference unit
\texttt{source-record-weight}. The same \texttt{certify} and
\texttt{verify\_certificate} entry points dispatch on the typed source profile;
NISP and recorded-field weights cannot be interchanged.

For Chogha Mish the verifier reads the complete original CSV, supplied inline
in a bounded lossless witness or as an independently pinned local source file.
It verifies its byte count/hash, all 16,824 unique URIs, exact ordered headers,
project identifier, exactly-one fragment alias rule, count lexemes and source
phase/date fields. It recomputes all ten context/anatomy aggregate rows before
checking the signed count-bound registry root and eight postcranial openings.
The unresolved predicate-URI correspondence of the three flat count aliases is
retained as a declared source-specific coalescing assumption. A fabricated
aggregate signed with the experiment key fails even if that false aggregate's
root is newly pinned: it is inconsistent with the independently fixed source.

The query fixes both cohorts' complete anatomical support and one of three
modes: pooled recorded fields, fixed shared rational weights, or the whole
shared-simplex envelope. Empty/missing-stratum and cohort-specific vectors fail.
The coded qualification profile fixes known no-sieving, unmerged jaw/tooth
repetition, unestablished physical/animal independence, coevality and domestic
status. Promoting these outputs to MNI, independent animals, unique physical
fragments or same-year deposition fails, including when an attacker pins a
query containing that unsupported promotion. The primary-source interpretation
of these qualifications is reviewed scientific input; cryptography checks its
preservation rather than discovering it from bone rows.

A packet pins the external query and current root, verifies the dedicated-domain
checkpoint, checks unique row openings against the complete ordered mask and
recomputes exact arithmetic. Chogha Mish accepts six valid controls and rejects
27 constructed attacks; its inline-source packet is 1,242,100 bytes. One local
M4 Pro observation measured 0.612s for snapshot construction, 0.624s for
certification and 0.605s for verification. These include full-source parsing.
The separate Gaastra experiment accepts four controls and rejects sixteen
mutations. Proof construction there uses a clarity-oriented path helper;
neither experiment establishes a new performance bound. The Gaastra deletion
radius is not applied to derived anatomical strata, because deleting a stratum
and deleting original specimen records are different interventions.

The typed claim and exact result are authenticated. A caption or plot is covered
only when a trusted rendering path consumes that verified claim and binds the
presentation. Freely edited pixels or captions are outside the row verifier's
guarantee. Neither profile is succinct or zero-knowledge.

Conditional authenticity follows from the earlier tree-binding argument.
An accepted unissued checkpoint gives a signature forgery; altered opened rows
under the admitted root give a hash collision. With neither event, exact mask
comparison excludes omission or transplantation, and deterministic arithmetic
excludes a changed result for the pinned query. For the NISP adapter, admitted-row extraction and unit/mask appropriateness
remain trusted scientific inputs. The Chogha adapter additionally recomputes
extraction from the full pinned CSV, while its source-specific field and
archaeological interpretation profile remains reviewed scientific input. A
textual policy label does not independently prove that its admission rule is
archaeologically appropriate or that all eligible material in the world was
excavated. Current-root freshness needs an external pin or history mechanism.
The public experimental key represents local custody of a calculation, not a
museum's endorsement; this signature domain cannot replace SCENE grants, permits
or atomic spending. The construction's specific contribution is to preserve
source-induced comparison constraints and their sensitivity witnesses across
presentation and verification.

\subsection{Further regions supply different constraints}
The remaining acquired sources enlarge the research programme while retaining
different populations and measurement types. China's 2026 compilation has
3,492 human and 2,919 animal isotope rows~\cite{zhangchina2026}; these are selected
measurements from literature, not NISP abundance. Its human and animal
\texttt{Xingyi} strings carry incompatible coordinates, and some broad phase
fields need an explicit sign/unit convention. An audit produces 88 site-coordinate-phase
join candidates, only 24 with identical coordinates and exact date bounds;
even those remain candidates pending context and specimen-selection review.
The original Xipo workbooks provide 87 selected faunal metadata rows, including
70 Sus, with Yangshao and later Zhou contexts kept separate~\cite{zhangxipo2021}.
A selected isotope-sample fraction cannot be substituted for the site's reported
pig-NISP proportion. These sources motivate authenticated join and chronology
obligations, not a merged cross-country consumption estimate.

The Oman/UAE caprine source has 354 rows in S1, including five without LSI,
and 71 S2 osteometric rows including six modern references~\cite{maini2025}.
A modern Iranian reference is not an archaeological Iranian animal; Daba funerary
and Salut settlement/ceremonial selection are distinct. The two source indices
$X=100B/D$ and $Y=100D/L$ share one lateral depth $D$. For positive admitted
measurement and ratio intervals, joint feasibility requires exactly
\begin{equation}
 [D_-,D_+]\cap
 [100B_-/X_+,100B_+/X_-]\cap
 [L_-Y_-/100,L_+Y_+/100]\ne\varnothing.
 \label{eq:shared-depth-interval}
\end{equation}
Each interval is the projection of one ratio constraint onto $D$; a common
member permits independent choices of $B$ and $L$ satisfying both. Checking
the ratios separately can instead use incompatible depth witnesses.
The exact checker finds all 71 rows feasible under hypothetical closed
$\pm0.05$ mm input boxes and $\pm0.005$ ratio-point boxes. Two rows fail if
input measurements are instead treated as exact. This is a precision-policy
sensitivity result, not a source-error finding: stated caliper accuracy alone
does not prove the hypothetical rounding model. Six focused tests cover these
source and constructed consistency cases.

Lebanon's 2026 author files concern wild Pleistocene fauna roughly 400--100 ka,
not domesticated livestock or later food prohibitions~\cite{russo2026}.
Named-taxon NISP totals 77, 61 and 65 differ from broader totals 559, 63 and 99
that include indeterminate material. A blank Ras el-Kelb Sus count in the main
table and a supplementary 0.0\% entry are retained as distinct source statements.
Tepe Yahya's original isotope workbook has 254 successful caprine records,
30 failed records and 76 analytical-repeat rows from 36 IDs~\cite{seabrook2025}.
Literal quality filtering and source phase-label disagreements change eligible
cohorts; no value is silently repaired. Transjordan's small 2026 comparison is
represented by a clearly labelled eight-column transcription of the original
main table, not an acquired specimen-level supplement~\cite{vermeersch2026}.

The transferable research requirement is therefore concrete: preserve the
eligible cohort, units, mapping decisions, recovery qualifications and shared
variables needed for the particular question. A future corroborating excavation
can strengthen or falsify the narrow Chogha Mish hypothesis, while new source
releases can revise the admitted mappings. The formal checks preserve those
decisions and expose their consequences; the archaeological interpretation
remains answerable to new material evidence.

\section{Independent transfer and archaeological admission}
\label{sec:replication}

The prospective archaeological hypothesis is that \emph{postcranial Sus
representation among recorded cattle, caprine and Sus remains is higher in
floor-use deposits than in occupational refuse, within independently dated
coeval contexts and comparable recovery conditions}. A country label supplies
none of these conditions. Candidate collections and exclusions were recorded
before calculating the independent comparison; after the Chogha tooth-row issue
was identified, the postcranial specification and final source-label mapping
were frozen separately. These are locally timestamped decisions, not an
externally registered preregistration. An exact transfer requires a source-supported
deposition crosswalk, phase and recovery match, and common anatomical support;
an approximate occupation/dump comparison retains a different declared role.

The primary Chogha study materially limits what can enter that hypothesis.
Its p.~446 reports Wheeler's recollection that sieving was not practiced and
explains the omission of detailed spatial analysis. Page~451, footnote~4,
documents repeated spreadsheet entries for teeth belonging to a mandible or
maxilla, which the authors' cleaning merged~\cite{levtov2017}. In the acquired
export, consecutive records CGM12289--12293 specify upper-right P3, P4, M1,
M2 and M3 beside maxilla record CGM12288. Their generic completeness labels do
not establish five independent maxillae. Unique row identifiers cannot establish
fifteen animals, and the seven selected postcranial Sus records have no supplied
individual/refit links. All selected rows share a phase-level context URI and
the editor-assigned 3600--3200~BCE interval: neither identifies coeval rooms or
deaths. The chapter already published the broad temporal pig trend; our focused
inspection is hypothesis generation without a worldwide-priority claim.

An independent English catalogue, OXSTAD16, provides a more limited test.
Broderick's CC~BY~4.0 source has 605 unique catalogue identifiers and 88 fields;
its metadata defines Quantity as the specimens represented \emph{within a
record}, not individuals~\cite{oxstadbones2021,oxstadmetadata2021}. All 606 CSV
lines were retrieved through ordinary public web access after direct local
requests returned HTTP~403. Restoring the original CRLF serialization reproduced
the publisher's legacy MD5 checksum
\texttt{779fab3221ec3063b2d620ab22506e8f}; the resulting original byte sequence
contains 150,707 bytes. This checksum match supports reconstruction, not a
modern collision-resistance assumption or an issuer signature.

Exact floor/refuse replication remains \textbf{unidentified}. The excavation
report distinguishes a real floor sequence from dismantling and cellar
backfill, while candidate floor-sequence contexts 5027/5031 are absent from the
bone catalogue. Context 4014, labelled occupation layer in that catalogue,
belongs to a later latrine; occupation cannot therefore be translated
automatically into floor use~\cite{oxstadreport2019}. The frozen partial test
instead admits literal Phase~2, source-unsieved records, certain selected taxon
labels and explicit postcranial anatomy, comparing declared occupation-layer
labels with dump labels. Question-mark taxa, head/dental material, unknown
anatomy and the original anomalous Phase values 3059/4004 remain excluded
without repair. The 29 admitted records produce
\[
 \widehat\Delta_{\mathrm{partial}}=\frac{0}{9}-\frac{3}{20}=-\frac{3}{20}.
\]
There are two versus eight source context IDs. Forelimb, hindlimb and distal
contrasts are respectively $-1/12$, $-2/7$ and $0$; axial support is absent,
rather than an observed zero. Common simplex weights on the three supported
strata give the exact interval $[-2/7,0]$, with equal-weight index $-31/252$.
The positive direction does not transfer to this declared partial comparison.
It neither confirms nor refutes the stricter floor-use hypothesis, and no
independent-animal significance test or causal interpretation follows.

Larger acquisitions preserve the same admission discipline. Forcello's
published specimen examples are usable, but the required original phase/context
join was described as an Access table on the dissertation CD and was not
acquired~\cite{forcello2015,trentacoste2014}. The Russian collection supplies
123,937 original source-value records from its primary worksheet, with
coordinates, taxonomy, anatomy and contextual qualifications~\cite{rassadnikov2026dataset}.
This is not a harmonized NISP cohort or an individual-animal count.
The bundled compressed exports retain source rows and fields; they are not
original XLSX bytes, all worksheets, or an offline reconstruction of the
workbook. Zvyagino-4 includes floor material, but its yellow-ash comparison has
a roof-material interpretation, and cultural agreement does not establish
simultaneous deposition~\cite{alaeva2025zvyagino}. Roof fill is not admitted
as a refuse proxy merely to obtain a larger comparison.

The receipt therefore includes the complete source and interpretation-policy
SHA-256 pins,\footnote{OXSTAD16 source SHA-256:
\texttt{5d23e911}\allowbreak{}\texttt{ca9697f7}\allowbreak{}\texttt{03bfd7f3}\allowbreak{}\texttt{3db70645}\allowbreak{}\texttt{e6531d45}\allowbreak{}\texttt{fd6e7dc1}\allowbreak{}\texttt{471088cf}\allowbreak{}\texttt{f159ad34}. Frozen interpretation-policy SHA-256:
\texttt{9df7a77d}\allowbreak{}\texttt{3a0762c6}\allowbreak{}\texttt{cfb1114a}\allowbreak{}\texttt{4da22a71}\allowbreak{}\texttt{664a33fc}\allowbreak{}\texttt{e40355d0}\allowbreak{}\texttt{e1c1ab42}\allowbreak{}\texttt{3edd9fb4}.}
original coordinates, a full admission ledger, counting unit, deposition
crosswalk, recovery/phase predicates, supported weights and the exact/partial
transfer role. The executable \texttt{replication\_}\allowbreak{}\texttt{analysis.py} recomputes the
comparison; twelve tests cover source/policy mutation, catalogue and admission
coverage, unchanged source anomalies, context/recovery/taxon/anatomy exclusions,
exact arithmetic, attained weighting bounds and missing or empty support.
These are concrete SCENE admission inputs. The current standalone OXSTAD16
audit verifies local pins and calculations; it does not issue a SCENE scope
certificate. Authentic bytes and reproducible arithmetic leave the archaeological
premises independently open to falsification.

\section{Source-exact reconstruction as an authenticated obligation}
\label{sec:compression}
A submission budget creates a second opportunity for scope loss: a smaller
representation may preserve a rendered shape while changing the actual evidence
file. We therefore implement \emph{source-exact reconstruction profiles} within
SCENE. Each admitted target has a retained role: original publisher member or derived
source-cell extract. Its reconstruction preserves the complete byte sequence, including attributes, order, comments, decimal spelling and whitespace.
The profile separately binds that scientific target, its physical payload and
the packaged decoder files. All manuscript figures remain ordinary PDF inputs;
the TeX compiler need not execute a reconstruction program.

\subsection{What the 2026 methods contribute}
Chen et al.'s Dolphin, published in February 2026, uses the first two XOR bits
and a following seven-zero interval to skip a redundant prefix in oscillating
binary64 data~\cite{dolphin2026}. We implement and measure that \emph{prefix-skip
submechanism} in our own explicitly framed stream, with unchanged-word,
leading/trailing-zero-window and raw-XOR alternatives. This is not a benchmark
of Dolphin's full adaptive framework. Binary64 recovery alone does not recover
original decimal text, so this candidate is admitted only when formatting every
recovered value at the declared precision exactly restores its original token.
Unsupported spellings retain literal bytes.

The May 2026 triangle-removal study distinguishes geometry-equivalent recovery
from original triangle serialization and omits attributes from its experiments
\cite{zalik2026}. That distinction changes our reconstruction target: topology
or image equivalence cannot justify dropping serialization residuals. The
September 2026 Parquet integration of ALP explicitly derives from its 2024
algorithm~\cite{parquetalp2026,alp2024}. We likewise distinguish current experiments
from older component ideas: fixed-point scaling, predictive residuals,
byte/bit planes, dictionary sharing and LZMA are established techniques.
Lossless predictive geometry compression also predates these works
\cite{isenburg2005}. The contribution here is their bounded, source-committed
composition with an executable scientific scope, tested on the actual artifact.

\subsection{An exact lexical profile and a constraint used only as prediction}
The implemented SCENE-LEX3 profile stores a literal/template skeleton for OBJ
records and whitespace, then separate token channels by record kind, arity and
column. Unrecognized records remain literal. Fixed-decimal channels may use
scaled integers, signed deltas, or byte/bit planes; face channels may predict
from earlier components or columns. A finite measured choice includes each
codec's metadata and compressed body. Negative-zero flags, missing slash
components, exceptional spellings and the skeleton are charged payloads.
No vertex, face, material line or texture coordinate is deleted or reordered.

One useful constraint has an entirely different role from archaeological
admission. For shared fixed precision $p$, let $s=10^p$ and let $(x,y,z)$ be the
exact scaled integer lexemes of a supplied normal. With its recorded sign
$\sigma\in\{-1,+1\}$, define
\begin{equation}
 \widetilde z=\sigma\left\lfloor\sqrt{\max(0,s^2-x^2-y^2)}\right\rfloor,
 \qquad e_z=z-\widetilde z,\qquad z=\widetilde z+e_z.
 \label{eq:normal-residual}
\end{equation}
Integer square root makes the predictor deterministic. The residual and lexical
sign, including negative zero, are stored exactly. Therefore nonunit, rounded
or exceptional source normals remain unchanged: unit length is an opportunity
for a shorter residual, not an imposed correction or a claim about measurement
accuracy. The decoder needs earlier $x,y$ channels, the sign bits and every
residual. Erasing any of them destroys the reconstruction obligation even if a
viewer still displays a plausible object. Unsupported precision/range profiles
fall back to an admitted literal representation.

\subsection{Two commitments and a typed revision frontier}
Fix mandatory target files $b_i$ and their original lengths and digests before
compression. An admitted recipe $a$ produces payload $p_{ia}$, metadata $m_{ia}$
and named dependencies $S_{ia}$, subject to
\begin{equation}
 D_a(p_{ia},m_{ia},S_{ia})=b_i,\qquad
 |D_a(\cdot)|=|b_i|,\qquad H(D_a(\cdot))=h_i.
 \label{eq:exact-reconstruction}
\end{equation}
Experiments compare the actual byte arrays as well as the pinned SHA-256 values.
A logical-source root $L$ binds ordered target paths, original/derived roles, lengths and
digests; a domain-separated transport root $T$ binds payloads, codec identifiers
and packaged decoder-file digests. A dedicated-domain checkpoint signs both,
with a fixed decoding policy and bounds. The issuer key and expected roots must
be obtained independently to resist an active substitution attacker. Reading
all pins from the same archive checks internal consistency only. The public
research key does not represent a source institution.

\begin{proposition}[Source-preserving composition]
Assume a trusted deterministic decoder/extractor/verifier runtime, externally
pinned issuer key, source/transport roots, scientific policy $Q$ and current
scope root. An accepted reconstruction followed by accepted full-source scope
verification yields the same qualified result as direct verification on the
pinned original source bytes, except for signature forgery or a source/payload
hash collision. If $Q$ and the logical source members are unchanged between two
accepted storage profiles, the scientific-record frontier is empty; the
transport frontier may still be nonempty.
\end{proposition}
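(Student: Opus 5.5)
The argument is a standard conditional reduction in the style of the earlier tree-binding proofs: either the run exhibits a signature forgery or a hash collision, or the reconstructed bytes equal the pinned originals, in which case the two verification paths evaluate identical inputs. I would proceed in three steps.

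\emph{Step 1: checkpoint authenticity.} Fix the externally pinned issuer key and the roots $L$ and $T$. Consider an accepted dedicated-domain checkpoint whose body was not issued by the honest issuer. Such a checkpoint is a forgery under the pinned key. Otherwise the signed body fixes $L$, $T$, the decoding policy and its bounds. Domain separation prevents this body from being confused with revision checkpoints, grants or scope checkpoints.

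\emph{Step 2: reconstruction exactness.} Let $D_a$ be the trusted deterministic decoder. For each mandatory target $b_i$, acceptance means three things.
\begin{itemize}
\item The supplied payload, codec identifier and decoder-file digests open correctly under $T$. A different payload or decoder under the same $T$ would give a collision in the transport tree or wrapper.
\item The output $D_a(p_{ia},m_{ia},S_{ia})$ has length $|b_i|$ and digest $h_i$, as required by \eqref{eq:exact-reconstruction}, where $h_i$ is bound under $L$.
\item The ordered paths, roles, lengths and digests bound by $L$ are those of the pinned originals, except by a collision in $L$'s tree.
\end{itemize}
Suppose the decoded bytes differed from the original $b_i$ while sharing digest $h_i$. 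That pair would be a SHA-256 collision. So, absent the excluded events, every reconstructed target equals the original byte sequence.

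\emph{Step 3: scope equivalence.} The full-source scope verifier is deterministic, and its inputs are the pinned $Q$, the pinned current scope root and the source bytes. Step 2 makes the source bytes identical in both paths. Hence acceptance and the exact qualified output coincide with direct verification on the originals. This is where the earlier conditional authenticity argument for the scope certificate is invoked unchanged.

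For the frontier claim, define the scientific-record frontier as the change set \eqref{eq:changed-records} over the records committed by $L$ together with $Q$. Equal $Q$ and equal logical members give equal paths, roles, lengths and digests, so the change set is empty. The transport frontier, by contrast, ranges over payloads, codec identifiers and decoder digests. Two different recipes $a\neq a'$ can both satisfy \eqref{eq:exact-reconstruction} while producing different payloads. A single example, such as a literal-bytes payload versus a LEX3 payload for the same OBJ, shows this frontier may be nonempty.

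The main obstacle is Step 2's dependence on a \emph{trusted} decoder runtime. Packaged decoder files are only bound by digest. The reduction therefore has to state that the digest-pinned decoder is the one executed, and that the byte-array comparison, not merely the hash, is what is checked. I would isolate this as an explicit assumption rather than try to derive it.
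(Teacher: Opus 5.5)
Your proposal is correct and follows essentially the same reduction as the paper. Transport binding fixes payloads and decoder files, the length/digest check fixes the decoded bytes up to a collision, deterministic full-source extraction reproduces the direct result, an unissued checkpoint is a forgery, and differing payload/decoder records can leave the transport frontier nonempty.

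Your closing ``obstacle'' needs no separate treatment: the statement's trusted-runtime hypothesis already covers the executed decoder. The extra demand for byte-array comparison is unnecessary, because Step~2 obtains byte equality from the pinned digest under exactly the collision exception the statement allows.
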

\begin{proof}
Transport checking fixes each complete payload and declared decoder dependency.
Bounded deterministic decoding followed by the original length/hash check fixes
its source output, unless the pinned digest is collided. Full-source extraction
therefore receives the same bytes and policy as the direct calculation and
recomputes the same registry, selection and exact result. A checkpoint outside
the admitted issuance history would instead forge its signature. Applying this
argument to both profiles gives equal scientific records. Their payload and
decoder-file records need not agree, so their transport frontiers need not be
empty. Changing $Q$ invalidates the stated equality premise even if every source
byte is unchanged.
\end{proof}

This connects compression to revision-cut reuse without confusing two events.
An unrelated storage change can preserve a scientific cut after reconstruction,
while a changed taxon mapping or weighting is a scientific-policy revision.
Transport freshness must still be verified against its own current pin. The
implementation fixes decoder identifiers and never executes downloaded recipes,
shell instructions or serialized objects. It limits individual source outputs
to 128MB, the total declared reconstruction to 256MB and each XZ decoder's working
memory to 64MiB. The last bound is not a total process-memory bound. File checks
cover the packaged Python dependencies under a trusted bootstrap, interpreter,
liblzma and cryptography runtime; code already executing cannot authenticate its
own replacement. Path traversal, symlinks, malformed frames, trailing compressed
data and changed existing source destinations are rejected.

\subsection{Storage cost is a union, authorization cost counts occurrences}
BagIt and OCFL already separate inventory/fixity from logical organization, and
conditional reconstruction is longstanding delta-compression practice
\cite{bagit2018,ocfl2024,vcdiff2002}. Let a finite acyclic reconstruction graph have
nonnegative stored-node costs $c_v$, mandatory dependency closure $\cl$, and
selected recipes $A$. Its additive physical cost is
\begin{equation}
 F(A)=\sum_{v\in\cl(A)}c_v.
 \label{eq:storage-union}
\end{equation}
Shared dictionaries and decoders are counted once in this union. A fixed
closure model is ordinary monotone submodular weighted coverage: the additional
nodes introduced by one recipe can only decrease as $A$ grows. For binary recipe
selectors $x_{ia}$ and stored-node indicators $z_v$, a finite joint model is
\begin{equation}
\begin{aligned}
 \min_{x,z}\quad&\sum_v c_vz_v+\sum_{i,a}q_{ia}x_{ia},\\
 \text{subject to}\quad&\sum_a x_{ia}=1 &&\text{for each target }i,\\
 &z_v\ge x_{ia} &&\text{for all }i,a,\ v\in\cl(i,a),\\
 &x_{ia},z_v\in\{0,1\}.&&
\end{aligned}
 \label{eq:recipe-selection}
\end{equation}
Here $q_{ia}$ includes selector/recipe serialization. The model does not include
layout-dependent solid compression unless those costs are explicitly added.
The implemented profile selects measured per-target candidates with one common
fixed decoder, rather than solving this general joint optimization. The complete
outer ZIP byte count is the authoritative size test.

SCENE endorsement budgets instead charge each disclosed occurrence. Reusing
one compressed source in many displays can save storage while consuming many
authorizations. Identifying the two costs would allow compression or deduplication
to manufacture endorsement capacity. Keeping their domains separate is an
operative security condition, not a cryptographic property of the compressor.

\subsection{Measured source recovery and archive budget}
The benchmark includes all thirteen OBJ members: the SEAMS mummy and twelve
\c{C}atalh\"oy\"uk models. It compares raw-member DEFLATE9, XZ6 and complete
SCENE-LEX3 frames; the frame includes its skeleton, sign/exception streams and
codec metadata. Every selected payload restores its exact original source bytes.
Across 129,303,415 original bytes, raw DEFLATE9 uses 32,988,098 bytes, XZ6
uses 21,795,388 and the complete thirteen SCENE-LEX3 frames use 16,402,504:
24.7\% less than XZ6. The release selects XZ6 for one model where it is smaller;
the thirteen selected model payloads total 16,362,028 bytes. A single local run
measured total encoder times 11.908s, 52.446s and 191.651s, respectively, and decoder
times 0.094s, 1.001s and 20.708s. Encoder selection effort and lexical checking are
included for SCENE-LEX3; these are not isolated throughput or optimality claims.
Cumulative process peak RSS was 760,004,608 bytes. Figure~\ref{fig:source-compression}
shows each actual member rather than only the aggregate saving.

The 2026-inspired candidate was evaluated on 80 numeric channels; 69 exercised
its prefix-skip branch, for 395,697 such transitions. It won only one constant
channel, where the skip branch was unused. Thus this corpus supplies a measured
negative transfer result for that mechanism: the observed gain comes from
other exact lexical/integer/residual choices. A comparison that called the
selected compressor ``Dolphin'' would misidentify the implemented method.

\begin{figure}[t]
\centering\includegraphics[width=\linewidth]{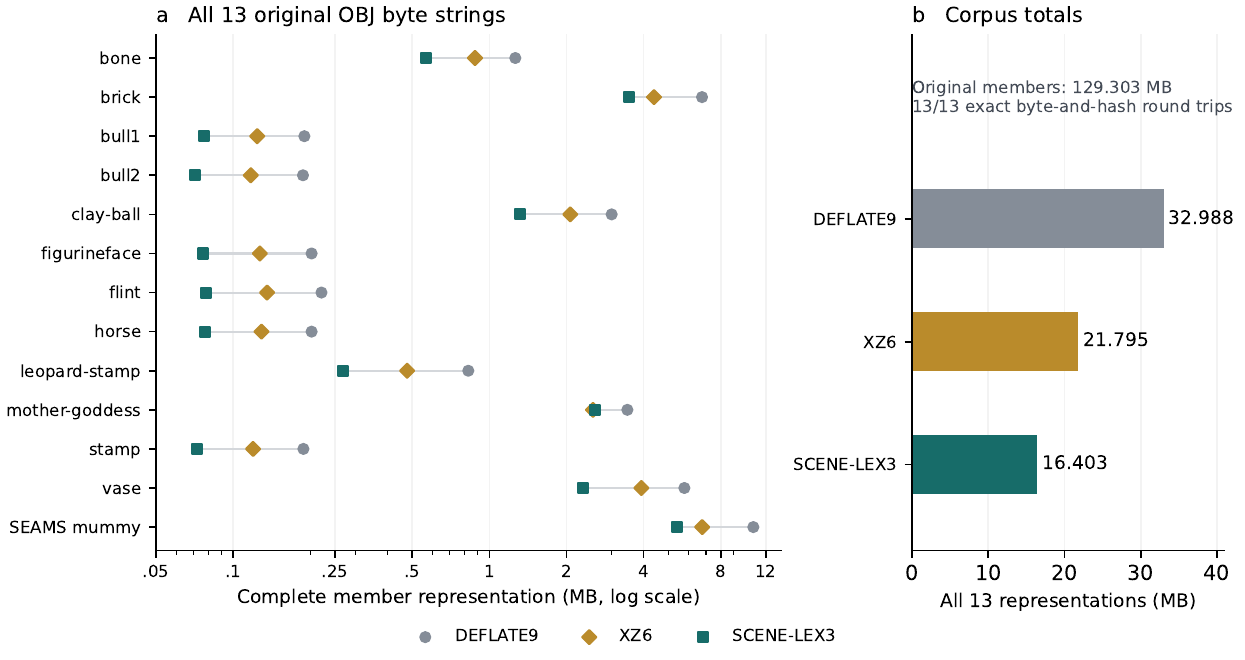}
\caption{Complete exact-byte representations of all thirteen source OBJ members.
The left axis is logarithmic; the right panel gives additive frame totals.
DEFLATE9 counts a raw RFC1951 payload, XZ6 its full frame, and SCENE-LEX3 its
header, skeleton and compressed channels. Every method restores all original
bytes. Decoder/code/recipe costs, other source assets and outer-ZIP structure
are measured separately at package level.}
\label{fig:source-compression}
\end{figure}

The original photographs, textures, workbooks and unchanged licensed ontology
remain available. Selected large CSVs use exact-byte XZ storage, with their
original line endings and count lexemes restored before source parsing.

The release must satisfy two physical budgets. Let $U$ be the sum of the
stored files' lengths after the outer ZIP is opened, $Z$ the complete ZIP length,
and $r$ a declared allowance for the processed PDF and build products. We require
\begin{equation}
 Z<M,\qquad U+r<M,\qquad M=50{,}000{,}000\ \text{bytes},
 \quad r=2{,}000{,}000\ \text{bytes}.
 \label{eq:two-upload-budgets}
\end{equation}
The earlier $49{,}778{,}451$-byte ZIP had $56{,}697{,}985$ bytes of opened members:
checking $Z$ alone was insufficient. The corrected packager measures both actual
container properties and fails at either boundary. The allowance is a conservative
local engineering margin, not an assertion of arXiv's exact server accounting.
The much larger reader-restored logical source volume is a third quantity and
is reported separately; TeX processing does not run that restoration.

The two costs can also be attached componentwise to a fixed dependency closure:
$\mathbf F(A)=(\sum_{v\in\cl(A)}u_v,\sum_{v\in\cl(A)}z_v)$, with recipe and container
overheads additionally counted. Shared dependencies remain unions in each
coordinate. A plan feasible for one resource can fail the other; direct complete
package measurement remains necessary for layout-dependent terms. This refines
Equation~\eqref{eq:recipe-selection} to a constrained multi-resource problem
without changing any scientific source value or comparison policy.

To reduce $U$ as well as $Z$, the new profile uses exact-byte XZ for eligible
data records and libjxl 0.11.2 JPEG reconstruction for all 77 source JPEGs
\cite{libjxl0112}. They total 22,725,571 original bytes and 18,753,153 stored JPEG XL
bytes; all 77 original JPEG byte arrays and SHA-256 values are recovered. This is
JPEG-bitstream reconstruction, not pixel decoding followed by a new JPEG encode.
The wrapper requires reconstruction metadata, fixed decoder version/options,
a 60-second timeout, an output-file limit and the original source digest.
Native decoder memory is not bounded by this wrapper. The trusted installed
libjxl runtime is an additional dependency, while no native binary is hidden in
the archive. These established codec capabilities are attributed, not claimed
as a 2026 invention.

The outer ZIP retains all manuscript figure PDFs. Original photographs,
textures, workbooks and source metadata restore to their original paths with
unchanged bytes and roles. Generated result arrays, duplicate plots, browser
captures and HTML are rebuilt by the offline runner. Earlier investigator OBJ
gzip wrappers were replaced while publisher OBJ bytes/digests were preserved.
The Russian exports retain their source-value-extract role; compression does
not make them an original-workbook archive. The artifact records the final
$Z,U,r$ values and verifies every restored target before scientific processing.

\section{Authenticated query contracts across execution and materialization}
\label{sec:database}
The database problem is to preserve a \emph{qualified relational question} across
storage reconstruction, alternative execution plans and cached results.
Authenticating aggregate answers has substantial prior art
\cite{martel2004,liaggregate2010}. Our database contribution is an executable
source-specific query contract that connects complete source extraction,
relational bag semantics, anatomical support and the existing SCENE scope
verifier. The independent SQL execution is checked against that verifier,
rather than treating a database's declared source hash as proof of its contents.

A contract $\Theta$ binds the source-byte and complete parsed-record digests,
current scope root, original-column mapping, registered parameterized statement
hashes, parameters, count unit, result schema/order and scientific qualifications.
Its eligible source bag $B_\Theta(R)$ retains each record's URI, ordinal,
source-value hash and recorded weight. Unique URIs prevent an accidental duplicate
import; they do not deduplicate animals, teeth or physical fragments. For the
complete Chogha source, SQL \texttt{COUNT(*)}=16,824 whereas
\texttt{SUM(recorded\_weight)}=16,825. Substituting one for the other changes
the counted unit. Blank original fields remain empty strings; the importer
validates integer-valued decimal lexemes and exactly one active count alias
before SQLite casts the coalesced value. SQLite's own count/SUM/NULL semantics
are not a substitute for this source admission check~\cite{sqliteagg2026}.

For a fixed grouping operator $G_\Theta$, an answer carries both grouped counts
and complete selected-row coverage. The acceptance condition is
\begin{equation}
 \operatorname{Accept}_\Theta(A)
 \iff V_{\rm SCENE}(A_{\rm scope};R,\Theta)=1
 \ \land\ B_A=B_\Theta(R)
 \ \land\ G_A=G_\Theta(B_\Theta(R)).
 \label{eq:SQL-contract}
\end{equation}
Here the existing full-source scope certificate establishes the selected bag and
expected aggregates from the independently pinned original CSV. A received
coverage hash alone does not establish completeness: the ordered URI/ordinal,
record-hash, cohort, anatomy, taxon and weight tuples must equal that verified
bag. Missing groups are rejected rather than interpreted as observed zeros.
Only fixed, registered statement templates execute; packet-supplied SQL text
cannot enter the database engine.

This condition separates result equivalence from contract equivalence.
Alternative scan, index and materialized-view plans may answer the same contract.
Changing numerator parameters, anatomical support, unit or source version cannot
answer it merely because a scalar ratio happens to agree. Even multiplying a
numerator and denominator by the same factor fails the exact-group check. If two
storage recipes restore identical source bytes, the source and query roots, bag
and qualified result are unchanged. A different transport payload alone therefore
does not stale this query contract. A changed source root or parameter contract
invalidates the materialized-view binding; these are different revision domains.
The argument composes deterministic relational evaluation with SCENE's existing
source-binding result, under its stated runtime/key/root assumptions. It does
not claim a new SQLite index or a succinct proof that a physical plan executed.

\paragraph{Actual relational workload.}
The artifact imports all 16,824 records and all 59 original field lexemes into a
fresh SQLite database, with an explicit source-column dictionary, URI uniqueness
and count-alias consistency constraints. Complete base-value verification
recomputes every parsed record hash against the original source, including fields
unused by the current aggregate. The declared postcranial query fixes
\texttt{Protoliterate}, both context labels, the five exact taxon labels and four
anatomical strata. It returns eight cohort/stratum groups covering 225 source
records, 5/61 versus 2/164, and the same exact contrast $349/5002$ through the
\texttt{scope\_contrasts.certify} and \texttt{verify\_certificate} interface.

Three measured physical strategies use a table scan, a covering index, and a
covered materialized bag. Their \texttt{EXPLAIN QUERY PLAN} outputs are retained
as diagnostics, never parsed into an authentication decision; SQLite warns that
this format can change between versions~\cite{sqliteeqp2026}. The experiment
separately reports import/index/materialization cost, database bytes, aggregate
SQL time, full source-value checking and full-source answer verification.
In one SQLite 3.53.3 run, aggregate SQL took 4.215, 0.127 and 0.079ms,
respectively; database files occupied 24,399,872, 27,901,952 and 27,975,680 bytes.
Import and complete base-value verification took 0.753 and 0.910s, while
full-source answer verification added 0.570, 0.572 and 0.575s. These costs are
reported separately so the small aggregate time cannot conceal verification.
The materialized view binds its coverage and scope to the complete contract
before its answer can be accepted. Eight constructed attacks cover changed
aggregates, omitted groups, rehashed incomplete coverage, duplicate coverage,
changed parameters, unregistered statements, animal-unit promotion and stale
view bindings; all are rejected. Twenty tests additionally exercise relational
keys, count-alias constraints, unchanged-ratio tampering, source/column mutation
and storage-wrapper invariance. Timings are one local sequential run, not a
claim of optimizer novelty or end-to-end speedup: full-source verification is
additional to the aggregate SQL operation. The generated database is an
inspectable reproduction output; it does not replace the pinned original CSV.

\section{A frozen documentary-language extraction and revision pilot}
\label{sec:nlp}
The language question is whether a source-bound parser preserves the distinction
between a representation claim, a qualification and a calendar statement before
they become scene/database records. We implement a restricted English
finite-state/rule parser, with no trained model or language-model inference.
Finite-state information extraction, lexical negation handling and rule-based
temporal normalization have substantial precedents~\cite{fastus1993,negex2001,heideltime2010}.
We evaluate a particular documentary-frame contract and its revision boundary;
we do not identify ordinary regular expressions as a new parsing algorithm.

\paragraph{Natural source records and development separation.}
The corpus includes every description sentence from the 33 Cleveland Museum of
Art records in the visual-source inventory, all 15 Metropolitan Museum titles,
and all 48 literal catalogue-date strings: 243 records comprising 180 prose
sentences, 15 title fragments and 48 date phrases. Text remains an exact substring
of a licensed CC0 JSON field, with its original source digest, Unicode-codepoint
offsets and text digest. No sentence was selected because a trigger or parser
succeeded. Two near-duplicate object pairs were merged before splitting, using
word-trigram Jaccard similarity of at least 0.65 on segments of at least 12 words.
The 48 objects therefore form 46 source groups. A fixed SHA-256 group rule assigns
188 records in 38 groups to development and 55 in 8 groups to test. Whole objects
and duplicate clusters cannot cross the split; common short date/title strings
are not evidence of independent linguistic constructions.

Reference frames were prepared in an investigator's AI-assisted workflow, blind
to the evaluated parser's predictions. They are provisional reference annotations,
not independently human-adjudicated gold or an inter-annotator-agreement study.
Structured catalogue dates remain separate weak metadata and are never substituted
for text labels. After three development inspections, the parser's code/lexicon
hash was frozen before test or external-corpus evaluation. A blind source audit
then corrected two development-only cue labels; test bytes did not change.
Current development scores use that corrected reference, rather than claiming
perfect fit to the earlier version. The artifact retains both audit receipts.

\paragraph{Frame grammar and measured accuracy.}
Admitted relations have an inflected \emph{depict}, \emph{show} or \emph{represent}
trigger, a bounded concrete argument head and source-scoped polarity/modality.
Entity classes include documentary king/queen/man/woman/deity/animal/other labels;
they do not assert biological sex, ancient authority or a real person's identity.
Unsupported paraphrases abstain. Dates require an explicit textual era and
normalize inclusive BCE-negative/CE-positive intervals without year zero.
For example, an explicit third century AD becomes 201--300; an unmarked century,
relative age or dynasty number is not silently converted using catalogue metadata.
Source spans, date alternatives and record genre remain explicit.

Exact-frame micro scores use $P=TP/(TP+FP)$, $R=TP/(TP+FN)$ and
$F_1=2TP/(2TP+FP+FN)$; undefined empty-target ratios are reported as null.
An abstention on an admitted reference frame counts as a false negative.
On the 55 held-out records, the frozen parser matches the single positive
relation frame and all ten admitted date frames. These tiny positive denominators
cannot establish broad relation/date generalization. The natural relation
positives do not test negative or uncertain relation scoping. Negation has
one matched cue. Uncertainty extraction instead gives 9 true positives, 4 false
positives and 2 false negatives: precision $9/13$, recall $9/11$ and $F_1=3/4$.
Whole-record outputs match 51/55 records; grouped results are also retained.
Two near-duplicate records falsely treat \emph{about} in \emph{brought about}
as epistemic uncertainty, and two additional errors concern cue boundaries.
No rule was changed after observing these failures.

Two declared ablations distinguish the target from word presence. A deliberately
weak, unlinked entity-mention heuristic produces 49 spurious held-out relation
frames under the strict trigger/argument contract; it is not a competitive
information-extraction baseline. Removing qualifications from the same grammar
loses all admitted negation/uncertainty cues. Its positive relation score remains
unchanged in this sparse test because the positive relation has no such
qualification. The artifact reports all slot scores, genres, empty records,
coverage and failures instead of equating document-level empty answers with
correct extracted entities.

\paragraph{Fresh source receipts and semantic-cut reuse.}
Let $f_\theta(t)$ be the extracted span frames and $s_\kappa(f_\theta(t),t)$
the declared slot compiler. A dedicated-domain receipt binds
\begin{equation}
 R=\operatorname{Sign}\bigl(H(t),H(\theta),H(\kappa),
             f_\theta(t),H(s_\kappa(f_\theta(t),t)),\text{record},\text{genre}\bigr).
 \label{eq:nlp-receipt}
\end{equation}
Verification pins the current text, record, genre and rule identity, recomputes
the extraction and checks the separately scoped slot contract. It authenticates
this parser's documentary output, not linguistic correctness or historical truth.
A semantic decision cut may avoid a changed raw-text leaf when its admitted
slot record is unchanged, but a fresh source-to-frame receipt remains mandatory.
An old receipt cannot answer a new text merely because its normalized slots
happen to agree. Source/slot-policy changes remain distinct revision events.

The predeclared fourteen development-derived edit pairs are explicitly
investigator constructions, not publisher revisions. All fourteen produce the
intended extraction distinction; four unchanged-slot cases transport their cuts,
and every old source receipt is rejected against the revised text. An expanded
systematic audit has 358 constructed variants: 243 whitespace changes, 57 cue
removals, 44 date substitutions, 7 negation insertions and 7 entity substitutions.
It matches 353 intended slot outcomes and rejects all stale source receipts.
The five failures are conservative false invalidations: internal whitespace in
multiword cue lexemes remains in the frozen slot representation. Those benign
edits unnecessarily prevent reuse; they do not admit stale text. The fixed
pipeline and these failures are retained for inspection. Twenty-eight software
tests check source grounding, scoping, chronology, abstention accounting,
external receipt conditions and actual revision-cut verification. The frozen
lexicon's age-role collapse is another disclosed out-of-corpus counterexample;
it is not validated as an inference from a child's representation to an adult
social role. This pilot motivates broader independent language evaluation,
without returning to a learned model or inferring source consent.

\subsection{External human-annotated temporal evaluation}
\label{sec:external-temporal}
To distinguish investigator-curated museum references from an external human
benchmark, we applied the unchanged date grammar to every English document in
the acquired AncientTimes and WikiWars releases. AncientTimes was manually
annotated for historical temporal expressions; WikiWars provides human TIMEX2
annotations in historical narratives
\cite{strotgen2014ancienttimes,mazur2010wikiwars}.
The parser rule hash was frozen at 20:38:15 UTC on 27 October 2026, before
external annotation inspection; neither corpus was used to adjust the rules.
The evaluation adapter was written subsequently. These are transfer scores for
the complete acquired English editions, not a reproduction of either paper's
experimental protocol. Two AncientTimes articles derive from WikiWars, so the
corpora do not constitute independent replications.

First, we supplied each human TIMEX span to the grammar and measured
\emph{year projection}. A complete four-digit calendar year in a human DATE,
TIME, or untyped TIMEX2 value, optionally prefixed by BC and followed by
calendar/time refinements, projects to $(y,y)$; BCE is negative, CE positive,
and year zero is rejected. Durations, sets, references, incomplete years,
centuries and decades remain counted but unsupported. Month, day, time and
season correctness are outside this projection. Exactly one returned frame
constitutes an attempt, even when it covers a proper substring of the supplied
span; zero or multiple frames abstain, without selecting a candidate using the
gold value. Covered precision is correct/attempted and coverage is
attempted/eligible. Human values may incorporate surrounding discourse, whereas
this task supplies only the span text. Consequently these scores cannot be
interpreted as complete TIMEX normalization or span-extraction accuracy.

AncientTimes contains five documents and 306 body TIMEX annotations in the
pinned release. Of 210 complete-year values, the grammar attempts 124 and
matches 123: covered precision $99.19\%$, coverage $59.05\%$, and correct
year-pair recall $58.57\%$. The remaining 96 annotations have unsupported
values/types. WikiWars contains 22 documents and 2,659 body annotations; of
2,088 complete-year values, it attempts and matches 83: covered precision
$100\%$ and coverage/recall $3.98\%$. Its 571 other annotations are unsupported.
Relative/context-resolved dates and ordinary years without the explicit era
required by the museum grammar account for substantial abstention. The one
AncientTimes mismatch is a season encoded under one gold year whose surface
expression spans two BCE years; it remains an error under this scoring rule.
Correct year pairs constitute only $40.20\%$ and $3.12\%$ of all human spans,
respectively. The high covered precision therefore does not establish broad
temporal understanding.

Second, we ran the same grammar on each complete document and scored exact
source spans against \emph{all} human body temporal annotations, including
unsupported types. AncientTimes yields 124 predicted spans, 78 exact matches,
precision $62.90\%$, recall $25.49\%$, and F1 $0.36279$. WikiWars yields 103
predictions and 49 matches, precision $47.57\%$, recall $1.84\%$, and F1
$0.03548$. These separate extraction scores expose the restriction's cost;
they include neither normalization credit nor an asserted comparison with
HeidelTime. The cryptographic contract binds the source and declared parser
output even when linguistic recall is poor; authentication supplies no missing
linguistic inference.

All spans address XML-decoded TEXT Unicode codepoints; nested elements and
tails retain their order, with no whitespace normalization. DCT/DATETIME
outside TEXT is excluded. Reconstructed offsets agree with all 306 supplied
AncientTimes offsets and all 2,640 supplied WikiWars offsets; 19 WikiWars
annotations lack that attribute. The ancillary evaluator rejects changed
archive or parser hashes and records original member and decoded-text hashes,
aggregate denominators, and unsupported types. The source ZIPs contain no
corpus license and explicit original redistribution permission was unresolved;
we therefore distribute code, acquisition checksums and aggregate receipts,
with separate acquisition required to rerun this external study. The museum
experiment remains available offline in the package.

\section{Dimensional regression fixture}
\label{sec:archaeology}
Petrie's digitized 1883 publication reports the Khufu casing-base mean as
$9068.8$ inches and reconstructed original height as $5776$ inches
(chapter VI, sections 21 and 25)~\cite{petrie1883}. Reproducible conversion using
$25.4$ mm per inch gives $230.34752$ m and $146.7104$ m. The signed fixture rounds
these to $230348$ and $146710$ mm. This convention does not calibrate an 1883
survey to modern SI. Petrie's probable errors, $0.5$ and $7$ inches, are preserved
as historical descriptions, not recast as standard deviations or guaranteed bounds.

The El-Kurru field report gives Ku. 1's plinth side as $26.65$ m and reconstructs
original height at approximately $34.5$ m~\cite{emberling2015}, p.~60. No numerical
uncertainty is supplied, so none is invented. Reported upper-course block-cut
angles of $70$--$75$ degrees describe a different scope from the ideal-envelope
angle $\arctan(2\times34.5/26.65)\approx68.88$ degrees. Courses step back. The
artifact retains that qualifier; transferring the block-angle claim to the whole
envelope is a benchmarked scope-substitution attack.

The two newly generated meshes each have five vertices and six triangular
faces. They are purposively selected fixtures, not a representative cultural
sample. A supplemental Nuri base measurement has no reported height in the
selected source and produces no mesh. We inspected digitized 3D resources from
Digital Giza and the Zamani Project; the examined Meroe model was not offered as
downloadable with a redistribution licence. No viewer geometry was extracted.
This small pyramid fixture therefore uses numerical transcriptions from digitized
publications rather than those scan datasets. The principal actual-image and
photogrammetric inputs are described in Section~\ref{sec:realdata}.
Source locators, rights notes, file hashes and retrieval
limitations accompany the cards. No copyrighted publication pages or photographs
are redistributed. This fixture's newly transcribed facts and generated envelopes
are CC0; software is MIT licensed. The SEAMS input and its derivatives retain
their separate CC BY 4.0 attribution.

This distinction follows the London Charter's emphasis on sources, documentation
and sustainable visualization~\cite{london2009}. Geometry makes evidence obligations
concrete and reconstructible; it supplies neither a universal theory of ideology
nor secret key entropy. For a deterministic key $K=f(G)$ derived solely from public
geometry $G$, $H(K\mid G)=0$ in Shannon's sense~\cite{shannon1948}. This observation
does not rule out computational hardness of other public-instance problems; it
rules out treating this publicly specified deterministic key as a secret.

\section{Reproducible computational evaluation}
\label{sec:evaluation}
All reported runs used an Apple M4 Pro, 48 GiB memory, macOS 26.3 arm64, Python
3.14.2, cryptography 46.0.4, NumPy 2.4.1, SciPy 1.17.0, Matplotlib 3.10.8 and
Pillow 12.1.0.
The fixed protocol clock is a deterministic test input, not an observation of
the execution date. No model was trained or queried by the method.

\paragraph{Capacity instances.}
The artifact evaluates 64 fixed-closure budget pairs for the two-monument scene,
eight shared-prerequisite stars, the Fano obstruction, and 60 seeded small
hypergraphs. A general-purpose integer solver and independent memoized exhaustive
enumeration agree on all 133 instances. Fractional capacities are computed by
linear programming with independently solved price duals; recovered rational
prices certify their reported upper bounds. The Fano instance has integer
capacity one and fractional bound $7/3$, the only strict floor gap in this suite.
The eight-alternative star has true capacity one and erased capacity nine.
These small examples verify implementation and deductions; they do not establish
large-instance solver complexity or historical causal regularities.

\paragraph{Adversarial protocol suite.}
Ten mutation families have 40 variants each: purpose, audience, narrative,
arrangement description, height, spatial position, uncertainty, dependency edges,
block/envelope scope and cloning. Replay, fresh-permit overdraft and transitive
source omission add three cases, for 403 unauthorized cases and one positive
control. The source-omission case is a deliberately synthetic cross-source
dependency, not an archaeological hypothesis. The verifier accepting only authenticated input cards admits all 403 attacks;
exact scene binding with closure but without spending admits two; using only
direct sources admits three. Full closure and atomic spending admit none.
All four accept the single positive control. The ablations are explicitly implemented toy verifiers,
not evaluations of C2PA or other deployed products. Variant counts indicate
constructed coverage, not independent population samples or empirical attack probabilities.

\paragraph{Independent tests and interface checks.}
The automated tests include every Merkle index for sizes 1--65, 960
selected-object views over all 64 four-node DAGs, mandatory-opening omissions,
concurrent replay and overspending, atomic rollback, malformed encodings,
key/ledger substitution, scope, expiry, source-byte modification and corrupted
audit state, plus real-source image proofs and mesh-buffer extraction. Capacity
tests independently check the dual, Hall boundary, integrality and comparative
statics. The final verification report records the complete test count.
In actual Chromium, Python and browser
fingerprints agreed; camera changes preserved the manifest, cloning increased
occurrence cost and marked the scene as an unendorsed draft, and restoration
returned the original fingerprint. The viewer checks no institutional endorsement.
Its JavaScript fingerprint is only an optional content-consistency aid.

\paragraph{Local timings.}
Eleven within-process repeats per size, in seeded shuffled order after warmup,
measure roots and single-object disclosures for 2, 8, 32, 128 and 512 occurrences.
Median root time increases from $0.483$ to $14.065$ ms; disclosure verification
from $0.284$ to $0.366$ ms. The single-object proof grows from $4649$ to $7129$
bytes, including the mandatory source/evidence closure. A two-source committed
debit takes a median $2.409$ ms. Figure~\ref{fig:timing} shows observed ranges;
CSV rows retain every measurement. Proof construction recomputes trees in this
clarity-oriented implementation and is not an optimized production benchmark.
The timings include validation and serialization, are susceptible to shared-host
load, and should not be interpreted as calibrated cross-machine performance claims.

\begin{figure}[t]
\centering\includegraphics[width=\linewidth]{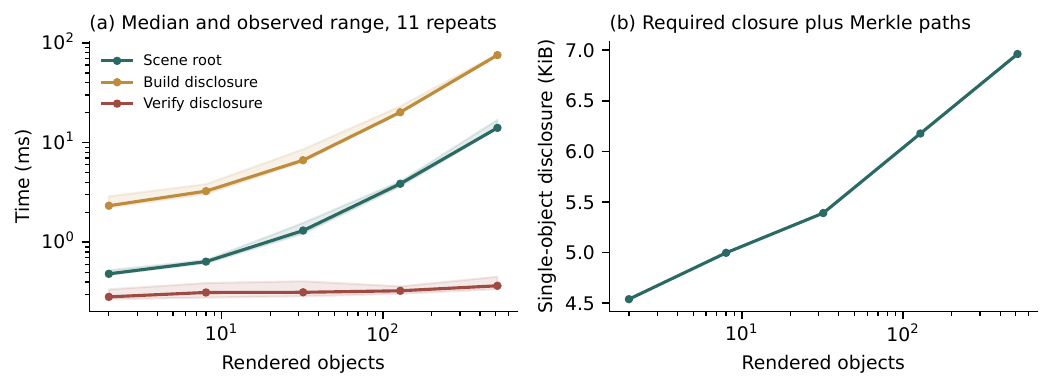}
\caption{Local artifact timings on the stated M4 Pro. Shading shows observed
minima and maxima, not confidence intervals. Proof size includes semantic closure,
so a constant number of selected objects need not imply constant-size disclosures.}
\label{fig:timing}
\end{figure}

\subsection{Revision transport at source-record scale}
The revision experiment constructs a universe of 1,168,790 uniquely addressed
source records: 3,374 original G\"obekli cells, 191,154 nonempty cells in the
original Cassidy workbook, and 974,262 fan-triangles from the twelve original
\c{C}atalh\"oy\"uk OBJ files. A triangle digest binds the original file hash,
fan index and three original vertex indices. The file digest transitively binds
the referenced coordinate bytes; this benchmark does not open each triangle's
coordinates or replace the mesh extraction verifier. We deterministically sample
10,000, 100,000 and 1,000,000 distinct records from this universe, without
replicating rows to inflate scale. These are storage/proof records, not independent
people, artifacts or archaeological observations.

Each size adds 64 explicit local review records and a fixed 256-gate policy:
two four-record conjunctions, their disjunction, and one registered-defeater
eligibility guard per query. The query meaning is presence and digest of the
admitted archived records under declared qualifications. Distinct objects are
not thereby asserted to be historically interchangeable. Query count and rule
size remain fixed while the authenticated registry grows, so this isolates
registry-scale update and proof costs; it does not benchmark inference over
one million independent historical claims.

Three separately signed revision branches change local qualifications or review
inputs while preserving every original source file. First, 100/1,000/1,000
records outside the selected frontiers receive a qualification edit. Second,
one selected-branch qualification changes for each of 16 queries, without changing
the truth bits. Third, eight registered review defeaters become active. At all
three sizes, the first case transports 64 original cuts; the second transports
48 original cuts and recovers the other 16 through old alternative cuts; the
third transports 56 and has no avoiding positive cut for the remaining eight.
Fresh current cuts independently verify the corresponding status for all 64
queries. Deleting a real changed opening from each of the nine deltas is rejected.
These are controlled mutation experiments, not observations of source authors
revising their publications.

At one million records, a full initial registry takes 20.618 s in the archived
run. The three persistent updates take 19.603, 0.722 and 0.428 ms, rehashing
2,012, 256 and 32 nodes respectively. Their complete serialized deltas contain
632,290, 31,503 and 7,696 bytes; batched verification and transport of the 64 cuts
takes 112.156, 127.708 and 127.534 ms. Creation of 64 fresh cuts from the already
materialized current state takes 2.740, 2.728 and 2.401 ms. Thus transport is
valuable as a verifiable relation between checkpoints; these observations do
not show it to be faster than asking a trusted current-state holder for fresh
cuts. Update, delta generation, public verification and fresh-cut construction
are separately timed operations. Changed IDs in the first case are selected
lexicographically outside the frontiers and can share tree paths. A single
measurement per size/branch supports functional scaling inspection, not a
confidence interval, random-update average or universal speedup.

\begin{figure}[t]
\centering\includegraphics[width=\linewidth]{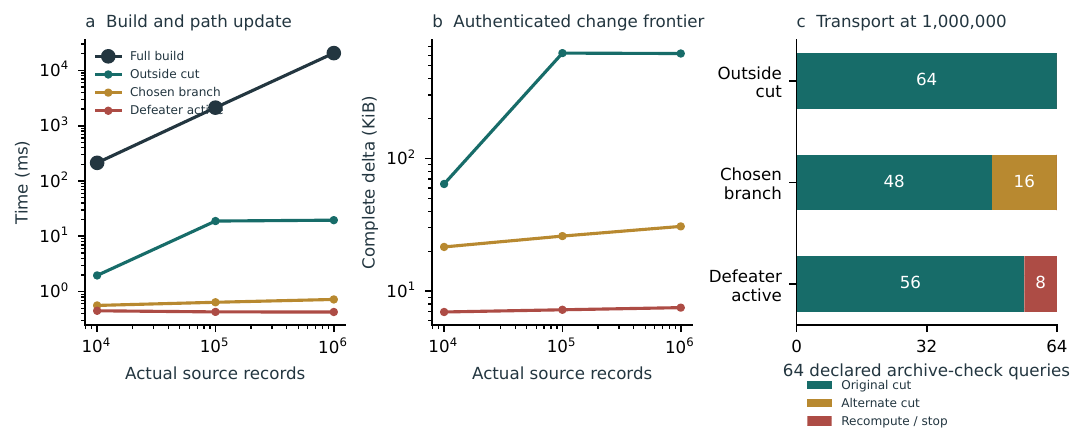}
\caption{Actual source-record registry scaling and three controlled revision
branches. Full build and incremental update are different operations. Delta
bytes include all changed leaves and unchanged subtree frontier hashes; query
count stays 64. At one million records, alternative cuts recover 16 touched
certificates, while eight active defeaters prevent positive eligibility reuse.}
\label{fig:revision-scaling}
\end{figure}

\section{A continuing research program and its limits}
The immediate archaeological priority is an independent test of the Chogha Mish
context hypothesis with matched recovery, dating and deposition. A second
prospective hypothesis asks whether local pig provisioning can lag a crop/diet
transition; unresolved Chinese site/phase matches and local ecological baselines
currently prevent its test. These questions are not inferred from a verifier's
rejection rate. Qualified new observations can revise source admission and the
comparison policy without overwriting the earlier evidence.

The mathematical model opens several concrete questions: minimum-cost avoiding
cuts under shared subgraphs, exact or approximate transversal computation for
cut families, compact representation
of exponentially many witness alternatives; exact algorithms for restricted
incidence families; stronger integral certificates; and scene configurations
with reconstruction-school compatibility. Context can alter admissibility and
budgets, but a learned or discretionary context classifier would introduce a
new trusted boundary requiring separate evaluation. Revocation, key rotation,
multi-ledger consistency and independent checkpoint witnesses are also future
work rather than properties of the local prototype.

The human hypothesis remains untested. The accompanying proposed protocol uses
a $2\times2\times2$ design varying contextual framing, qualifications and an
enforced export checker. Its primary belief outcome is erroneous attribution of
institutional endorsement; successful export is a separate computational endpoint.
Neither is inferred from rejecting a forged packet. The plan includes source
recall, randomized stimuli, mixed-effects logistic analysis and pilot-based power
simulation, without fabricating a sample size or participant results. Follow-ups
at days 1, 7, 30 and 90 occur after immediate debriefing and therefore measure
retention after correction, not uncorrected natural persistence. No participants
have been recruited and no ethics approval is claimed.

Source cards, witness policies, renderer versions, signing identities and
preregistered analysis decisions should form separate versioned registries.
Archaeological and community expertise is needed to review qualifications and
interpretive equivalence. A stricter verifier can preserve declared distinctions;
it cannot determine ancient meaning or force a human to believe a caption.
The derived results locate two distinct losses: erasing conjunctions can
create unbounded apparent endorsement capacity, and deleting a shared guard can
unboundedly inflate the certified-reuse threshold. Retaining occurrence,
polarity and full-record frontier structure permits explicit feasibility and
revision checks under the declared policy.

\paragraph{Artifact and AI assistance.}
The \texttt{anc/} directory provides original licensed photographs, photogrammetry, losslessly wrapped artifact
models and unchanged published workbooks, source cards, protocol/optimizer code
and tests. Its offline runner regenerates raw results, plots and the image-scene,
envelope and linked early-evidence workbenches. It has no runtime
network requirement, trained weights or embedded PDF JavaScript. Generative AI
assisted literature discovery, mathematical exploration, drafting, implementation
and testing during preparation. This assistance is separate from the proposed
non-learned experimental method. The named human authors are responsible for
reviewing the arguments, sources, code and final submission; no independent
cryptographic audit or peer review is claimed. Data and code are packaged as arXiv ancillary files
with TeX sources~\cite{arxivanc}.

\bibliographystyle{plainnat}
\bibliography{references}
\appendix
\section{Reproduction and exact witness certificates}
From \texttt{anc/}, use Python 3.11 or later in a virtual environment. To
reproduce all image, geometry, source-table, revision, capacity, attack and plotting stages, run:
\begin{lstlisting}
python3 -m pip install -r requirements-lock.txt
python3 tools/reproduce.py --output ../scene-full-new --plots
\end{lstlisting}
Choose a fresh output directory: the runner refuses to overwrite prior results
or an existing ledger. It first restores and verifies exact source members, then records every stage
with plotting, raw measurements, structured
reports and regenerated figures. Individual commands are documented in the README.
Its deterministic test keys are intentionally public, and its fixed grants expire
relative to the fixed test clock. They must never represent real institutional
authority. \texttt{tools/reproduce.py} coordinates an isolated full run; dependency
locks record the measured environment. The workbench can export a draft JSON
scene for structural/source checking, but obtaining real signatures requires
the relevant issuer's informed authorization.

\paragraph{Checking the fractional certificate.}
Given rational $y_s\geq0$, independently compute every bundle price, each minimum
$p_{o,C}(y)$, and $D_C(y)$. If $D_C(y)>0$, equation~\eqref{eq:bound} certifies
$K_C\leq\lfloor Q\cdot y/D_C(y)\rfloor$ using exact arithmetic. This does not
certify an integral allocation exists. In the Fano fixture $y_s=1/3$ gives bound
$7/3$; pairwise intersections provide the additional exact upper bound of one.
The artifact retains all generated inputs and integer allocations so solver
claims can be checked against the independently enumerated finite problem.

\paragraph{Completeness of disclosure.}
For each selected occurrence, the verifier opens every referenced root evidence
record, follows all support IDs in the opened authenticated adjacency lists and
recursively demands those records. An induction on DAG depth establishes that
every reachable record is opened. Opening their source IDs then covers precisely
the charged sources. This argument establishes closure of the admitted graph,
not completeness of the archaeological record or an absent-edge claim about
the world. A source-list summary lacks the authenticated connectives needed to
state that induction.

\paragraph{Inspecting earlier evidence and revisions.}
The source XML readers keep original sheet/cell addresses and numerical lexemes.
\path|scenecrypt.gobekli_data|, \path|scenecrypt.social_analysis| and
\path|scenecrypt.early_holocene_analysis| rebuild the new source summaries.
\path|scenecrypt.revision_experiments| independently rebuilds the 10,000,
100,000 and 1,000,000-record revision branches. A fresh-run atlas is generated
from those run outputs, with original datasets read from the immutable input
folders. The published record universe excludes the extra skull variant and
uses only the twelve OBJ sources in the scaling experiment. Source snapshots,
retrieval programs, licenses, unchanged-workbook hashes and decompressed-mesh
hashes permit byte-level rechecking independently of the paper's plots.

\paragraph{Regional source comparisons.}
The frozen regional release manifest verifies original exports, workbooks,
licensed source snapshots and extracts before analysis. New stages regenerate
Chogha Mish context/anatomical counts, the strict NISP cohort, exact scope
certificates for both typed registries, shared-depth rounding witnesses, the
independent partial-transfer audit, all thirteen original-OBJ codec comparisons
and the regional companion. The complete source ZIP is strictly below the
user-selected 50,000,000-byte cap. All result/display caches are regenerated in
the external output directory; no source download is required.
\path|REGIONAL_RESEARCH_TR.md| gives a Turkish research brief and explicit future
falsification conditions. The \path|scope_contrasts| API binds an externally
admitted question; its policy descriptions are not an automatic historical
classification language. Original-source extraction and the current root remain
independent verification inputs. The packaged public keys/seeds are experimental
fixtures, and no runtime source download or trained model is required.

\paragraph{Language and external transfer.}
The default runner also executes the SQLite contract study and the complete
bundled museum-language experiment with its source-integrity tests. The optional
external temporal audit was executed for this release but requires two
separately acquired, hash-pinned corpora: provide both
\path|--ancient-times-zip| and \path|--wikiwars-zip|. Their narrative XML/labels
are not redistributed while explicit original redistribution permission remains
unresolved. The external evaluator and aggregate receipts are included; the
core pipeline remains independently reproducible offline after installing its
software/runtime dependencies. Exact JPEG reconstruction additionally requires
the trusted POSIX libjxl0.11.2 tools.

\end{document}